\definecolor{Darkblue}{rgb}{0,0,0.4}
\definecolor{Brown}{cmyk}{0,0.81,1.,0.60}
\definecolor{Purple}{cmyk}{0.45,0.86,0,0}
\newcommand{\mydriver}{hypertex}
 \renewcommand{\mydriver}{pdftex}
\newcommand{\lref}[2][]{\hyperref[#2]{#1~\ref*{#2}}}
\newtheorem{theorem}{Theorem}[section]
\newtheorem{conjecture}[theorem]{Conjecture}
\newtheorem{lemma}[theorem]{Lemma}
\newtheorem{cl}[theorem]{Claim}
\numberwithin{algorithm}{section}
\newenvironment{proof}{

\noindent{\bf Proof:}} {\hfill$\blacksquare$

}
\newcommand{\junk}[1]{}
\newcommand{\ignore}[1]{}
\newcommand{\D}[0]{{\ensuremath{\mathcal{D}}}\xspace}
\newcommand{\I}[0]{{\ensuremath{\mathcal{I}}}\xspace}
\newcommand{\J}[0]{{\ensuremath{\mathcal{J}}}\xspace}
\newcommand{\oq}[0]{{\ensuremath{\overline{q}}}\xspace}
\newcommand{\Ex}{\mathbb{E}}
\newcommand{\OPT}{\ensuremath{{\sf opt}}\xspace}
\newcommand{\svrp}{\ensuremath{{\sf StochVRP}}\xspace}
\newcommand{\obj}{\ensuremath{{\sf obj}}\xspace}
\newcommand{\sobj}{\ensuremath{\widehat{\sf obj}}\xspace}
\newcommand{\aug}{\ensuremath{{\sf Aug}}\xspace}
\newcommand{\pair}[2]{{\ensuremath{\langle #1, #2\rangle}}\xspace}
\newcommand{\mst}{\ensuremath{{\sf MST}}\xspace}
\newcommand{\flow}{\ensuremath{{\sf Flow}}\xspace}
\newcommand{\lp}{\ensuremath{{\sf LP}}\xspace}
\newcommand{\kro}{\ensuremath{{\sf KnapRankOrient}}\xspace}
\newcommand{\sop}{\ensuremath{{\sf SOP}}\xspace}
\newcommand{\llog}{\ensuremath{\log\!\log}}
\newcommand{\sse}{\subseteq}
\newcommand{\initOneLiners}{%
    \setlength{\itemsep}{0pt}
    \setlength{\parsep }{0pt}
    \setlength{\topsep }{0pt}
}
\newenvironment{OneLiners}[1][\ensuremath{\bullet}]
    {\begin{list}
        {#1}
        {\initOneLiners}}
    {\end{list}}
\begin{document}

\title{Stochastic Vehicle Routing with Recourse}
\author{Inge Li G{\o}rtz\thanks{Technical University of Denmark, DTU Informatics.}
\and Viswanath Nagarajan \thanks{IBM T.J. Watson Research Center.}  \and Rishi Saket\footnotemark[2]
 }

\date{}
\maketitle
\begin{abstract}
We study the  classic {\em Vehicle Routing Problem} in the setting of stochastic optimization with recourse. \svrp is a
two-stage optimization problem, where demand is satisfied using two routes: fixed and recourse. The fixed route is
computed using only a demand distribution. Then after observing the demand instantiations, a recourse route is computed
-- but costs here become more expensive by a factor $\lambda$.

We present an $O(\log^{2}n \cdot \log(n\lambda))$-approximation algorithm for this stochastic routing problem, under
arbitrary distributions. The main idea in this result is relating \svrp to
 a special case of {\em submodular
orienteering}, called {\em knapsack rank-function orienteering}. We also give a better approximation ratio for {\em knapsack rank-function orienteering} than what follows from prior work. Finally, we provide a Unique Games Conjecture based $\omega(1)$ hardness of
approximation for \svrp, even on star-like metrics on which our algorithm achieves a logarithmic approximation.
\end{abstract}

\section{Introduction} Consider a distribution problem involving a depot location and a set
of customer locations. There is a vehicle of capacity $Q$ that is used to distribute items. The demand at customer
locations is random with a known (joint) distribution \D. The distributor wants to plan a {\em fixed route} for this
capacitated vehicle, that will be employed on a daily basis. However due to the stochastic nature of demands, the fixed
route might be insufficient to meet all demands. Therefore the distributor also plans a secondary {\em recourse
strategy}, that satisfies all unmet demands after the fixed route. Each morning the distributor receives the precise
demand quantities from all customers (drawn from \D). Based on this he/she decides which subset of customers will be
satisfied along the fixed route, and then plans a recourse route to satisfy the remaining customers. The goal is to
minimize the cost of the fixed route plus the expected cost of the recourse route. Examples of real-world applications
are local deposit collection from bank branches, garbage collection, home heating oil delivery, and forklift
routing~\cite{AE07,B92}.

A solution based on fixed routes is desirable for several reasons, and is commonly used in practice;
see~\cite{SG95,ESU09} for more detailed discussions on this. In our context, there are at least two advantages. First,
the driver can get familiar with the road/traffic conditions which results in time savings. Moreover, having fixed
routes simplifies the everyday route planning process: the incremental recourse step will typically contain fewer
demands.

Fixed-route problems are often modeled in the framework of two-stage stochastic optimization. {\em A priori}
optimization handles some natural but simple recourse strategies: eg., short-cutting over customers without demand in
TSP~\cite{BJO90,ST08}, and refill-visits from the depot in the Vehicle Routing Problem (VRP)~\cite{B92,GNR12}.
Recently, more complex recourse actions have been considered: adding penalty terms in deadline TSP~\cite{CT08}, and
using backup vehicles in VRP~\cite{AE07}.

In this paper, we penalize the cost of the recourse route by an inflation factor $\lambda\ge 1$. This is also a common
approach for two-stage stochastic optimization with recourse. Furthermore, in the stochastic VRP we consider, recourse
strategies are non-trivial since it also involves choosing the subset of realized demands served by the fixed route. In
this respect it is unlike most previously studied 2-stage stochastic problems (eg.~\cite{RS06,SS06,GPRS11}) where the
recourse step is just a deterministic instance of the same problem. Before describing the results of this paper, we
define the deterministic and two-stage stochastic VRP below.
\medskip
\noindent
{\bf Vehicle Routing Problem (VRP).} There is a vehicle of capacity $Q$, metric $(V,d)$ with root/depot $r\in V$ and
demands $\{q_v\le Q\}_{v\in V}$. The goal is to find a minimum cost tour of the vehicle that delivers $q_v$ units to
each $v\in V$. The demands are ``unsplittable'', i.e. the demand at any vertex must be satisfied in a single visit.
Any VRP solution corresponds to a sequence of round-trips from the depot, where at most $Q$ units of
demands are served during each round-trip. It is
well-known~\cite{AG87} that an $\alpha$-approximation ratio for TSP implies an $(\alpha+2)$-approximation algorithm for
VRP.

\medskip
\noindent
{\bf Two-stage Stochastic VRP (\svrp).} The setting is same as above, with a capacity $Q$ vehicle, metric $(V,d)$ and
depot $r\in V$. Here the demands $\{q_v\}_{v\in V}$ are random variables given by a joint demand distribution \D on
$\{0,1,\ldots,Q\}^V$, available as a black-box that can be sampled from. We are also given an inflation parameter
$\lambda\ge 1$. The goal is to compute a fixed route solution with a recourse strategy.
\begin{itemize}
\item In the first stage the algorithm computes a {\em fixed tour} $\tau$, without knowledge of the actual demand.
The tour $\tau$ consists of several round-trips from the depot: each round-trip is a cycle containing $r$ (henceforth
called $r$-tour). We represent $\tau$ as a concatenation $\{\tau_1,\ldots,\tau_F\}$ of $r$-tours. It is important to
note that $\tau$ only represents the vehicle route, and does not specify demand deliveries (this will be decided
after demand instantiations). In particular, a vertex $v$ may appear in multiple $r$-tours of $\tau$; and even if $v$
appears in $\tau$ the instantiated demand at $v$ may not eventually be satisfied by $\tau$.
\item In the second stage, the demands \oq are instantiated from \D. Knowing this, an algorithm chooses to satisfy subset $\oq_A \sse \oq$ of demands
using the fixed tour $\tau$, subject to the vehicle capacity of $Q$.
That is, for each  $r$-tour $\{\tau_i\}_{i=1}^F$ the algorithm chooses a subset $S_i\sse \tau_i$ of vertices to serve,
where $\sum_{v\in S_i}\oq_v\le Q$; and sets $\oq_A \equiv \{\oq_v : v\in \cup_{i=1}^F S_i\}$. Then the algorithm
computes a {\em recourse tour} $\sigma$
meeting all residual demands
$\oq_B=\oq\setminus \oq_A$. That is, $\sigma$ is a solution to the deterministic VRP instance with demands $\{\oq_v : v
\in V\setminus \cup_{i=1}^F S_i\}$.
\end{itemize}

Note that the demands $\oq_A$ satisfied by the fixed tour $\tau$ differs based on the instantiation \oq; however the
route taken by the vehicle stays fixed. So the first stage cost is just the length $d(\tau)$ of the fixed tour. The
recourse tour $\sigma$ clearly depends on the demand instantiation.
The second stage cost under demand \oq is $\lambda\cdot d(\sigma(\oq))$, the length of the recourse tour inflated by a
parameter $\lambda$. The objective in \svrp is to minimize the expected total cost:
$$d(\tau)\quad + \quad \lambda\,\cdot \,\Ex_{\oq\gets \D}\left[ \,d(\sigma(\oq)) \,\right]$$
For any integer $I\ge 0$, we let $[I]:=\{1,\ldots,I\}$. For a given \svrp instance, \OPT will denote its optimal value.
We let $n=|V|$ denote the number of vertices in the metric and $D=\max_{u,v}\, d(u,v)$ the diameter of the metric.

\medskip
\noindent
{\bf Our Results, Techniques and Outline.} In this paper we show:
\begin{theorem} \label{thm:algo}
There is a randomized $O(\log^{2}n \cdot \log(n\lambda))$-approximation algorithm for \svrp under arbitrary
distributions.
\end{theorem}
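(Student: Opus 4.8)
The strategy is to show that \svrp reduces, with only a polylogarithmic loss, to (repeatedly) solving \kro, and then to invoke the improved approximation for \kro developed later in the paper. First, by geometric search we may assume a value $B$ with $\OPT\le B< 2\,\OPT$ is known. Standard truncation then lets us assume all pairwise distances, and all per-scenario recourse costs that matter, lie in a range of width $\mathrm{poly}(n,\lambda)$: a vertex with $d(r,v)$ far larger than $B$ has demand $0$ under \D except with negligible probability (else $\OPT$ would be huge), and recourse costs below $B/(n\lambda)$ are negligible once inflated by $\lambda$. Hence there are only $O(\log(n\lambda))$ relevant ``distance scales'' $d(r,\cdot)\approx 2^\ell$. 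It suffices to output a fixed tour $\tau$ of length $O(\mathrm{polylog})\cdot B$; the canonical second-stage strategy (for each $r$-tour, greedily serve a capacity-feasible subset) then has expected recourse length $O(B/\lambda)$, so $d(\tau)+\lambda\cdot\Ex_{\oq\gets\D}[d(\sigma(\oq))]=O(\mathrm{polylog})\cdot B$.

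\emph{A submodular coverage function.} Fix a distance scale $\ell$. Restricting attention to demand ``at that scale'' and using the elementary bound that the recourse VRP cost there is, up to constants, $2^\ell$ times the number of capacity round-trips needed for the residual demand (together with a spanning-tree term), the benefit of a collection $W$ of $r$-tours can be written as $\Ex_{\oq\gets\D}\big[\,g_{\oq}(\text{vertices visited by }W)\,\big]$, where each $g_{\oq}$ is (a scaled) \emph{knapsack rank function} --- the most demand one can pack into the visited vertices under the capacity-$Q$ constraint, which is monotone submodular. An expectation of submodular functions is submodular, so the subproblem ``find an $r$-tour of bounded length maximizing the coverage'' is exactly an instance of \kro.

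\emph{Covering via \kro.} For each of the $O(\log(n\lambda))$ scales $\ell$ and each of the $O(\log(n\lambda))$ length budgets $b=2^j$, run a greedy covering loop: call the \kro approximation to obtain an $r$-tour of length $O(b)$ capturing a constant fraction of the residual coverage achievable by $\OPT$'s length-$\le b$ $r$-tours; by submodularity this halts after $O(\log n)$ rounds, via a set-cover-type argument on the discretized coverage. Concatenating all $r$-tours produced gives $\tau$, with $d(\tau)$ equal to $O(\log n)\cdot O(\log(n\lambda))\cdot B$ times the \kro ratio, and the demand left uncovered at every scale has expected recourse length $O(B/\lambda)$. Plugging in the $O(\log^2 n)$ bound the paper proves for \kro (obtained by embedding the metric into a tree at a loss of $O(\log n)$ and solving knapsack-rank orienteering on the tree by an LP-rounding / dynamic-programming argument in the spirit of Group Steiner Tree) yields the claimed $O(\log^2 n\cdot\log(n\lambda))$ overall.

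\emph{Main obstacle.} The crux is the submodular-coverage step: one must show that the genuinely coupled second-stage decision --- choosing, per scenario, a capacity-feasible subset of \emph{each} $r$-tour, with the global proviso that a vertex need be served only once, and paying a VRP (not TSP) recourse cost --- is captured, up to $O(\mathrm{polylog})$, by an expectation of knapsack rank functions, so that the greedy/set-cover machinery and the \kro oracle apply. A related technical point is the scale decomposition of recourse VRP cost: one needs lower and upper bounds matching up to constants, of the form $\sum_v \lceil \oq_v/Q\rceil\, d(r,v)$ plus a spanning-tree term, so that working scale by scale costs only constant, rather than logarithmic, factors.
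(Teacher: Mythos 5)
Your high-level architecture — reduce to a knapsack-rank-function coverage subproblem, solve it with the $O(\log^2 n)$ ratio-\kro oracle, and run greedy set-cover on top — does match the paper in spirit. But as written the proposal has two gaps that the paper devotes substantial machinery to, plus an accounting problem.

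\emph{Gap 1: you never discretize the black-box distribution.} You define the coverage of $W$ as $\Ex_{\oq\gets\D}[g_{\oq}(W)]$ and then hand this to a \kro/set-cover loop. But \D is only accessible by sampling; this expectation is not a finite object you can feed to an LP or a greedy oracle. The paper first reduces (Theorem~\ref{thm:sampling}, following~\cite{CCP05}) to an instance with $m=\mathrm{poly}(n,\lambda)$ explicit scenarios, and has to prove that $m$ this small suffices --- which requires bounding $|X|$ (Claims~\ref{cl:number-base-tour},~\ref{cl:recourse-edges}) and the diameter $D$ (Lemma~\ref{lem:diameter}). Without that reduction, your ``expectation of knapsack rank functions'' is not computable, and the whole greedy-coverage argument has nothing to act on.

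\emph{Gap 2: specifying a recourse action for unseen outcomes.} You dismiss the second stage as ``for each $r$-tour, greedily serve a capacity-feasible subset.'' The paper explicitly flags this as the non-trivial part of the black-box case: having found a fixed tour $\tau$, one must assign each realized demand to \emph{one} of the $r$-tours of $\tau$ it lies on, subject to capacity, and route the leftovers --- this is the ``outlier VRP'' of Section~\ref{subsec:recourse}, which contains restricted assignment~\cite{LST90} as a special case and hence admits no true approximation. A naive per-tour greedy does not give a constant-factor (even bicriteria) guarantee on how many demands are left to the recourse tour. The paper needs the LP with constraints~\eqref{eq:lp-capacity}--\eqref{eq:lp-cut} and the $(O(1),5)$-bicriteria rounding of Theorem~\ref{thm:recourse} precisely because the assignment step is coupled across $r$-tours.

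Two smaller issues: (i) Your coverage $g_{\oq}$ packs \emph{total demand}, whereas the paper's $f_i$ counts \emph{elements} $\pair{i}{v}$ covered; the cardinality objective is what aligns with the set-cover reduction, and the polymatroid approximation in Section~\ref{sec:explicit} is stated for it. (ii) Your distance-scale $\times$ budget $\times$ $O(\log n)$-rounds decomposition multiplies $O(\log^2(n\lambda)\cdot\log n)$ on top of the $O(\log^2 n)$ \kro ratio, exceeding the target $O(\log^2 n\cdot\log(n\lambda))$. The paper avoids the budget enumeration and the extra round-count by solving the \emph{ratio} version of \kro directly and paying a single $\log|U|=O(\log(nm))$ for the set-cover greedy; your scale decomposition is not present in (nor needed by) the paper's argument.
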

Using a sampling-based reduction~\cite{CCP05} we show (in Subsection~\ref{app:redn-explicit}) that the objective {\em
value} under any black-box distribution can be well-approximated by another demand distribution having support size
$m=poly(n,\lambda)$.

Then, in Section~\ref{sec:explicit} we present an $O(\log^{2}n\cdot \log(nm))$-approximation algorithm for \svrp where
$m$ is the support size of the distribution. This is a set-cover type algorithm that uses the {\em submodular
orienteering problem}~\cite{CP05,CZ05} as a subroutine. In the submodular orienteering problem there is a metric
$(V,d)$ with root $r$, length bound $B$ and monotone submodular function $f:2^V\rightarrow \mathbb{R}_+$; and the goal
is to find an $r$-tour of length at most $B$ visiting some subset $S\sse V$ of vertices so as to maximize $f(S)$.
Direct use of algorithms from~\cite{CEK06,CZ05} yields an approximation ratio worse than Theorem~\ref{thm:algo} by a
factor of $\log^\epsilon n$. Instead we give a better result for submodular orienteering on objective functions of the
type encountered in \svrp, called {\em knapsack rank-function orienteering} (\kro). In particular, we consider the {\em
ratio \kro} problem where instead of the length-bound, the objective is to maximize the ratio of function value to the
length.
\begin{theorem}\label{thm:kro}
There is a deterministic $O(\log^2 n)$-approximation algorithm for ratio knapsack rank-function orienteering.
\end{theorem}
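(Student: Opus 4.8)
The plan is to peel off the ``ratio'' feature by guessing, and then exploit the combinatorial structure of a knapsack rank function to reduce to a standard orienteering subroutine. For the first step, note that if $S^{\star}$ is the vertex set visited by a density‑optimal $r$‑tour of length $L^{\star}$, then guessing a value $B$ with $L^{\star}\le B<2L^{\star}$ (it suffices to try the $O(\log(nD))$ powers of two between the smallest nonzero distance and $2nD$) reduces ratio \kro, up to a factor $2$, to the \emph{budget} version: given a length bound $B$, output an $r$‑tour of length $O(B)$ visiting a set $S$ with $f(S)=\Omega\!\left(\frac{1}{\log^{2} n}\right)\cdot \max\{f(S'): S'\text{ lies on an }r\text{-tour of length}\le B\}$. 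Since $S^{\star}$ is feasible for this budget, the density of the returned tour is then within $O(\log^{2}n)$ of optimal.

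Next I would decompose the knapsack rank function $f$ by the sizes of the ground elements it ranks. Bucket these elements into one ``tiny'' class, whose elements all fit simultaneously into the knapsack (so that $f$ restricted to it is plain additive), together with $O(\log n)$ geometric size classes, the $i$‑th consisting of elements whose size lies in $(C/2^{i+1},\,C/2^{i}]$, where $C$ is the capacity. Since $f(S)\le\sum_{c} f_{c}(S)$ for every $S$, where $f_{c}$ is the knapsack rank function of class $c$, the optimal budget‑$B$ set already earns a $\frac{1}{O(\log n)}$ fraction of the optimum from a \emph{single} class; it therefore suffices, losing only $O(\log n)$, to solve the budget problem for one class at a time. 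Within a geometric class all element sizes agree up to a factor $2$, so after halving the capacity that function is the weighted rank function of a uniform matroid of some rank $k$ --- the ``sum of the $k$ largest visited weights'' --- and the tiny class is simply the case of effectively unbounded rank.

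It remains to solve, within the $B$‑budget, orienteering against ``sum of the $k$ largest visited weights''. Here I would guess the optimal value $\phi$ and the rank $k$: an element of weight exceeding $\phi/2$ is handled directly, since a two‑hop $r$‑tour to the nearest reachable such element already collects $\Omega(\phi)$; elements of weight below $\phi/(4k)$ contribute at most $\phi/4$ in aggregate and are discarded; on the surviving band $[\phi/(4k),\phi/2]$ one invokes a plain weighted orienteering / minimum‑cost tree‑cover routine, and because the retained weights span only a factor $2k$, truncating its output to its top $k$ elements costs a mere constant. This brings the per‑class problem down to the submodular‑orienteering machinery of~\cite{CEK06,CZ05}, but invoked on a weighted (coverage‑type) objective rather than a general monotone submodular one, which is exactly what lets us shave the extra $\log^{\epsilon}n$ overhead of the generic algorithms and land at $O(\log^{2}n)$ overall. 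I expect the main obstacle to be the middle step: the shared capacity $C$ couples the size classes, so one cannot concatenate per‑class tours, and care is needed both to argue that restricting attention to a single class loses only a logarithmic factor and to certify that the resulting uniform‑rank subproblem reduces cleanly to a standard orienteering variant.
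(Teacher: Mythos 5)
Your proposal treats the objective as a \emph{single} knapsack rank function $f$, but ratio \kro is defined with an objective that is a \emph{sum of $m$} knapsack rank functions $\sum_{i=1}^m f_i$, each with its own size vector $c^i$ and profit vector $w^i$. This is not cosmetic: for a single knapsack the problem is exactly (ratio) knapsack orienteering, for which a \emph{constant}-factor approximation is already known~\cite{GKNR12} (and indeed the paper uses that constant-factor routine for the ``second-stage'' subproblem). The reason $O(\log^2 n)$ machinery is needed at all is precisely the coupling across $m$ knapsacks: setting $w^i_v = c^i_v = \mathbf{1}[v\in G_i]$ recovers density group Steiner tree, so ratio \kro inherits its polylogarithmic hardness, and any attempt to reduce to ``plain weighted orienteering'' on a single size band would contradict that. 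Once the $m$ knapsacks are reinstated, your per-class reduction no longer yields a single uniform-matroid rank: each knapsack $i$ has its own effective rank $k_i$ in the chosen class, so there is no single $k$ to guess, and the cleaning steps (discarding elements below $\phi/(4k)$, handling a heavy element directly) do not make sense simultaneously for all $i$.

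Even in the single-knapsack case, the ``truncate its output to its top $k$ elements costs a mere constant'' step is unsound. An orienteering subroutine maximizes \emph{total} weight subject to the length budget, so it may return $k^2$ elements at the bottom of the band $[\phi/(4k),\phi/2]$, each of weight $\Theta(\phi/k)$; this has total weight $\Theta(k\phi)$, larger than the true knapsack optimum of $\Theta(\phi)$, yet after truncating to the top $k$ you retain only $\Theta(\phi/k)\cdot k=\Theta(\phi)$ --- wait, that case is fine, but take instead $k$ elements of weight $\phi/2$ as the knapsack optimum while the orienteering routine, being indifferent to how weight is distributed, returns a tour of equal total weight made of $2k$ elements of weight $\phi/4$; truncation to $k$ then gives $k\phi/4$, which is fine, but the subroutine has no reason to prefer total weight comparable to the knapsack optimum in the first place, since the orienteering optimum may be a factor $k$ larger. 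The truncation argument would need a comparison to the knapsack-constrained optimum, which orienteering does not provide.

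The paper's route is genuinely different and unavoidable here: it embeds the metric into an $O(\log n)$-depth tree, writes an LP relaxation for \kro strengthened with subtree-size constraints (a covering-Steiner-style constraint~\eqref{LP:kro4}), applies GKR dependent rounding, and --- the new technical idea --- introduces an \emph{alteration} step in the analysis (Lemma~\ref{lem:kro-main}) to lower-bound the expected profit, since the standard ``large-deviation'' analyses of GKR rounding fail when the fractional profit $\mu$ is $o(1)$. It then derandomizes via pessimistic estimators in the style of Charikar et al.\ to convert separate expectation bounds on profit and length into a single deterministic tour with good profit/length ratio. Your proposal avoids the group Steiner/LP machinery entirely, and because of that it cannot reach the correct problem.
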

The main idea here is to use LP rounding techniques for the related group Steiner problem~\cite{GKR00,KRS02}, augmented
with an {\em alteration} step (for the analysis). While alteration has been widely used with LP-rounding,
eg.~\cite{Srinivasan01}, we are not aware of an application in context of the group Steiner tree problem. This step
only bounds the function-value and length in expectation (separately). In order to bound their ratio, we adapt the
group Steiner derandomization from Charikar et al.~\cite{CCGG98} to our context. We defer further discussion and
details on \kro to Section~\ref{app:kro}.

Combined with the sampling-based reduction this suffices to approximate the objective {\em value} of \svrp under
black-box distributions, satisfying the guarantee in Theorem~\ref{thm:algo}. However, more work is required in order to
provide an approximate {\em solution}. This is because the recourse step in \svrp is quite non-trivial, and a solution
must specify an algorithm to construct the recourse tour for {\em any} possible demand (not merely the $m$ sampled
points). It turns out that the recourse step corresponds to solving an ``outlier'' version of VRP. Although this
problem does not admit any true approximation ratio (by a relation to generalized assignment~\cite{LST90}), in
Section~\ref{sec:black-box} we give an LP-based $O(1)$ {\em bicriteria} approximation: this suffices for
Theorem~\ref{thm:algo}.

Our second main result is a UGC-based hardness of approximation:
\begin{theorem} \label{thm:hard}
Assuming the Unique Games Conjecture, it is NP-hard to approximate \svrp to within a constant factor, even on star-like
metrics.
\end{theorem}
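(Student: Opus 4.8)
The plan is to give a gap-preserving reduction from a Unique-Games-hard problem whose inapproximability is already super-constant. The cleanest candidate is vertex cover in $k$-uniform hypergraphs (Ek-VC): assuming the UGC, for every $\epsilon>0$ and every fixed $k$ it is NP-hard to distinguish $k$-uniform hypergraphs $H=(U,E)$ on $N$ vertices whose minimum vertex cover has size at most $(\tfrac1k+\epsilon)N$ from those in which \emph{every} vertex cover has size at least $(1-\epsilon)N$ (Bansal--Khot). Since $k$ may be taken to be an arbitrarily large constant, it suffices to produce a reduction that converts this $\Theta(k)$ gap into a $\Theta(k)$ gap in the value of \svrp up to a universal multiplicative loss; taking $k\to\infty$ then yields the claimed $\omega(1)$ hardness, and we will arrange that the instance lies in the star-like regime covered by our $O(\log n)$ algorithm.

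Given $H$, I would build a star-like metric with a root $r$, a single ``hub'' $h$ at a large distance $W\gg 1$ from $r$, and a leaf $\ell_u$ at distance $1$ from $h$ for each $u\in U$; thus every $r$-tour that serves any customer pays a fixed overhead $\approx 2W$, so the cost of a tour is governed by how many round trips it makes. The demand distribution \D is (essentially) uniform over the hyperedges: in the realization indexed by $e\in E$ there is demand only at $\{\ell_u : u\in e\}$, with the per-leaf demand set relative to the capacity $Q$ so that a single $r$-tour can serve only a bounded amount of any one realization — this is the point at which capacity, rather than geometry, carries the argument. The inflation $\lambda$ is tuned to a moderate regime (a function of $N$ and $k$) so that recourse is neither so cheap that the empty fixed tour is near-optimal nor so expensive that serving everything up front is forced. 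The intended correspondence is that a vertex cover $\mathcal{C}$ of $H$ yields a fixed tour whose round trips are indexed by (a few copies of) $\{\ell_u : u\in\mathcal{C}\}$, which serves enough of every realization that the residual recourse is cheap in expectation, for a total cost $O(|\mathcal{C}|\cdot W)$ plus lower-order terms; conversely, a cheap \svrp solution should be ``decodable'' into a vertex cover of comparable size.

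The completeness direction (small vertex cover $\Rightarrow$ cheap \svrp solution) is a direct construction once the parameters are pinned down. The crux, and the step I expect to be the main obstacle, is soundness: showing that if $H$ has no vertex cover smaller than $(1-\epsilon)N$ then \emph{every} \svrp solution costs $\Omega(N\cdot W)$. Here one must rule out hybrid ``cheating'' strategies that simultaneously exploit the three freedoms of \svrp: (i) the fixed tour may split into arbitrarily many round trips with vertices repeated across them; (ii) after each realization the solver chooses which realized demands to push onto recourse (the ``outlier'' feature of the second stage); and (iii) the recourse tour need only be a VRP solution, unconstrained by $\tau$. I would fix an arbitrary fixed tour $\tau$ and, for each realization $e$, charge the cost of handling $e$ either to a round trip of $\tau$ that ``hits'' a vertex of $e$ or to a $2\lambda W$ recourse overhead; averaging over $e\gets\D$ and using that a cheap $\tau$ hits only few distinct leaves while $H$ has no small cover forces the total charge to be $\Omega(NW)$. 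Making this charging tight enough to lose only a constant factor — so that the $(k-\epsilon)$ Ek-VC gap survives as a $\Theta(k)$ \svrp gap — is the delicate part, and may require an amplification step (a tensor/product power of the instance) together with a careful calibration of $W$ against $N$, $k$, and $\lambda$. Finally, since the metric is just a hub with pendant leaves and a pendant edge to $r$, it is star-like in the sense of Theorem~\ref{thm:algo}'s improved guarantee, so the same family of instances witnesses both the $O(\log n)$ upper bound and the $\omega(1)$ lower bound.
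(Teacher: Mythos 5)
Your reduction starts from the same Bansal--Khot theorem, uses the same scenario structure (uniform over hyperedges, demand $1$ at the $k$ hyperedge vertices, capacity so small that one $r$-tour serves one demand), and a similar star-like metric, so the high-level plan matches the paper. However, there are two concrete gaps, one of which is fatal to the completeness direction as you have stated it.

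First, the completeness step as written does not produce a gap. You propose building the fixed tour from ``(a few copies of) $\{\ell_u: u\in\mathcal{C}\}$'' where $\mathcal{C}$ is a small vertex cover. But a vertex cover only guarantees that \emph{one} vertex of each hyperedge lies in $\mathcal{C}$; with $Q$ so small that an $r$-tour serves one demand, a single pass over $\mathcal{C}$ handles one demand per scenario and pushes the other $k-1$ to recourse. To handle all $k$ you would need roughly $k$ round trips over a set of size $\approx N/k$, for a total of $\Theta(N)$ — the same order as the NO-case bound, so no gap survives. The paper avoids this by using the \emph{stronger} YES guarantee in Bansal--Khot: the hypergraph restricted to $U\setminus X$ is $k$-partite with parts $U_1,\dots,U_k$. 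It then takes exactly $k$ $r$-tours, $\tau_i = X\cup U_i$, and the $k$-partiteness is what ensures every scenario can be fully served (one demand per $\tau_i$, with $X$-vertices absorbing the slack), at total cost $\approx 2|U|$. Without exploiting $k$-partiteness, you do not get the $O(|U|)$ YES-case cost.

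Second, your choice to tune $\lambda$ to a ``moderate regime'' where recourse actually participates makes the soundness argument (your item (i)--(iii) ``hybrid cheating'') genuinely hard, and you acknowledge you do not have it — you gesture at amplification/tensoring. The paper sidesteps this entirely by setting $\lambda = 2m|U|(k+1)$, so large that using the recourse tour in even one scenario already exceeds the YES-case cost; an optimal solution therefore \emph{never} uses recourse, and soundness reduces to a clean combinatorial counting argument: if the fixed tour has more than $k^2$ $r$-tours the cost is already $\geq k|U|$; otherwise, for every $(k{-}1)$-subset $I$ of the $T\le k^2$ $r$-tours, the set $U(I)$ of vertices missed by all $r$-tours outside $I$ must be independent (else some hyperedge has all $k$ of its vertices confined to only $k-1$ tours, each of capacity one, and cannot be served). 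Summing over the $\binom{T}{k-1}\le 2^{k^2}$ subsets and using that independent sets have size $\le \varepsilon|U|$ shows all but a $2^{k^2}\varepsilon$ fraction of vertices appear in $\ge k$ tours, giving cost $\ge k(1-2^{k^2}\varepsilon)|U|$. You should adopt both of these ideas — the $k$-partite YES structure and the ``forbidding'' $\lambda$ — before the reduction goes through.
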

This is proved in Section~\ref{sec:hardness} and involves a reduction from the vertex cover problem on $k$-uniform
hypergraphs: we use a result by Bansal and Khot~\cite{BK10} which says that it is UGC-hard to distinguish between the
(yes) case when the hypergraph is almost $k$-partite and the (no) case when any vertex cover is almost the entire
vertex-set. We remark that this super-constant hardness holds in star-like metrics, where our algorithm achieves an
$O(\log (n\lambda))$-approximation. Our algorithm loses additional log-factors in going from (i) stars to trees, and
then (ii) trees to general metrics: these overheads are similar to the best known results for the related {\em group
Steiner tree} problem~\cite{GKR00}.

Finally, we consider the special case when demands are independent across vertices. Using a different algorithm we
obtain a better ratio in Section~\ref{app:indep}.
\begin{theorem}\label{thm:indep}
There is a randomized $O(\frac{\log (n\lambda)}{\log\log (n\lambda)})$-approximation algorithm for \svrp under
independent demand distributions.
\end{theorem}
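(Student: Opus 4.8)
The plan is to keep the covering/set-cover framework of Section~\ref{sec:explicit} and the sampling reduction, but to argue that a \emph{product} demand distribution collapses both overheads hidden in Theorem~\ref{thm:algo}: the $O(\log^2 n)$ coming from the orienteering subroutine, and the $O(\log(nm))$ coming from the outer covering loop. After the usual preprocessing — guessing $\OPT$ to within a constant and rescaling so $\OPT=1$, deleting every vertex with $\lambda\,d(r,v)\le 1/\mathrm{poly}(n\lambda)$ (such vertices can always be left to recourse at expected cost $o(\OPT)$), and invoking Subsection~\ref{app:redn-explicit} to replace \D by a distribution of support $m=\mathrm{poly}(n,\lambda)$ — independence of the original demands lets us take this reduced \D to be a product distribution on a grid of size $m$, with $d(r,v)\le 1$ for all remaining vertices.

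Next I would show the orienteering subroutine becomes easy. Each covering iteration in Section~\ref{sec:explicit} solves a (ratio) \kro instance; under a product distribution the underlying rank objective $S\mapsto\Ex[\min(Q,\sum_{v\in S}q_v)]$ can, for the purposes of the covering LP, be replaced by a single linear per-vertex reward while losing only constants, so the subroutine degenerates to ordinary ratio orienteering, for which a constant-factor algorithm is known. This replaces the $O(\log^2 n)$ of Theorem~\ref{thm:kro} by $O(1)$.

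The heart of the argument is the outer covering loop, whose $O(\log(nm))$ factor is a worst-case union-bound effect — it insists that every vertex/scenario be \emph{fully} covered by the fixed tour. Under independence it suffices to cover the stochastic objective \emph{in expectation}. Concretely I would build the fixed tour so that each constituent $r$-tour reserves only a $\Theta(\llog(n\lambda)/\log(n\lambda))$ fraction of its capacity $Q$ for the planned (expected) demand it carries. Then the realized demand on any single $r$-tour is a sum of independent $[0,Q]$ random variables whose mean is a $\Theta(\llog(n\lambda)/\log(n\lambda))$ fraction of $Q$, so a Chernoff bound makes it exceed $Q$ with probability at most $\mathrm{poly}(n\lambda)^{-1}$; since $D\le 1$ forces every recourse tour to have length at most the crude bound $\mathrm{poly}(n\lambda)$, the expected second-stage cost contributed by overflows is $o(\OPT)$. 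The extra headroom multiplies the fixed-tour cost only by the reciprocal fraction $O(\log(n\lambda)/\llog(n\lambda))$, which is the claimed ratio. To turn this into a recourse \emph{strategy} valid for \emph{every} demand (not merely the $m$ sampled points) I would plug in the bicriteria ``outlier VRP'' primitive of Section~\ref{sec:black-box}, exactly as in the proof of Theorem~\ref{thm:algo}, and then sum the two stages.

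I expect the covering/reservation step to be the main obstacle. One has to pick the reservation so that the Chernoff slack drives the overflow probability below $\mathrm{poly}(n\lambda)^{-1}$ \emph{uniformly} over all $r$-tours and all $m$ scenarios, while the fixed tour produced by the linearized, orienteering-driven covering still costs only $O(\log(n\lambda)/\llog(n\lambda))\cdot\OPT$ against a lower bound for $\OPT$ that properly charges both stages; and one must check that in the rare overflow scenarios the induced ``outlier VRP'' residual is cheap \emph{in expectation}. This is precisely where independence — not just knowledge of the marginals — is indispensable: Theorem~\ref{thm:hard} shows that with correlated demands no $o(\log)$ ratio is possible even on stars. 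A minor additional point is verifying that the linearization of the rank objective in the orienteering step is lossless enough for the covering analysis.
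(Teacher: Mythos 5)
Your proposal captures the single most important technical insight the paper uses — that under independence one can over-provision the fixed tour by a factor of roughly $\beta=\Theta(\log(n\lambda)/\llog(n\lambda))$ and apply a Chernoff bound to make per-$r$-tour overflow rare, so that essentially every vertex visited by the fixed tour is actually served by it. The paper's Lemma~\ref{lem:indep-struc} says exactly this: the fixed tour consists of $\beta$ copies of a deterministic VRP solution on the expected demands, and each $D_1$-vertex is served with probability $1-1/(n\lambda)^4$. However, the surrounding machinery you propose is different from the paper's and, as written, has real gaps.

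First, the paper does \emph{not} keep the set-cover framework of Section~\ref{sec:explicit} for the independent case. It instead proves that a near-optimal solution has a \emph{scenario-independent} structure: the vertex set is partitioned into $D_1$ (always served by the fixed tour, modulo rare overflow) and $D_2$ (always served by recourse). Once this structural lemma is in hand, there is no set cover and no orienteering subroutine at all — the algorithm just writes a direct LP over the indicator $x_v=\mathbf{1}[v\in D_1]$, with a spanning-tree constraint for $D_1$, a flow term for $D_2$, and a sampled term for $\Ex[\mst]$ on $D_2$, and rounds by thresholding $x_v>\tfrac12$. Your plan of retaining the greedy covering loop and ``reserving'' capacity inside it is a genuinely different mechanism, and you do not explain how the greedy (whose per-iteration subproblem picks the best-ratio $r$-tour, not a balanced one) would actually produce tours whose expected load is a $\Theta(\llog/\log)$ fraction of $Q$, nor why the set-cover approximation overhead of $\Theta(\log(nm))$ disappears when you relax ``cover everything'' to ``cover in expectation.''

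Second, two of your claimed simplifications are unjustified. The sampling reduction of Subsection~\ref{app:redn-explicit} produces a \emph{uniform} distribution over $m$ drawn sample vectors, not a product distribution on a grid; so you cannot invoke independence after sampling the way you suggest. And the assertion that under a product distribution the knapsack rank objective ``can be replaced by a single linear per-vertex reward while losing only constants'' is not argued and is not what the paper does. Indeed, the paper's simplification is structural, not objective-side: it avoids the rank function entirely by committing each vertex to one stage up front, at which point the relevant lower bounds are the ordinary $\mst$ and $\flow$ bounds for deterministic VRP. If you tried to make the linearization claim precise inside the set-cover machinery you would still face the per-scenario knapsack constraint of $Q$, which is exactly what the rank function encodes.

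In short: you have the right high-level intuition (over-provision by $\beta$ and use Chernoff), but the route you propose to deploy it — modified set cover with a linearized rank objective over a ``product'' sampled distribution — has unproven steps, whereas the paper replaces the entire covering framework with a structural partition lemma and a direct LP. I would recommend reorganizing the argument around proving something like Lemma~\ref{lem:indep-struc} as the first step, after which the rest is an LP-rounding exercise that sidesteps the concerns above.
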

We show that in this case we can enforce a certain solution structure, while losing an $O(\frac{\log
(n\lambda)}{\log\log (n\lambda)})$ factor in the optimal value. Specifically, we show that the demands can be
partitioned into two groups: one where each demand is (almost always) served by the fixed tour, and another where each
demand is served in the recourse tour.
Then we use an LP-based algorithm to find the best such partition, losing another constant factor. We leave open the
possibility of a constant approximation in the independent demands case.

\smallskip
\noindent
{\bf Related Work.} The VRP~\cite{TV01} is an extensively studied routing problem that combines aspects of both TSP and
bin-packing. Several stochastic variants of the basic problem have received attention,
eg.~\cite{SG83,B92,D05,AE07,ESU09}. Approximation algorithms for VRP with independent stochastic demands (in the a
priori model) were given in~\cite{B92,GNR12}. This paper takes a different approach, that of two-stage stochastic
optimization with recourse (along the lines of~\cite{IKMM04,RS06,SS06,GPRS11} etc). To the best of our knowledge no
prior approximation results are known for vehicle routing problems in this model.

Stochastic optimization~\cite{BL97} is a broad area dealing  with probabilistic input. Approximation algorithms for
two-stage stochastic problems were introduced by Immorlica et al.~\cite{IKMM04} and Ravi-Sinha~\cite{RS06}. Gupta et
al.~\cite{GPRS11} and Shmoys-Swamy~\cite{SS06} gave general frameworks for approximating a number of stochastic
optimization problems; the former result is combinatorial using certain cost-sharing properties, whereas the latter is
LP-based. However, these approaches do not seem directly applicable to \svrp. The results in~\cite{GPRS11,SS06} hold in
the most general distribution model, where an algorithm only receives independent samples from a black-box.  Charikar
et al.~\cite{CCP05} showed that any arbitrary distribution can be reduced to one having polynomial support (under
certain conditions). We also make use of this result in proving Theorem~\ref{thm:algo}. For most other combinatorial
optimization problems that have been considered in the two-stage stochastic model (with proportional cost inflation),
it has been observed that approximation ratios are the same order of magnitude as the underlying deterministic
problem~\cite{IKMM04,RS06,GPRS11,SS06,SS07}. A notable exception is minimum cost max-matching~\cite{KKU08}, for which
an $\Omega(\log n)$-hardness of approximation was shown. In the case of VRP %
Theorem~\ref{thm:hard} shows (under UGC) that the stochastic approximation ratio is necessarily worse than its
deterministic counterpart, even in very special metrics.

\section{Algorithm for Polynomial Scenarios}\label{sec:explicit}
Here we consider the case when the demand distribution \D is specified as a list of possible outcomes. Later on we show
how the general case of a black-box distribution can be reduced to this case. Formally \D is a multiset
$\{\oq^1,\ldots,\oq^m\}$ where the actual demand $\oq=\oq^i$ (for some $i\in[m]$) with probability $1/m$.

The main idea of our algorithm is to recast the problem as an instance of set-cover with an exponential number of sets.
Then we show that the greedy subproblem is an instance of {\em submodular orienteering} (\sop) for which a
poly-logarithmic approximation is known~\cite{CZ05,CP05}. In fact, for the type of \sop instances obtained from \svrp
we give a better approximation ratio in Section~\ref{app:kro}. Altogether, this implies Theorem~\ref{thm:algo} for
polynomial scenarios.

\medskip
\noindent
{\bf Set cover instance \I.} The groundset $U$ consists of tuples \pair{i}{v} for all scenarios $i\in[m]$ and vertices
$v\in V$, which denotes $\oq^i(v)$ demand units at $v$ under scenario $i$. For any $\pair{i}{v} \in U$ we use
$q(\pair{i}{v}) := \oq^i(v)$, and for any subset $S\sse U$, $q(S):=\sum_{t\in S} q(t)$. Instance \I has the following
two types of sets:
\begin{enumerate}
\item $S:=\cup_{i=1}^m S_i$ is a {\bf first stage set} iff $S_i\sse \{\pair{i}{v} : v\in V\}$ and $q(S_i)\le Q$ for all
$i\in[m]$. The cost of this set $S$ is the minimum length of an $r$-tour that contains all the vertices represented in
$S$.
\item For any scenario $i\in[m]$, $T\sse \{\pair{i}{v} : v\in V\}$ is a {\bf second stage set} iff $q(T)\le Q$.
The cost of set $T$ is $\lambda/m$ times the minimum length of an $r$-tour containing all vertices of $T$.
\end{enumerate}

\begin{lemma}
The set cover instance \I is equivalent to \svrp.
\end{lemma}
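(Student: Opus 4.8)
The plan is to establish the equivalence by exhibiting a cost-preserving correspondence between feasible \svrp solutions and feasible set covers of $\I$, in both directions. The groundset $U$ is exactly the set of (scenario, vertex) demand units, and a set cover of $U$ must select demand units covering every $\pair{i}{v}$; the two types of sets in $\I$ are designed so that first-stage sets model the fixed tour $\tau$ together with the choice of which demands it serves in each scenario, and second-stage sets model the per-scenario recourse $r$-tours. I would first argue the \emph{forward direction}: given an \svrp solution $(\tau=\tau_1\cdots\tau_F,\ \{S_i\}_{i},\ \sigma(\cdot))$, decompose it into sets of $\I$ and check the costs match (up to the $1/m$ probability weighting absorbed into the second-stage set costs).

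\textbf{From \svrp to set cover.} For each $r$-tour $\tau_j$ in the fixed tour, and each scenario $i$, the vehicle serves some capacity-respecting subset of vertices on $\tau_j$; collecting these over all $j$ gives, for scenario $i$, a family of at most $F$ capacity-$Q$ batches whose union is the set of vertices served by $\tau$ in scenario $i$. Here I have to be slightly careful: a single first-stage set $S=\cup_i S_i$ as defined requires $q(S_i)\le Q$, i.e.\ it models \emph{one} $r$-tour used across all scenarios. So the right move is to use $F$ first-stage sets $S^{(1)},\ldots,S^{(F)}$, where $S^{(j)}=\cup_i S^{(j)}_i$ and $S^{(j)}_i$ is the subset of $\tau_j$ served in scenario $i$; the cost of $S^{(j)}$ is at most $d(\tau_j)$ since $\tau_j$ is a feasible $r$-tour through those vertices, and $\sum_j d(\tau_j)=d(\tau)$. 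Then for each scenario $i$, the recourse tour $\sigma(\oq^i)$ is itself a VRP solution, hence a concatenation of $r$-tours each serving $\le Q$ demand; each such $r$-tour is a second-stage set for scenario $i$, and the total cost of these sets is $\frac{\lambda}{m}d(\sigma(\oq^i))$. Summing over $i$ gives $\frac{\lambda}{m}\sum_i d(\sigma(\oq^i)) = \lambda\,\Ex_{\oq}[d(\sigma(\oq))]$. Every unit $\pair{i}{v}\in U$ with $\oq^i(v)>0$ is covered: either $v\in S^{(j)}_i$ for some $j$ (served by fixed tour) or it is on one of the recourse $r$-tours for scenario $i$. So we get a set cover of cost at most $\OPT$, and conversely any set cover can be converted back by reversing this, reading the first-stage sets as $r$-tours of $\tau$ and the second-stage sets (per scenario) as the recourse tour, with costs at most the set-cover cost.

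\textbf{From set cover to \svrp.} The reverse direction is the more delicate one, and I expect it to be the main obstacle: a set cover may use \emph{many} first-stage sets, so concatenating all of them yields the fixed tour $\tau$, and for scenario $i$ the vertices covered by first-stage sets must be servable by $\tau$ within capacity constraints. The key point is that each first-stage set $S=\cup_i S_i$ already certifies (via $q(S_i)\le Q$) that the corresponding $r$-tour can serve exactly the demand units of $S_i$ in scenario $i$; so if the fixed tour is the concatenation of all chosen first-stage $r$-tours, scenario $i$'s served set is precisely $\cup_{\text{chosen }S}S_i$, which is a valid choice of $S_i\sse\tau_i$ in the \svrp second stage. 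The remaining (uncovered-by-first-stage) demand units in scenario $i$ are covered by second-stage sets for scenario $i$, whose union of $r$-tours forms a feasible recourse VRP solution $\sigma(\oq^i)$. One should note the correspondence need not be a bijection and that "equivalent" here means the optimal values coincide (indeed, every feasible solution on one side maps to a feasible solution on the other of no greater cost); I would state the lemma's conclusion in that form and verify both inequalities $\OPT(\I)\le\OPT(\svrp)$ and $\OPT(\svrp)\le\OPT(\I)$, the cost bookkeeping being routine once the structural correspondence is pinned down.
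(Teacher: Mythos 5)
Your proposal is correct and follows essentially the same approach as the paper: decompose the \svrp solution by $r$-tour, form one first-stage set per fixed $r$-tour (your $S^{(j)}=\cup_i S^{(j)}_i$ is exactly the paper's $\bigcup_i S_{i,j}$) and one second-stage set per recourse $r$-tour, and match costs; the paper states the reverse direction only as an assertion, which you spell out slightly more. The only cosmetic difference is that you flag the "one $r$-tour across all scenarios" issue as a worry before resolving it, whereas the paper builds that into the decomposition from the start.
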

\begin{proof}
Recall that any feasible \svrp solution is specified by:
\begin{itemize}
\item {\em The fixed tour $\tau$.} It will be convenient to view this as a collection $\{\tau_1,\ldots,\tau_F\}$ of $r$-tours, each of which is a
 round-trip from the depot.
\item {\em For each scenario $i\in[m]$, the demands $\oq_A^i\sse \oq^i$ satisfied by the fixed tour.} Again this is viewed as follows:
for each $r$-tour $\{\tau_j\}_{j=1}^F$, $S_{i,j}\sse \{\pair{i}{v} : v\in V\}$ denotes the demands satisfied in
$\tau_j$. Note that by definition, $\bigcup_{j\in[F]} S_{i,j} \equiv  \oq_A^i$. Also due to the capacity constraint,
$q(S_{i,j})\le Q$ for each $j\in[F]$.
\item {\em For each scenario $i\in[m]$, the recourse tour $\sigma_i$ which satisfies residual demands $\oq^i\setminus \oq_A^i$.} Again
we view this as a collection $\{\sigma_{i,1},\ldots,\sigma_{i,L_i}\}$ of $r$-tours. For $k\in[L_i]$ let $T_{i,k}\sse
\{\pair{i}{v} : v\in V\}$  denote the demands satisfied in $\sigma_{i,k}$. Clearly, $\bigcup_{k\in[L_i]} T_{i,k} \equiv
\oq^i\setminus \oq_A^i$. Again $q(T_{i,k})\le Q$ for all $k\in [L_i]$.
\end{itemize}

Note that corresponding to each first stage $r$-tour $\tau_j$, the set $\bigcup_{i=1}^m S_{i,j}$ is a valid {\em first
stage set} in \I since for all $i\in[m]$ (a) $S_{i,j}\sse \{\pair{i}{v} : v\in V\}$ and (b) $q(S_{i,j})\le Q$. Moreover
the cost of this set in \I is at most $d(\tau_j)$.

Similarly, for each scenario $i\in[m]$ and second stage $r$-tour $\sigma_{i,k}$ ($k\in[L_i]$), set $T_{i,k}$ is a valid
{\em  second stage} set. The cost of this set in \I is at most $\frac{\lambda}m\cdot d(\sigma_{i,k})$.

Finally, these sets cover $U$ in \I since for each scenario $i\in [m]$, we have:
$$\left(\cup_{j=1}^F S_{i,j}\right) \,\, \bigcup\,\, \left( \cup_{k\in[L_i]} T_{i,k} \right) \quad = \quad \{\pair{i}{v} : v\in V\}$$
The total cost of this solution to \I is at most:
$$\sum_{j=1}^F d(\tau_j) \,\, + \,\, \frac{\lambda}m\cdot \sum_{i=1}^m \sum_{k=1}^{L_i} d(\sigma_{i,k}) \quad = \quad d(\tau) \,\,+ \,\,\lambda\cdot \Ex_{\oq\gets \D}\left[ d(\sigma(\oq)) \right],$$
which is just the \svrp objective value. The reverse relation (from \I to \svrp) can be shown in a similar manner, and
the lemma follows.
\end{proof}

Thus it suffices to solve the set cover instance \I. We use the greedy algorithm for set cover which requires solving
the following {\em max-coverage subproblem}: given $U'\sse U$ find a set (of either first/second type) that maximizes
the ratio of the number of $U'$-elements it covers to its cost. We give separate algorithms for this problem, under the
two types of sets.

\medskip
\noindent
{\bf Max-coverage for second stage sets.} We give a constant approximation in this case. Assume that the algorithm
knows by enumeration (i) the cost $B$ of the best ratio set (up to a factor two), and (ii) the scenario $i\in [m]$
corresponding to it. Then it suffices to find a set $T\sse U' \bigcap \{\pair{i}{v} : v\in V\}$ maximizing $|T|$ such
that $q(T)\le Q$ and cost$(T)\le B$. By the definition of second stage sets, this reduces to finding an $r$-tour
visiting the maximum vertices $W\sse\{u\in V: \pair{i}{u}\in U'\}$, having length at most $\frac{m}{\lambda}\cdot B$
{\em and} with $\sum_{u\in W} \oq^i(u)\le Q$. This is just an instance of the {\em knapsack-orienteering} problem, for
which a constant-factor approximation is known~\cite{GKNR12}.

\medskip
\noindent
{\bf Max-coverage for first stage sets.} In this case, we obtain a poly-logarithmic approximation. Again, we assume
that the algorithm knows the cost $B$ of the best ratio set (up to a factor two). Recall that unlike the previous case,
one first stage set can cover elements from several scenarios. By definition, each first stage set $S$ corresponds to
an $r$-tour visiting vertices $W\sse V$ {\em and} subsets $S_i\sse \{\pair{i}{v} : v\in W\}$ for each $i\in[m]$ such
that $\{q(S_i)\le Q\}_{i=1}^m$ and $S=\bigcup_{i=1}^m S_i$. Among all first stage sets visiting a {\em fixed vertex-set
$W\sse V$},  the maximum coverage of $U'$ equals:
$$f(W) \quad := \quad \sum_{i=1}^m \,\, \max\left\{ |S_i| \,\,\, :\,\,\, S_i\sse \{u\in W: \pair{i}{u}\in U'\},\,\, \sum_{v\in S_i} \oq^i_v \le Q\right\}$$
For each $i\in [m]$ let $f_i(W)$ denote the term inside the above summation. Recall that the cost of all first stage
sets visiting vertices $W$ is the same, namely the minimum TSP on $\{r\}\cup W$. Thus the subproblem we wish to solve
is:
\begin{equation}\label{eq:greedy-1st-stage}
\max \,\, f(W) \,\, : \,\, \mbox{ there is an $r$-tour visiting $W\sse V$ of length }\le B.
\end{equation}
Recall the submodular orienteering problem (\sop) where given metric $(V,d)$ with root $r$, bound $B$ and submodular
function $g:2^V\rightarrow \mathbb{R}_+$, the goal is to find an $r$-tour visiting some subset $W\sse V$ of vertices,
having length at most $B$ that maximizes $g(W)$. If $f$ were submodular then we can use the algorithm~\cite{CZ05,CP05}
to solve this. But $f$ is not submodular. Still, we show below that it can be well approximated by a submodular
function $g$.

We approximate each $f_i$ (point-wise) by a submodular function $g_i$. Let $V_i := \{u\in V: \pair{i}{u}\in U'\}$
denote the vertices appearing with scenario $i$ in $U'$. Define:
$$ g_i(W)\quad := \quad \max \left\{ \, \sum_{v\in V_i\cap W} x_v \,:\, \sum_{v\in W} \oq^i_v\cdot x_v \le Q,\,\,\, 0\le x_v \leq 1, \,\, \forall \,v\in
W\right\}$$ Observe that $g_i(W)$ is just an LP relaxation for a maximization $\{0,1\}$-knapsack problem. So its value
is given by the greedy algorithm that increases $x_v$ (up to $1$) in increasing order of $\{\oq^i_v : v\in V_i\cap
W\}$. On the other hand, $f_i(W)$ is the value of the same {\em integral} knapsack problem. Now, function $g_i$ can be
rewritten as the rank function of a {\em polymatroid}~\cite{Schrijver} which is submodular; see eg.~\cite{DGV08}.
Moreover, the integrality gap of the natural LP for max-knapsack is two. Thus,
\begin{cl}
$g_i$ is monotone submodular and $\frac{g_i(W)}2 \le f_i(W) \le g_i(W),\,\forall W\sse V$.
\end{cl}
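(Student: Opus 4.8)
The plan is to prove the sandwich inequality by an elementary knapsack argument and to deduce the structural properties of $g_i$ from the polymatroid form of its defining LP. The first observation I would make is that a coordinate $x_v$ with $v\in W\setminus V_i$ appears in the capacity and box constraints but not in the objective of $g_i(W)$, so some optimum sets it to $0$; hence $g_i(W)=h_i(W\cap V_i)$, where
\[
h_i(W') \ :=\ \max\Big\{\ \textstyle\sum_{v\in W'}x_v \ :\ \sum_{v\in W'}\oq^i_v\, x_v\le Q,\ \ 0\le x_v\le 1\ \ \forall v\in W'\ \Big\}
\]
is the value of the LP relaxation of the \emph{unit-profit} knapsack instance on items $W'\subseteq V_i$ with sizes $\{\oq^i_v\}$ and capacity $Q$, while $f_i(W)$ is exactly the integer optimum of the same instance on $W\cap V_i$. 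A fractional optimum of $h_i(W')$ is obtained greedily (sort items by nondecreasing size, take them fully until $Q$ is exhausted, then take a fraction of the next), and the same greedy order, rounded down, is optimal for the integral version; so $f_i(W)=k$ and $g_i(W)=k+\theta$ for an integer $k\ge 0$ and some $\theta\in[0,1)$, where $k$ is the number of smallest items that fit entirely. This gives $f_i(W)\le g_i(W)\le f_i(W)+1$ at once.

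For the factor-two bound, the point to notice is that whenever $W\cap V_i\ne\emptyset$ we have $f_i(W)\ge 1$: since \D is supported on $\{0,\dots,Q\}^V$, every demand satisfies $\oq^i_v\le Q$, so at least one item fits by itself and $k\ge 1$. Combined with $g_i(W)\le f_i(W)+1$ this yields $g_i(W)\le 2f_i(W)$, and when $W\cap V_i=\emptyset$ both sides are $0$. (If $\oq^i_v$ were allowed to exceed $Q$ the bound would fail, so this hypothesis is essential here.)

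Monotonicity of $g_i$ is immediate: enlarging $W$ only enlarges the feasible region of the defining LP (extend a solution by zeros), so the optimum cannot decrease. For submodularity it suffices to prove that $h_i$ is submodular on the ground set $V_i$, because $g_i=h_i(\,\cdot\,\cap V_i)$ and precomposition with the intersection map $W\mapsto W\cap V_i$ preserves submodularity. Now $h_i$ is the rank function of a polymatroid~\cite{Schrijver}, hence submodular; concretely, from the greedy water-filling description one checks that the marginal gain $h_i(W'\cup\{u\})-h_i(W')$ is nonincreasing as $W'$ grows — inserting a new item of size $\oq^i_u$ in size order, the fraction of it that can be accommodated (and the effect on the items it displaces) only worsens as more items are present — which is precisely the diminishing-returns characterization of submodularity; this is standard for fractional unit-profit knapsack (see e.g.~\cite{DGV08}). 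I expect this verification of diminishing returns from the greedy formula to be the only nonroutine step; given it, assembling the claim from the paragraphs above is mechanical.
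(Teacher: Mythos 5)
Your proof is correct and follows the same approach as the paper: view $g_i$ as the fractional (LP) relaxation of the unit-profit knapsack defining $f_i$, use the greedy description of the fractional optimum to bound the gap, and invoke the polymatroid-rank-function fact (with the same citations) for monotone submodularity. The one small refinement you add is making explicit that for unit profits the additive gap is $g_i - f_i < 1$ and then converting this to a factor-$2$ bound using $f_i(W) \ge 1$ (which is where $\oq^i_v \le Q$ is needed), rather than appealing to the "integrality gap of max-knapsack is two" as a black box, but this is the same greedy argument stated a bit more carefully.
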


So if we define $g(W) := \sum_{i=1}^m g_i(W)$ then it is submodular and maximizing $g$ in~\eqref{eq:greedy-1st-stage}
is equivalent to maximizing $f$ (up to factor two). Hence, assuming a $\rho$-approximation algorithm for \sop, we
obtain a $2\rho$-approximation algorithm for~\eqref{eq:greedy-1st-stage}. This suffices to give an
$O(\rho)$-approximation for the max-coverage subproblem. We have $\rho=O(\log^{2+\epsilon} n)$ in polynomial time using
the bicriteria approximation in Calinescu-Zelikovsky~\cite{CZ05}, and $\rho=O(\log n)$ in {\em quasi-polynomial} time
using the true approximation in Chekuri-Pal~\cite{CP05}. In Section~\ref{app:kro} we directly consider the {\em ratio}
objective corresponding to~\eqref{eq:greedy-1st-stage}, called {\em ratio knapsack rank-function orienteering}, i.e.
$$\max\left\{ \frac{f(V(\tau))}{d(\tau)} \quad : \quad \tau \mbox{ is an $r$-tour visiting vertices
}V(\tau)\,\right\},$$ and give an improved polynomial time $O(\log^2 n)$-approximation algorithm for it.

\medskip

Finally, we lose an additional $\log |U|=O(\log (mn))$ factor to solve the set cover instance \I (which is equivalent
to \svrp). Thus we obtain:
\begin{theorem}\label{thm:explicit} There is a polynomial time $O(\log^{2}n\cdot \log(nm))$-approximation algorithm
for \svrp for a polynomial number $m$ of scenarios and $n$ vertices. This ratio
improves to $O(\log n\cdot \log(nm))$ in quasi-polynomial time.
\end{theorem}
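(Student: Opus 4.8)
The plan is to assemble the ingredients already in place. By the equivalence lemma it suffices to compute an approximately minimum-cost cover of the ground set $U$ in the set-cover instance \I, and for this we run the standard greedy algorithm for weighted set cover. Each greedy iteration needs a \emph{max-density} oracle: given the currently uncovered elements $U'\sse U$, return a set of \I (of either type) maximizing the ratio of the number of $U'$-elements it covers to its cost. It is classical that a $\beta$-approximate max-density oracle turns greedy into an $O(\beta\cdot\log|U|)$-approximation for min-cost set cover; since $|U|=mn$, this costs an $O(\log(nm))$ factor on top of $\beta$.

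So it remains to build a max-density oracle with $\beta=O(\log^2 n)$ in polynomial time, improving to $\beta=O(\log n)$ in quasi-polynomial time. Because every set of \I is either first- or second-stage, we run the oracle separately over each type and return the denser output; the ratio of the combined oracle is the larger of the two individual ratios. For second-stage sets, after guessing (by enumeration) the scenario $i\in[m]$ and the cost $B$ of the densest such set up to a factor two, the task reduces to finding an $r$-tour of length at most $\frac m\lambda B$ covering as many vertices $u$ with $\pair{i}{u}\in U'$ as possible subject to $\sum_u \oq^i_u\le Q$ — exactly knapsack-orienteering — so~\cite{GKNR12} gives $\beta_2=O(1)$.

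For first-stage sets, the coverage of the densest set on a fixed vertex-set $W$ equals $f(W)=\sum_i f_i(W)$ and its cost equals the minimum $r$-tour length on $\{r\}\cup W$, so the max-density problem over first-stage sets is precisely ratio knapsack rank-function orienteering for $f$. Since $g:=\sum_i g_i$ is monotone submodular and $\frac12 g\le f\le g$ pointwise by the Claim, any $\rho$-approximation for the ratio objective on a submodular function applies, up to an extra factor $2$, to our $f$. Invoking Theorem~\ref{thm:kro} directly yields $\beta_1=O(\log^2 n)$ in polynomial time. In quasi-polynomial time one instead runs the exact $O(\log n)$-approximation of Chekuri-Pal~\cite{CP05} for length-bounded submodular orienteering on $g$, combined with the routine reduction from the ratio objective to the length-bounded one (guess the optimal tour length up to a factor two and call the length-bounded algorithm), giving $\beta_1=O(\log n)$.

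Putting $\beta=\max(\beta_1,\beta_2)$ into the greedy bound yields an $O(\beta\cdot\log(nm))$-approximation for \I, hence for \svrp: this is $O(\log^2 n\cdot\log(nm))$ in polynomial time and $O(\log n\cdot\log(nm))$ in quasi-polynomial time, as claimed. The substantive work — the equivalence of \I with \svrp, the submodular sandwiching of $f$ by $g$, and the ratio-\kro algorithm of Theorem~\ref{thm:kro} — is already done; what remains is bookkeeping: checking that a per-type max-density oracle is a global one, and handling the enumeration of the unknown cost $B$ (together with the ratio-to-bounded reduction in the quasi-polynomial case), for which a polynomially (resp.\ quasi-polynomially) sized geometric grid of candidate $r$-tour lengths suffices. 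I expect this last discretization to be the most technical — though still routine — point.
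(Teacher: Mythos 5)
Your proposal follows essentially the same route as the paper: reduce to the set-cover instance $\I$, run greedy with a per-type max-density oracle, handle second-stage sets via knapsack-orienteering~\cite{GKNR12}, and handle first-stage sets via ratio \kro (Theorem~\ref{thm:kro} for polynomial time, Chekuri–Pal on the submodular surrogate $g$ for quasi-polynomial time), paying a final $O(\log(nm))$ factor for greedy. One small redundancy: the $\frac12 g\le f\le g$ sandwich is needed only for the quasi-polynomial Chekuri–Pal path, since Theorem~\ref{thm:kro} already works directly with the sum of knapsack rank-functions $f=\sum_i f_i$; but this does not affect correctness.
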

\section{Algorithm for General Distributions}\label{sec:black-box}
In this section we prove Theorem~\ref{thm:algo} under an arbitrary distribution \D that is accessed by sampling. We
denote the input \svrp instance by \J. In Subsection~\ref{app:redn-explicit} we apply a sampling-based reduction
from~\cite{CCP05} to obtain an equivalent \svrp instance $\J'$ with $m=poly(n,\lambda)$ scenarios.
This allows us to apply the algorithm from the previous section to approximate the {\em optimal value} of instance \J.
However a solution to \J must also specify a valid recourse strategy for {\em every} outcome $\oq\in\D$, and not just
for the $m$ outcomes in instance $\J'$. It turns out that the recourse step is captured by an ``outlier'' version of
VRP, and we give an LP-based constant-factor bicriteria approximation for it in Subsection~\ref{subsec:recourse}.

\subsection{Sampling Based Reduction to Polynomial Scenarios}\label{app:redn-explicit} Here we show that
sampling can be used to reduce an arbitrary demand distribution to one having a polynomial number of scenarios.

Given a fixed tour $\tau$ and scenario $\oq\in\D$, let $h(\tau,\oq)$ denote the minimum cost of a recourse tour. Note
that computing $h(\tau,\oq)$ involves choosing a subset $\oq_A$ of \oq to be served by $\tau$ (at zero cost, but
subject to capacity) and then optimally solving the VRP instance with demands $\oq-\oq_A$ and lengths inflated by
factor $\lambda$. Thus we can express the minimum objective value for a given fixed tour $\tau$ as:
$$\obj(\tau)\quad := \quad d(\tau) \,\,+\,\, \Ex_{\oq\gets \D}\left[ h(\tau,\oq)\right].$$

The optimal value of \svrp instance \J is then $\min_{\tau\in X}\, \obj(\tau)$, where $X$ denotes the set of all
possible fixed tours.
Consider drawing $m$ independent samples $\{\oq^1,\ldots,\oq^m\}$ from \D, and let $\J'$ denote the (random) instance
of \svrp with these as explicit scenarios. Define:
$$\sobj(\tau)\quad := \quad d(\tau) \,\,+\,\, \frac1m \cdot \sum_{i=1}^m h(\tau,\oq^i), \qquad \forall \mbox{ fixed tour }\tau\in X.$$

It is clear that $\min_{\tau\in X}\, \sobj(\tau)$ is the optimal value of $\J'$. We now use the result of Charikar et
al.~\cite{CCP05} to relate these two instances. For completeness we give a proof adapted to our context. Let
$D=\max_{u,v}\,d(u,v)$ denote the diameter of the metric; we assume WLOG (by scaling) that all distances are integral.
\begin{theorem}[\cite{CCP05}]\label{thm:sampling}
Using $m=\Theta(\lambda^2\,D^2\,n^2\,\log|X|)$, with probability $1-o(1)$,
$$|\obj(\tau)-\sobj(\tau)|\le 1,\qquad \mbox{ for all }\tau\in X.$$
\end{theorem}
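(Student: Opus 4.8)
The plan is to notice that the first-stage term cancels: $\obj(\tau)-\sobj(\tau) = \Ex_{\oq\gets\D}\!\left[h(\tau,\oq)\right] - \frac1m\sum_{i=1}^m h(\tau,\oq^i)$, since $d(\tau)$ appears identically in $\obj(\tau)$ and $\sobj(\tau)$. Thus for each fixed $\tau$ this is precisely the deviation of an empirical average of $m$ i.i.d.\ copies of the bounded random variable $h(\tau,\oq)$ from its mean, so the statement will follow from a Hoeffding-type tail bound for each $\tau$ together with a union bound over all $\tau\in X$.

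First I would pin down the range of $h(\tau,\oq)$. Since every instantiated demand satisfies $\oq_v\le Q$, a feasible (though wasteful) recourse strategy is to serve nothing along $\tau$ and then visit each vertex $v$ with $\oq_v>0$ by its own round-trip $r\to v\to r$; this costs at most $\sum_{v\in V} 2\,d(r,v)\le 2nD$ before inflation, and hence $0\le h(\tau,\oq)\le 2\lambda nD=:M$ for every $\tau\in X$ and every outcome $\oq$.

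Next, fixing $\tau\in X$, the variables $h(\tau,\oq^1),\dots,h(\tau,\oq^m)$ are i.i.d.\ in $[0,M]$ with common mean $\Ex_{\oq\gets\D}[h(\tau,\oq)]$, so by Hoeffding's inequality
$$\Pr\Big[\,\big|\obj(\tau)-\sobj(\tau)\big|>1\,\Big] \;=\; \Pr\Big[\Big|\tfrac1m\textstyle\sum_{i=1}^m h(\tau,\oq^i)-\Ex[h(\tau,\oq)]\Big|>1\Big] \;\le\; 2\exp\!\Big(-\tfrac{2m}{M^2}\Big).$$
With $M^2 = 4\lambda^2n^2D^2$, choosing $m=\Theta(\lambda^2D^2n^2\log|X|)$ with a sufficiently large constant makes $2m/M^2\ge \ln|X|+\omega(1)$, so the right-hand side is at most $o(1)/|X|$. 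A union bound over the $|X|$ fixed tours then yields $|\obj(\tau)-\sobj(\tau)|\le 1$ for all $\tau\in X$ simultaneously with probability $1-o(1)$.

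The only genuinely delicate point is the range bound on $h$: one must check that the worst-case recourse cost really is $O(\lambda nD)$ rather than something larger, which is exactly where $\oq_v\le Q$ is used (a single round-trip per vertex suffices). Everything else is the standard Hoeffding-plus-union-bound calculation, and the stated sample size $m=\Theta(\lambda^2D^2n^2\log|X|)$ is precisely what that calculation forces. If preferred, the same conclusion follows from a Chernoff bound after rescaling $h/M$ to lie in $[0,1]$.
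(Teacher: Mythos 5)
Your proposal is correct and follows essentially the same route as the paper: cancel the first-stage term $d(\tau)$, bound the range of $h(\tau,\oq)$ by $2\lambda nD$ via the one-round-trip-per-vertex recourse, apply a Hoeffding/Chernoff tail bound for each fixed $\tau$, and union bound over $X$. The paper phrases the concentration step as a Chernoff bound on the rescaled variables $H_i/(2\lambda nD)\in[0,1]$ with $m=8\lambda^2n^2D^2\log|X|$, which is the same calculation up to constants.
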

\begin{proof}
Fix any fixed tour $\tau\in X$. Define $H=\Ex_{\oq\gets \D}\,h(\tau,\oq)$ and random variables $H_i:= h(\tau,\oq^i)$
for $i\in[m]$. Clearly $\Ex H_i=H$ for all $i\in[m]$. Note that $H_i\le 2\lambda n D$: the worst case recourse action
involves a separate round-trip to each vertex. Since $H_i/(2\lambda n D)$ are independent $[0,1]$ random variables, by
Chernoff bound~\cite{MR95},
$$\Pr \left[ \bigg| \frac1m\cdot \sum_{i=1}^m H_i  - H \bigg| \, >\, \epsilon\right] \quad \le \quad 2\exp\left(-\frac{\epsilon^2\cdot
m}{4\lambda^2n^2 D^2}\right),\qquad \forall \epsilon>0.$$ Now observe that $\obj(\tau)=d(\tau)+H$ whereas
$\sobj(\tau)=d(\tau)+\frac1m\cdot \sum_{i=1}^m H_i$. Thus using the above inequality with  $\epsilon=1$ and
$m=8\lambda^2n^2 D^2\cdot \log |X|$, we obtain:
$$\Pr\left[|\obj(\tau)-\sobj(\tau)|> 1\right] \quad \le \quad \frac2{|X|^2},\qquad \forall \tau\in X.$$
Finally, a union bound over all $X$ implies the theorem.
\end{proof}

We now show that $m=poly(n,\lambda)$.

\begin{cl}\label{cl:number-base-tour}
WLOG the number of $r$-tours in any fixed tour is at most $n$. Hence the number of edges used in the fixed tour is at
most $n^2$, and $|X|\le 2^{n^2}$.
\end{cl}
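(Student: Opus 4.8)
The plan is to argue that any fixed tour can be converted into one using at most $n$ $r$-tours without increasing the \svrp objective. Fix an optimal (or arbitrary) fixed tour $\tau = \{\tau_1,\ldots,\tau_F\}$ with $F > n$. The key observation is that the fixed tour only matters through which vertices are visited by each $r$-tour — more precisely, through the collection of vertex-sets $\{V(\tau_j)\}_{j=1}^F$, since in the second stage the algorithm is free to choose the served subset $S_{i,j}\subseteq \tau_j$ for each scenario $i$. So first I would note that we may assume the vertex-sets $V(\tau_j)$ are pairwise incomparable under inclusion: if $V(\tau_j) \subseteq V(\tau_k)$ for $j\neq k$, then any demand that could be served along $\tau_j$ in some scenario can instead be served along $\tau_k$ (the capacity constraint on $\tau_k$ is per-scenario, so this needs a small argument — see below), and we can delete $\tau_j$ entirely, strictly decreasing $F$ and not increasing $d(\tau)$ or the recourse cost.

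The main obstacle is precisely that per-scenario capacity constraint: if two $r$-tours $\tau_j$ and $\tau_k$ are both used to serve demand in scenario $i$, merging them is not free because their combined served demand may exceed $Q$. So deleting $\tau_j$ in favor of $\tau_k$ is only valid when $\tau_k$ has spare capacity in every scenario, which need not hold. The cleaner route, which I would take instead, is the following: bound $F$ directly by observing that an optimal solution never benefits from two $r$-tours that visit the \emph{same} vertex set, because two copies of the same route can be replaced by that route traversed twice — which we already count as two $r$-tours — so the reduction must come from a different angle. The right angle is that in any scenario $i$, the useful $r$-tours are those that actually serve demand, and each vertex $v$ has $\oq^i_v \le Q$, so a single $r$-tour visiting $\{r,v\}$ suffices to serve $v$'s entire scenario-$i$ demand. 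Thus it is enough to argue: there is an optimal solution in which the number of distinct $r$-tours is at most $n$.

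Here is the argument I would write. Suppose the fixed tour has $r$-tours $\tau_1,\ldots,\tau_F$. Consider the recourse-free reformulation: the contribution of the fixed tour to the objective is $\sum_{j} d(\tau_j)$ plus the expected recourse cost, and the latter is nonincreasing as we \emph{add} $r$-tours and nondecreasing as we \emph{remove} them. Now order the $r$-tours and greedily keep $\tau_j$ only if, for \emph{some} scenario $i$, $\tau_j$ strictly increases the set of vertices that can be simultaneously served within capacity beyond what $\tau_1,\ldots,\tau_{j-1}$ already allow; formally, keep $\tau_j$ iff there exists $i\in[m]$ and a capacity-feasible assignment of scenario-$i$ demands to the $r$-tours that serves a vertex $v\in V(\tau_j)$ not servable using only $\tau_1,\ldots,\tau_{j-1}$. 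Discarding the non-kept $r$-tours does not change the optimal second-stage behavior in any scenario (by construction of what "kept" means), hence does not increase the recourse cost, and it does not increase $d(\tau)$. Each kept $r$-tour is "responsible" for at least one new vertex $v\in V$ (across all scenarios, once a vertex is servable it stays servable as we add tours), so the number of kept $r$-tours is at most $|V| = n$. This gives $F \le n$.

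The rest is immediate: each $r$-tour on $n$ vertices uses at most $n$ edges, so the whole fixed tour uses at most $n^2$ edges; since each edge of the fixed tour is a pair of vertices, the number of possible fixed tours is at most the number of multisets of $\le n^2$ edges from the $\binom{n}{2} \le n^2$ available, which is at most $(n^2)^{n^2} \cdot 2 \le 2^{O(n^2\log n)}$; absorbing the $\log n$ (or choosing the slightly coarser bound that a fixed tour is determined by an ordered walk of length $\le n^2$ on $n$ vertices, of which there are at most $n^{n^2} \le 2^{n^2\log n}$, and noting any constant-factor slack here is harmless for Theorem~\ref{thm:sampling} since $\log|X|$ enters only logarithmically in $m$) we get $|X| \le 2^{O(n^2\log n)}$, and in particular $m = \Theta(\lambda^2 D^2 n^2 \log|X|) = poly(n,\lambda)$ whenever $D = poly(n)$, which we may assume after the standard rescaling/merging of nearby points. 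The loose constant $2^{n^2}$ stated in the claim can be reached by the cruder counting if one also observes that a tour revisiting the same vertex can be shortcut, but since only $\log|X|$ is used downstream I would simply record $|X| \le 2^{O(n^2 \log n)}$ and remark that this does not affect any subsequent bound.
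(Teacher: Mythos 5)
Your approach is genuinely different from the paper's, and unfortunately it has gaps that prevent it from going through. The paper's argument is a distance-based exchange: it orders the vertices by distance from the depot, defines $M(j)$ as the farthest vertex on $r$-tour $\tau_j$ (so $d(\tau_j)\ge 2d(0,M(j))$), splits the tours into a ``far'' set $G$ and the rest, and shows the rest can be charged, via an injective assignment, against singleton $r$-tours to the $k-1$ closest vertices. The replacement tour $\tau'$ (singletons to $\{1,\ldots,k-1\}$ plus $G$) is no longer than $\tau$, leaves the covering of the far vertices $\{k,\ldots,n\}$ literally unchanged, and certainly covers $\{1,\ldots,k-1\}$; hence $h(\tau',\oq)\le h(\tau,\oq)$ pointwise over \emph{all} outcomes $\oq$, with no reference to the scenarios at all. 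This is what makes the bound work for a black-box distribution.

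Your greedy-pruning argument has two concrete problems. First, the keeping criterion (``some vertex $v$ becomes newly servable in some scenario $i$'') is too weak to conclude that discarding a non-kept tour preserves second-stage feasibility: servability of individual vertices does not capture how many demands can be served \emph{simultaneously}. Take $Q=1$, one scenario with unit demands at $\{a,b,c\}$, and $\tau_1,\tau_2,\tau_3$ all visiting $\{a,b,c\}$. After $\tau_1,\tau_2$ each of $a,b,c$ is individually servable, so $\tau_3$ is not kept; yet $\tau_3$ is exactly what lets you serve all three at once, and dropping it strictly increases the recourse. Second, even granting the pruning, the potential you count is wrong: each kept tour certifies a newly-servable \emph{(scenario, vertex) pair}, not a newly-servable vertex — a vertex already servable in scenario $i'$ can be the ``new'' witness in scenario $i$ — so the bound you get is at best $mn$, not $n$. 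That is fatal here, because the claim is about the set $X$ of fixed tours for the original black-box instance \J, where the number of scenarios is unbounded; in fact the paper uses $|X|\le 2^{n^2}$ precisely to choose the sample size $m$ in Theorem~\ref{thm:sampling}, so any bound that itself depends on $m$ is circular. Your concluding count of $|X|$ once $F\le n$ is established is fine (the extra $\log n$ in the exponent is harmless since only $\log|X|$ is used), but the reduction to $F\le n$ needs the paper's distance-exchange idea or something equally scenario-free.
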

\begin{proof}
Consider an arbitrary fixed tour $\tau$ consisting of $r$-tours $\tau_1,\ldots,\tau_F$. Suppose that $F>n$: then we
will show there exists another fixed tour $\tau'$ with at most $n$ $r$-tours such that $\obj(\tau')\le \obj(\tau)$. Let
us number vertices so that the depot is numbered $0$ and $d(0,1)\le d(0,2)\le\cdots\le d(0,n)$. For each $j\in[F]$ let
$M(j)\in[n]$ denote the maximum numbered vertex in $r$-tour $\tau_j$; note $d(\tau_j)\ge 2\cdot d(0,M(j))$. Choose
$k\in[n]$ as the minimum value so that $|M^{-1}(\{k,\ldots,n\})|\le n-k$; if there is no such value then set $k=n+1$.

Let $G = M^{-1}(\{k,\ldots,n\}) \sse [F]$; note that $G=\emptyset$ when $k=n+1$. By choice of $k$, we have $|G|\le
n-k+1$.

Also by choice of $k$, it follows that $|M^{-1}(\{t,\ldots,n\})|\ge  n-t+1$ for all $t\le k-1$. Thus there is an
injective map $\phi:[k-1]\rightarrow [F]\setminus G$ such that $M(\phi(v))\ge v$ for all $v\in [k-1]$; this can be
obtained say by a greedy assignment starting from $k-1$ (recall $|G|\le n-k+1$). Due to the vertex numbering (and
definitions of $\phi,M$) we have $2\cdot d(0,v)\le 2\cdot d(0,M\circ\phi(v)) \le d(\tau_{\phi(v)})$ for all $v\in
[k-1]$. And since $\phi$ is injective, $2\cdot \sum_{v=1}^{k-1} d(0,v)\,\le \,\sum_{j\in[F]\setminus G} d(\tau_j)$.

Set $\tau'$ to consist of the following $r$-tours: (a) all singleton $r$-tours $\langle 0,v,0\rangle$ for $v\in[k-1]$,
and (b) $\{\tau_j:j\in G\}$. Using the above inequality, $d(\tau')\le d(\tau)$. Observe that vertices
$\{1,\ldots,k-1\}$ will play no role in the second stage under $\tau'$, since they are already individually covered in
a first-stage $r$-tour. Moreover, for any vertex $v\in \{k,\ldots,n\}$, the set of $r$-tours containing $v$ is
identical in both $\tau$ and $\tau'$. Hence for {\em any scenario} $\oq\in \D$, the recourse action (for vertices
$\{k,\ldots,n\}$) under $\tau$ is also feasible under $\tau'$. This implies $h(\tau',\oq)\le h(\tau,\oq)$ for all
$\oq\in\D$; and so $\obj(\tau')\le \obj(\tau)$. Also by construction, the number of $r$-tours in $\tau'$ is $k-1+|G|\le
n$.\end{proof}

\begin{cl}\label{cl:recourse-edges}
WLOG, the number of edges used in any recourse tour is at most $2n$.
\end{cl}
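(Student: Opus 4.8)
The plan is to show that any recourse tour can be assumed to use at most $2n$ edges, by the same kind of structural argument used in Claim~\ref{cl:number-base-tour}. Recall that a recourse tour $\sigma$ is itself a VRP solution, i.e.\ a concatenation of $r$-tours $\sigma_1,\ldots,\sigma_L$, each serving an unsplittable subset of the residual demands with total demand at most $Q$. First I would argue that we may assume $L\le n$: if two $r$-tours $\sigma_k,\sigma_{k'}$ serve residual demands whose combined size is still at most $Q$, we can merge them into a single $r$-tour on the union of their vertex sets (by triangle inequality this does not increase length), so WLOG every pair of $r$-tours has combined demand exceeding $Q$. But then, ordering the $r$-tours and using that at most $n$ vertices have positive residual demand, a counting/pairing argument (essentially the one in Claim~\ref{cl:number-base-tour}, or even simpler: at most one $r$-tour can be ``more than half empty'', so $L\le n+1$, and one can then push to $L\le n$) bounds the number of $r$-tours by $n$.

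Next I would argue that each individual $r$-tour $\sigma_k$ can be taken to be a simple cycle through $r$ and the vertices it serves; since it serves at most $\min(n, \text{something})$ vertices but more usefully, since across all $L\le n$ $r$-tours the total number of (vertex, $r$-tour) incidences where a vertex is actually \emph{served} is at most $n$ (each vertex's residual demand is served exactly once), the total number of edges is at most $\sum_{k=1}^L (|\sigma_k|+1)$ where $|\sigma_k|$ counts served vertices in tour $k$; this sums to at most $n + L \le n + n = 2n$. The one subtlety is that an optimal recourse $r$-tour might visit vertices it does not serve (passing through them), but by the triangle inequality we can shortcut over any such vertex without increasing length, so WLOG each $r$-tour visits only the vertices it serves plus $r$.

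The step I expect to require the most care is the bound $L\le n$ on the number of recourse $r$-tours, paralleling Claim~\ref{cl:number-base-tour}: unlike the fixed-tour case, here the $r$-tours must respect the capacity constraint on a \emph{fixed} instantiated demand vector, so the merging argument has to be phrased carefully (one can only merge $r$-tours whose demands jointly fit in capacity $Q$), and then the diameter/vertex-numbering pairing argument is invoked to convert ``all pairs overfull'' into a bound on $L$. Once $L\le n$ is established, the edge count $\le 2n$ follows immediately from the incidence counting above, and the claim follows.

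\begin{proof}
Fix any scenario $\oq\in\D$ and consider an optimal recourse tour $\sigma$, viewed as a concatenation of $r$-tours $\sigma_1,\ldots,\sigma_L$ serving residual demands $\oq_B=\oq\setminus\oq_A$, where each $r$-tour $\sigma_k$ serves a subset $B_k$ of vertices with $\sum_{v\in B_k}\oq_v\le Q$, and $\{B_k\}_{k=1}^L$ partition $\mathrm{supp}(\oq_B)$. By the triangle inequality we may shortcut each $\sigma_k$ so that it is a simple cycle through $r$ visiting exactly $B_k$; this only decreases $d(\sigma)$.

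We first show WLOG $L\le n$. If some pair $\sigma_k,\sigma_{k'}$ satisfies $\sum_{v\in B_k\cup B_{k'}}\oq_v\le Q$, replace them by a single $r$-tour on $\{r\}\cup B_k\cup B_{k'}$; by triangle inequality the length does not increase, and capacity is respected. Repeating, we may assume every pair of $r$-tours has combined residual demand exceeding $Q$. Number the vertices so that $d(0,1)\le\cdots\le d(0,n)$ with $0=r$, and for $j\in[L]$ let $M(j)$ be the maximum-numbered vertex in $B_j$, so $d(\sigma_j)\ge 2\,d(0,M(j))$. Exactly as in the proof of Claim~\ref{cl:number-base-tour}, choose the minimum $k\in[n]$ with $|M^{-1}(\{k,\ldots,n\})|\le n-k$ (set $k=n+1$ if none exists), let $G=M^{-1}(\{k,\ldots,n\})$ so $|G|\le n-k+1$, and obtain an injection $\phi:[k-1]\to[L]\setminus G$ with $M(\phi(v))\ge v$, giving $2\sum_{v=1}^{k-1}d(0,v)\le\sum_{j\in[L]\setminus G}d(\sigma_j)$. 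Then the recourse tour consisting of the singleton $r$-tours $\langle 0,v,0\rangle$ for $v\in[k-1]$ together with $\{\sigma_j:j\in G\}$ serves all of $\oq_B$ (since every vertex lies in some $B_j$ with $M(j)\ge$ its number, so either its number is $\le k-1$ and it is covered by a singleton, or $j\in G$), respects capacity, and has length at most $d(\sigma)$. Its number of $r$-tours is $k-1+|G|\le n$.

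Now with $L\le n$, each vertex of $\mathrm{supp}(\oq_B)$ lies in exactly one $B_k$, so $\sum_{k=1}^L|B_k|\le n$. Since each $\sigma_k$ is a simple cycle on $\{r\}\cup B_k$, it uses exactly $|B_k|+1$ edges, so the total number of edges in the recourse tour is $\sum_{k=1}^L(|B_k|+1)\le n+L\le 2n$.
\end{proof}
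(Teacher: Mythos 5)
Your proof is correct and arrives at the same edge count via the same final tally ($\sum_k(|B_k|+1)\le n+L\le 2n$), but the route you take to the bound $L\le n$ is far heavier than what the situation requires, and heavier than what the paper does. The paper's proof is a two-sentence observation: a recourse tour is an unsplittable VRP solution on at most $n$ demands, so (after discarding empty $r$-tours) each $r$-tour serves at least one demand vertex and the served sets $B_k$ partition the at most $n$ demand vertices; hence $L\le n$ immediately, and shortcutting each $r$-tour to a simple cycle on $\{r\}\cup B_k$ gives at most $|B_k|+1$ edges each. Your detour through the merging step and the vertex-numbering pairing argument of Claim~\ref{cl:number-base-tour} is valid but unnecessary: that machinery exists in Claim~\ref{cl:number-base-tour} precisely because a \emph{fixed} tour may visit a vertex in many $r$-tours without ``serving'' it (service is decided only after demands instantiate), so there is no a priori bound on the number of $r$-tours and one must exhibit a cheaper replacement tour. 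In the recourse setting the assignment of demands to $r$-tours is part of the solution, so the partition structure hands you $L\le n$ for free. Note also that your merging preprocessing (``every pair of $r$-tours has combined demand exceeding $Q$'') is never actually used by the pairing argument that follows it, so it is vestigial even within your own proof. Nothing is wrong, but you should recognize when the unsplittable-service structure already gives the count directly.
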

\begin{proof}
Note that any recourse tour (under any outcome \oq) is a solution to some deterministic VRP instance. Since there are
at most $n$ demands, and we consider unsplittable routing, it is clear that the number of edges used is at most $2n$.
\end{proof}

So far we have shown $|X|\le 2^{n^2}$. Next we will show that $D=poly(n)$, which suffices to prove $m=poly(n,\lambda)$
in Theorem~\ref{thm:sampling}.

Assume (by enumeration) that the algorithm knows an upper bound $B$ on the optimal value of instance \J (up to factor
two), i.e. $B/2\le \OPT(\J)\le B$. Let $U\sse V$ denote the vertices at distance at most $B$ from $r$. Clearly, the
optimal fixed tour does not visit any vertex outside $U$ (otherwise it incurs cost larger than $2B$). So we may always
defer demands at $V\setminus U$ to the second stage (which is what \OPT does). And by using the $O(1)$-approximation
algorithm~\cite{AG87} for deterministic VRP to serve $V\setminus U$, the cost incurred by our algorithm on $V\setminus
U$ is at most $O(1)\cdot B$. Now we can focus on the \svrp instance restricted to vertices $U$.

Consider the following modification to the metric $(U,d)$ : contract all edges of length smaller than $B/n^3$ and let
$(W,\ell)$ denote the resulting metric of shortest-path distances. We consider the natural \svrp instance $\J''$ on
$(W,\ell)$ (where \D induces the demand distribution on $W$ as well). The useful property of metric $\ell$ is that it
has maximum distance $\le 2B$ and minimum distance $\ge B/n^3$; so by scaling we obtain that it has diameter at most
$O(n^3)$. Thus we can apply Theorem~\ref{thm:sampling} to instance $\J''$ and $m=poly(n,\lambda)$ samples would
suffice. The following lemma relates the $\J''$ to the original instance \J.
\begin{lemma}\label{lem:diameter}
The optimal value $\OPT(\J'')\le B$. Moreover, any solution to $\J''$ yields a solution to \J with at most a constant
factor increase in objective.
\end{lemma}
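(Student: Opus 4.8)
The plan is to establish the two assertions of Lemma~\ref{lem:diameter} separately, in both cases exploiting that $(W,\ell)$ is obtained from $(U,d)$ by contracting only very short edges, so that $\ell$ and $d$ are close to each other on the relevant scale. Fix an optimal solution $(\tau^*,\sigma^*(\cdot))$ to the restricted instance on $(U,d)$, whose value is at most $\OPT(\mathcal{J}) \le B$.

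For the first claim, $\OPT(\mathcal{J}'')\le B$, I would take the fixed tour $\tau^*$ and its recourse strategy and "push them forward" into $(W,\ell)$: every vertex of $U$ maps to its class in $W$, and every edge of $(U,d)$ maps to a path in $(W,\ell)$ of no greater length (contraction can only shorten distances, i.e. $\ell(\bar u,\bar v)\le d(u,v)$). Hence the image of $\tau^*$ is an $r$-tour in $(W,\ell)$ of length at most $d(\tau^*)$, organized into the same $F\le n$ $r$-tours, and for each scenario the same demand-assignment $\{S_{i,j}\}$ remains capacity-feasible (capacities are unchanged). Likewise each recourse $r$-tour $\sigma^*_{i,k}$ maps to an $r$-tour in $(W,\ell)$ of length at most $d(\sigma^*_{i,k})$. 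Since \D induces the same demand distribution on $W$, the resulting solution to $\mathcal{J}''$ has objective at most $d(\tau^*)+\lambda\cdot\Ex[d(\sigma^*(\oq))]=\OPT(\mathcal{J})\le B$.

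For the converse direction — lifting a solution of $\mathcal{J}''$ back to \J — I would go the other way: pick, once and for all, one representative vertex in $U$ for each class of $W$, so that a tour in $(W,\ell)$ is realized as a walk in $(U,d)$ visiting the representatives, at the cost of re-expanding each contracted super-edge. The key point is that each such expansion "reconnects" using only edges of original length $<B/n^3$, and any $r$-tour in $(W,\ell)$ uses at most $O(n)$ edges (Claims~\ref{cl:number-base-tour} and~\ref{cl:recourse-edges}), so the total additive overhead per $r$-tour is at most $O(n)\cdot(B/n^3)=O(B/n^2)$; summing over the $O(n)$ $r$-tours of the fixed tour gives an additive overhead of $O(B/n)$ on the first-stage cost, and similarly $O(B/n)$ on every recourse tour. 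The demand assignments carry over verbatim since capacities and demand values are untouched. Thus a solution to $\mathcal{J}''$ of value $v$ yields a solution to the $U$-restricted instance of value $v + O(B/n) + \lambda\cdot O(B/n)$; but this last bound is not yet $O(1)\cdot v$ unless $v = \Omega(\lambda B/n)$, so a small additional observation is needed — namely that $\OPT(\mathcal{J})\ge B/2$ (assumed), hence $v\ge \OPT(\mathcal{J}'')$ is comparable to $B$ only if we first argue $\OPT(\mathcal{J}'')=\Theta(B)$; more robustly, I would simply add the additive $O(B/n) + \lambda O(B/n)$ term to $B$, note it is $o(B)$, and recall we only need an $O(1)$-approximate solution to \J, so losing an additive $O(B/n)\ll B$ is harmless given $B/2\le\OPT(\mathcal{J})$. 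Finally, re-append the separately-handled $O(1)\cdot B$-cost solution on $V\setminus U$ to obtain a solution to all of \J with objective $O(1)\cdot\OPT(\mathcal{J})$.

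The main obstacle I anticipate is the bookkeeping in the reverse (lifting) direction: one must be careful that re-expanding contracted edges inside a tour keeps it a valid $r$-tour (a closed walk through $r$) and that the per-tour edge count of $O(n)$ genuinely bounds the number of contracted super-edges that need expanding, and then that multiplying the tiny per-edge cost $B/n^3$ by $O(n)$ edges and by $O(n)$ tours lands comfortably below $B$. None of this is deep, but it is where the constant-factor claim actually gets its slack, so I would state the $O(n)$ edge bounds explicitly and track the geometric-series-free (just additive) accumulation carefully rather than hand-waving it.
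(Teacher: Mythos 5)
Your first claim (that $\OPT(\mathcal{J}'')\le B$) is correct and matches the paper: contraction only shortens distances, so the optimal solution to $\mathcal{J}$ pushes forward with no cost increase. Your concerns start to matter in the lifting direction, and there are two substantive gaps there relative to the paper's argument.

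\textbf{Representatives are not enough.} You propose picking one representative per super-vertex of $W$ and realizing the $\mathcal{J}''$-tour as a walk through representatives. But the demands of $\mathcal{J}$ live at \emph{all} vertices of a cluster $U_w$, not just the chosen representative, and the $\mathcal{J}''$-solution may be serving demands at any of them. A tour through representatives alone does not serve the other cluster vertices and so is not a valid solution to $\mathcal{J}$. The paper instead detours through the entire cluster: whenever $w\in W$ is visited, it traverses an Euler tour of $\mathrm{MST}_d(U_w)$ (whose edges all have length $< B/n^3$), at cost at most $2|U_w|\cdot B/n^3 = O(B/n^2)$ per visit. This is the step your construction is missing.

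\textbf{Per-edge overhead is $B/n^2$, not $B/n^3$.} You bound the per-$r$-tour overhead as $O(n)\cdot (B/n^3)=O(B/n^2)$, treating each traversed super-edge as costing one extra $B/n^3$. But a single super-edge of $(W,\ell)$ un-contracts to a path in $(U,d)$ that may pass through up to $n$ contracted edges, so its overhead is up to $n\cdot B/n^3 = B/n^2$. With $O(n)$ super-edges per $r$-tour and $O(n)$ $r$-tours (Claim~\ref{cl:number-base-tour}: at most $n^2$ edges total in the fixed tour), the first-stage overhead is $O(n^2)\cdot O(B/n^2)=O(B)$, not $O(B/n)$. This still proves the lemma --- we only need an $O(1)\cdot\OPT(\mathcal{J}) = O(B)$ increase, not an $o(B)$ increase --- but your subsequent reasoning ``note it is $o(B)$'' is built on the wrong arithmetic and is not actually what makes the bound go through. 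Relatedly, your claim that the second-stage contribution $\lambda\cdot O(B/n)$ is $o(B)$ requires $\lambda = o(n)$, which is not given; the paper does not make that claim and simply accounts the total increase as $O(B)$, which is the bound you should be targeting rather than trying to show the overhead vanishes.
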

\begin{proof}
The first part of the claim is trivial since $\J''$ is obtained from \J by contracting the metric: so $\OPT(\J'')\le
\OPT(\J)\le B$. For the other direction, consider any solution to $\J''$:
we now describe the solution corresponding to this in \J. Note that each vertex $w\in W$ corresponds to some subset
$U_w\sse U$ such that there is a spanning tree on $U_w$  in metric $d$ with each edge of length at most $B/n^3$. So
whenever vertex $w\in W$ is visited in $\J''$, we will visit all vertices of $U_w$ along an Euler tour of $MST_d(U_w)$:
this results in a cost increase of at most $2|U_w|\cdot B/n^3=O(B/n^2)$. Note also that each edge $e$ in metric
$(W,\ell)$ corresponds to some path in metric $(U,d)$ of length at most $\ell_e + n\cdot B/n^3$: so each edge traversal
causes a cost increase of at most $B/n^2$. By Claim~\ref{cl:number-base-tour} the cost increase in the fixed tour is at
most $n^2\cdot O(B/n^2)=O(B)$. Similarly, using Claim~\ref{cl:recourse-edges} the cost increase in the recourse tour
(under {\em any} outcome) is at most $2n \cdot O(B/n^2)\le O(B)$; so the increase in the expected cost of the recourse
tour is also $O(B)$. Thus any $\J''$-solution corresponds to a $\J$-solution where the increase in objective is at most
$O(B)=O(1)\cdot \OPT(\J)$.
\end{proof}

Algorithm~\ref{alg:svrp-blackbox} summarizes the \svrp algorithm.
\begin{algorithm}[h] \caption{Algorithm for \svrp under black-box distributions\label{alg:svrp-blackbox} }
{\bf Input:} \svrp instance $\J = \langle (V,d),\, r,\, Q,\, \lambda,\, \D\rangle$.

\begin{algorithmic}[1]
\STATE Guess (by enumeration) value $B$ such that $B/2\le \OPT(\J)\le B$.
\STATE Restrict instance to vertices $U=\{v\in V : d(r,v)\le B\}$. Vertices $V\setminus U$ are handled separately,
always in the recourse tour (costs $O(B)$ in expectation).
\STATE Modify metric $(U,d)$ to $(W,\ell)$ by contracting edges shorter than $B/n^3$ and recomputing shortest paths. By
scaling, $D=diameter(W,\ell)\le O(n^3)$. $\J''$ is the induced \svrp instance on $(W,\ell)$.
\STATE Apply Theorem~\ref{thm:sampling} to $\J''$ to obtain a (random) instance $\J'$ of \svrp with explicit scenarios
and $m=poly(n,\lambda)$.
\STATE Obtain fixed tour $\tau$ for $\J'$ using the algorithm in Section~\ref{sec:explicit}. Theorem~\ref{thm:sampling}
implies that this is a fixed tour for $\J''$ with increase in objective being at most one w.h.p.
\STATE By Lemma~\ref{lem:diameter}, $\tau$ is also a fixed tour for \J.
\STATE {\bf Output} the fixed tour  for \J containing  {\em five} copies of each $r$-tour in $\tau$.
\STATE {\bf Output} recourse action (for any \oq) as $\aug(\tau,\oq)$ given in Algorithm~\ref{alg:recourse}.
\end{algorithmic}
\end{algorithm}

\subsection{Specifying Recourse Actions}\label{subsec:recourse}
The recourse strategy involves the following {\em outlier VRP} problem: given a fixed tour $\tau$ (as collection
$\{\tau_1,\ldots,\tau_F\}$ of $r$-tours) and outcome $\oq\in \{0,\ldots,Q\}^V$, find
\begin{itemize}
\item a subset of vertices whose demands $\oq_A\sse \oq$ can be served by the existing route $\tau$, subject to the capacity constraint
of $Q$ on its $r$-tours; {\em and }
\item a minimum cost VRP solution to the residual demands $\oq-\oq_A$.
\end{itemize}
The optimal value of this instance is exactly function $h(\tau,\oq)$ defined Subsection~\ref{app:redn-explicit}. We
remark that when the capacity $Q=1$, the outlier VRP problem can be solved exactly using a minimum cost flow
formulation. When the fixed tour $\tau=\emptyset$ we obtain the usual VRP, which is NP-hard for $Q\ge 3$. Another
special case of outlier VRP is the {\em restricted assignment} problem~\cite{LST90}. This occurs when $V$ denotes the
set of jobs with sizes $\oq$, there are $F$ machines, and potential job-machine assignments are given by $\tau$ (job
$v$ can be assigned machine $j$ iff $v\in \tau_j$); there is an assignment of makespan $Q$ iff the outlier VRP optimum
is zero. So it is NP-hard to obtain any true approximation ratio for outlier VRP.
Instead we give an $(O(1),O(1))$ {\em bicriteria} approximation algorithm, which suffices to obtain an algorithm for
\svrp with only constant-factor increase over Theorem~\ref{thm:explicit}.

The algorithm is based on a natural LP relaxation to outlier VRP. Consider a solution with $S\sse V$ as the vertices
chosen to be served by $\tau$. Then:
\begin{itemize}
\item There is an assignment $\phi:S\rightarrow [F]$ such that (1) $v\in \tau_{\phi(v)}$ for all $v\in S$;
and (2) for each $r$-tour $j\in[F]$, the total demand assigned to it $\sum_{v\in \phi^{-1}(j)} \oq_v \le Q$.
\item The objective value is the optimum VRP on metric $(V,\, d)$, depot $r$,
capacity $Q$ and demands $\{\oq_v : v\in V\setminus S\}$. Using known lower-bounds for VRP~\cite{HK85,AG87}, at the
loss of a constant factor, this is just $\mst(V\setminus S) + \flow(V\setminus S)$ where for any $T\sse V$, $\mst(T)=$
length of minimum spanning tree on $\{r\}\bigcup T$, and $\flow(T):=\frac1Q \sum_{v\in T} \oq_v\cdot d(r,v)$.
\end{itemize}

Thus we can write the following integer programming formulation for outlier VRP, at the loss of $O(1)$-factor.
\begin{alignat}{10}
&\min \quad \sum_{e\in E} d_e\cdot z_e \,\, + \,\, \frac1Q\sum_{v\in V} d(r,v)\cdot \oq_v\cdot (1-x_v) \label{eq:lp-obj}\\
&\mbox{s.t.} \quad \sum_{v\in \tau_j} \oq_v\cdot y_{v,j} \leq Q  \quad \qquad  \forall j\in[F], \label{eq:lp-capacity}\\
&\quad \quad \sum_{j\in[F] : v\in\tau_j} y_{v,j} = x_v \quad \quad \quad \forall v\in V, \label{eq:lp-assign}\\
&\quad \sum_{e\in \delta(U)} z_e \ge 1-x_v \quad \quad \quad \forall U\not\ni r,\,\, \forall v\in U, \label{eq:lp-cut}\\
&\quad x_v, y_{v,j} \in \{0,1\} \qquad \qquad  \forall v\in V,\,\, \forall j\in[F], \notag \\
&\quad z_e\ge 0 \qquad \qquad \qquad  \quad \forall e\in E.\notag
\end{alignat}
Above $x_v$ is one iff $v\in S$, i.e. served by $\tau$. Variables $y_{v,j}$ denote the assignment $\phi:S\rightarrow
[F]$. Constraint~\eqref{eq:lp-assign} ensures that each $v\in S$ is assigned to some $\phi(v)$ such that $v\in
\tau_{\phi(v)}$. Constraint~\eqref{eq:lp-capacity} enforces the total assignment to each $r$-tour is at most $Q$. Also
$E={V \choose 2}$ denotes the edge-set of the metric, and for any $U\sse V$, $\delta(U)$ denotes the edges with exactly
one vertex in $U$. Constraint~\eqref{eq:lp-cut} says that $\{z_e:e\in E\}$ is a fractional spanning tree connecting the
vertices $\{v: x_v=0\}=V\setminus S$ to $r$. In the objective~\eqref{eq:lp-obj}, the first term is the length of the
fractional spanning tree (corresponding to $\mst(V\setminus S)$), and the second term is $\flow(V\setminus S)$.
Dropping the integrality gives us an LP relaxation $\lp(\tau,\oq)$. We can solve this LP in polynomial time, and next
we describe a rounding algorithm.
\begin{algorithm}[h] \caption{Algorithm for computing recourse action $\aug(\tau,\oq)$.\label{alg:recourse} }
{\bf Input:} $(V, d)$, $r$, capacity $Q$, fixed tour $\tau=\{\tau_1,\ldots ,\tau_F\}$, outcome $\oq$.
\begin{algorithmic}[1]
\STATE Let $(x,y,z)$ denote the optimal LP solution.
\STATE Set $S\gets \{v\in V: x_v\ge \frac12\}$.
\STATE \label{step:restr-assign} Consider the following instance of {\em restricted assignment}
$$\begin{array}{ll}
\sum_{v\in \tau_j} \oq_v\cdot \alpha_{v,j} \,\, \leq \,\, 2\cdot Q  & \quad \forall j\in[F], \\
\sum_{j\in[F] : v\in\tau_j} \alpha_{v,j} \,\, = \,\, 1 & \quad \forall v\in S, \\
0\,\, \le \,\, \alpha_{v,j} \,\, \le \,\, 1& \quad \forall v\in S,\,\, \forall j\in[F]. \\
\end{array}$$
\STATE By the rounding algorithm in~\cite{LST90}
we can obtain an integral assignment
$\phi:S\rightarrow [F]$ such that $v\in\tau_{\phi(v)}$ for all $v\in S$ and $\oq\left(\phi^{-1}(j)\right)\le 3Q$ for
all $j\in[F]$.
\STATE For each $j\in[F]$, partition $\phi^{-1}(j)=\bigsqcup_{l=1}^5 S_{j,l}$ into at most five parts such that
$\{q(S_{j,l})\le Q\}_{l=1}^5$. This can be done by a greedy algorithm.
\STATE {\bf Output} for each $j\in[F]$ and $l\in\{1,\ldots,5\}$, vertices $S_{j,l}$ as served by
$\tau_j$.
\STATE {\bf Output} an $O(1)$-approximate VRP solution~\cite{AG87} on vertices $V\setminus S$ as recourse tour.
\end{algorithmic}
\end{algorithm}

Observe that the solution from Algorithm~\ref{alg:recourse} uses {\em five} copies of the fixed tour $\tau$, whereas we
will bound the cost against $\lp(\tau,\oq)$.

First we show that our assignment $S$ to the fixed tour is indeed feasible (using $5$ copies). It is clear that setting
$\alpha_{v,j}=2\,y_{v,j}$ for $v\in S,\,j\in[F]$ gives a feasible fractional solution to the restricted assignment
instance in Step~\ref{step:restr-assign} of Algorithm~\ref{alg:recourse}: this follows from
constraints~\eqref{eq:lp-capacity} and~\eqref{eq:lp-assign} using $\{x_v\ge \frac12\}_{v\in S}$. Thus the rounding
algorithm from~\cite{LST90} can be employed on $\alpha$ to obtain an integral solution $\phi$ having load at most
$2Q+\max_v \oq_v\le 3Q$ for each $j\in[F]$. Then for each $j\in[F]$, we partition $\phi^{-1}(j)$ starting with the
trivial partition into singletons and greedily merging parts as long as each part $\le Q$: this results in at most $5$
parts.
Thus vertices $S$ can be feasibly assigned to $5\,\tau$.

Next we bound the cost of our recourse tour by $O(1)\cdot \lp(\tau,\oq)$. Observe that $x_v<\frac12$ for each $v\in
V\setminus S$: so constraint~\eqref{eq:lp-cut} implies that $2\cdot z$ is a fractional spanning tree on $\{r\}\cup
(V\setminus S)$. Hence $\mst(V\setminus S)\le 4\,(\mathbf{d}\cdot\mathbf{z})$. Moreover, it is clear that
$\flow(V\setminus S)\le 2\cdot \frac1Q\sum_{v\in V} d(r,v)\cdot \oq_v\cdot (1-x_v)$. Thus the LP objective:
$$\lp(\tau,\oq)\quad \ge \quad \frac14\cdot \mst(V\setminus S) + \frac12\cdot \flow(V\setminus S).$$
Since the VRP algorithm (on demands $V\setminus S$) achieves a constant approximation relative to these lower bounds,
it follows that the recourse cost is $O(1)\cdot \lp(\tau,\oq)$.
\begin{theorem}\label{thm:recourse}
There is an $(O(1),\,5)$-bicriteria approximation algorithm for outlier VRP, that uses the fixed tour at most five
times.
\end{theorem}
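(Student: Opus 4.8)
The plan is to round the LP relaxation $\lp(\tau,\oq)$ written above. First one observes that its optimum is within an $O(1)$ factor of the true outlier-VRP optimum $h(\tau,\oq)$: any VRP tour on the residual demands decomposes into a spanning-tree part and a flow part, and by the known VRP lower bounds~\cite{HK85,AG87} both $\mst(\cdot)$ and $\flow(\cdot)$ are captured up to constants, so the integer program models outlier VRP at an $O(1)$ loss, and hence $\lp(\tau,\oq)\le O(1)\cdot h(\tau,\oq)$. Since $\lp(\tau,\oq)$ can be solved optimally in polynomial time (the cut constraints~\eqref{eq:lp-cut} admit a separation oracle via min-cut), it suffices to round the optimal solution $(x,y,z)$.

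The rounding is the threshold $S:=\{v: x_v\ge \tfrac12\}$, as in Algorithm~\ref{alg:recourse}, and there are two things to verify: (i) $S$ can be packed onto five copies of $\tau$ respecting capacity $Q$, and (ii) the residual VRP on $V\setminus S$ costs $O(1)\cdot\lp(\tau,\oq)$. For (i), setting $\alpha_{v,j}:=2y_{v,j}$ gives a feasible fractional restricted-assignment solution with machine loads $\le 2Q$, using constraint~\eqref{eq:lp-capacity} together with $x_v\ge\tfrac12$ on $S$ via~\eqref{eq:lp-assign}; the rounding of Lenstra--Shmoys--Tardos~\cite{LST90} then yields an integral $\phi:S\to[F]$ with load $\le 2Q+\max_v\oq_v\le 3Q$ on each $r$-tour; finally greedily splitting each $\phi^{-1}(j)$ — starting from singletons and merging parts while each stays $\le Q$ — produces at most five sub-parts of demand $\le Q$ each, so five copies of each $r$-tour of $\tau$ suffice. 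For (ii), since $x_v<\tfrac12$ for $v\notin S$, constraint~\eqref{eq:lp-cut} shows that $2z$ is a fractional spanning tree on $\{r\}\cup(V\setminus S)$, so $\mst(V\setminus S)\le 4(\mathbf{d}\cdot\mathbf{z})$; likewise $\flow(V\setminus S)\le 2\cdot\tfrac1Q\sum_v d(r,v)\,\oq_v(1-x_v)$, whence $\lp(\tau,\oq)\ge \tfrac14\mst(V\setminus S)+\tfrac12\flow(V\setminus S)$. Running the $O(1)$-approximate VRP algorithm~\cite{AG87} on $V\setminus S$ relative to these lower bounds gives a recourse tour of cost $O(1)\cdot\lp(\tau,\oq)\le O(1)\cdot h(\tau,\oq)$.

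Combining the two parts, Algorithm~\ref{alg:recourse} outputs an assignment of $S$ to $5\tau$ and a recourse tour of cost $O(1)\cdot h(\tau,\oq)$, i.e. exactly the claimed $(O(1),5)$-bicriteria guarantee. The step I expect to require the most care is (i): the capacity constraint is the only genuinely hard feature of outlier VRP — it contains restricted assignment, hence admits no true approximation — so relaxing to several copies of $\tau$ is unavoidable, and one must check that the two successive losses (the factor-$2$ scaling feeding the LST rounding, then the greedy re-splitting) compose to a clean "five copies" bound rather than something larger. The remaining ingredients — the LP-to-VRP reduction and the tree/flow lower-bound accounting — are routine constant-factor bookkeeping.
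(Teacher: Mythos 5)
Your proposal is correct and follows essentially the same route as the paper: the threshold $S=\{v:x_v\ge\tfrac12\}$, the $\alpha=2y$ scaling fed into the Lenstra--Shmoys--Tardos rounding to get loads at most $3Q$, the greedy re-splitting into at most five parts, and the $\mst$/$\flow$ lower-bound accounting against $\lp(\tau,\oq)$ are exactly the steps in Algorithm~\ref{alg:recourse} and its analysis. No gaps.
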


\section{Algorithm for Ratio Knapsack Rank-function Orienteering }\label{app:kro}
In this section we give an improved result for the ratio version of submodular orienteering, when the objective is a
sum of ``knapsack rank-functions''. This can be used as a subroutine for \svrp to yield Theorem~\ref{thm:algo}.

An instance of the {\em knapsack rank-function orienteering} problem (\kro) consists of metric $(V,d)$ with root $r$
and length bound $B$. The objective is a sum of $m$ knapsack rank-functions $f_1,\ldots,f_m:2^V \rightarrow
\mathbb{R}_+$. For a solution visiting vertices $U$, the objective value is $\sum_{i=1}^m f_i(U)$. The goal is to find
an $r$-tour of length at most $B$ having maximum objective. Each knapsack rank-function $f_i$ is:
$$f_i(U) \quad := \quad \max\left\{ \,\sum_{v\in S} w^i_v \,\,\, :\,\,\, S\sse U,\,\, \sum_{v\in S} c^i_v \le 1\,\right\},\quad \forall \,U\sse V$$
Above $w^i:V\rightarrow \mathbb{R}_+$ and $c^i:V\rightarrow [0,1]$ denote profits and sizes at the vertices; so
$f_i(U)$ is the maximum profit in any subset of $U$ having size at most one. (Although a knapsack rank-function may not
be submodular, it can always be approximated within factor two by a submodular function, as discussed in the end of
Section~\ref{sec:explicit}.)

Here we consider the ratio version of this problem, called {\em ratio \kro}, where given metric $(V,d)$ with root $r$
and knapsack rank-functions $f_1,\ldots,f_m$, the goal is:
$$\max\,\,\left\{ \,\,\frac{\sum_{i=1}^m f_i(V(\tau))}{d(\tau)} \,: \, \tau \mbox{ is an $r$-tour }\right\}$$
Above $V(\tau)\sse V$ are the vertices visited by $\tau$ and $d(\tau)$ is its length.

Note that the max-coverage problem for first stage sets corresponds exactly to ratio \kro. Here we obtain an $O(\log^2
n)$-approximation algorithm for ratio \kro, which combined with the algorithm in Section~\ref{sec:explicit} implies
Theorem~\ref{thm:algo}.

\smallskip
\noindent
{\bf Related Work.} \kro is closely related to the group Steiner tree problem, where given metric $(V,d)$ with root $r$
and $m$ groups $G_1,\ldots,G_m\sse V$ of vertices, the goal is to find a minimum length tree connecting $r$ to at least
one vertex of each group. The best known approximation ratio for this problem is $O(\log^2n\cdot\log m)$~\cite{GKR00}.
Formally, ratio \kro generalizes the ``density group Steiner'' problem~\cite{CCGG98}, which involves finding an
$r$-tour maximizing the ratio of number of groups covered to its length: setting $w^i_v=c^i_v=\textbf{1}[v \in G_i]$ in
ratio \kro gives us density group Steiner.
Our algorithm builds on~\cite{CCGG98}; however the previous algorithm is not directly applicable since the natural LP
relaxation to ratio \kro seems weak. We strengthen the LP relaxation for \kro by adding extra constraints (see
constraints~\eqref{LP:kro4} below), that are motivated from the related {\em covering} Steiner tree
problem~\cite{KRS02}.
Moreover, as with all LP-based algorithms for group Steiner type problems, the main rounding step is the dependent
randomized rounding (called GKR rounding below) from~\cite{GKR00}. Even with a good LP relaxation and rounding step,
previously used analysis such as~\cite{GKR00,KRS02,GS06} is inadequate for bounding the profit in ratio \kro as shown
next.

\noindent
{\em Example 1:} Consider a star-like tree with a single edge $(r,u)$ from the root and $t$ other edges
$(u,v_1),\ldots,(u,v_t)$, all of unit length. There is a single knapsack with zero sizes ($\mathbf{c}=\mathbf{0}$), and
profits $\mathbf{w}$  of one at each $V'=\{v_1,\ldots,v_t\}$ (and zero at $r,u$). Suppose the LP solution sets value
$x=1/t^2$ for all edges. The fractional profit is $\mu=1/t$. The analysis in all of~\cite{GKR00,KRS02,GS06} attempts to
upper bound:
\begin{equation}\label{eq:gkr-prob} \Pr[\mbox{number of $V'$-vertices chosen in GKR rounding } < \mu/2].\end{equation}
In this example, the solution (from GKR rounding) is the entire tree with probability $\frac1{t^2}$, and empty
otherwise. So the probability~\eqref{eq:gkr-prob} is $1-\frac1{t^2}$, which by itself would only imply an expected
profit of $1/t^2 \ll \mu$ (although the actual expected profit is $\frac1t$). Such an analysis is sufficient when
$\mu=\Omega(1)$ as in~\cite{GKR00,KRS02,GS06}, but we are not guaranteed this in ratio \kro.

Instead of using a bound on the probability~\eqref{eq:gkr-prob}, we directly lower bound the {\em expected} profit
using a different analysis. In particular our main idea is to use an {\em alteration} step after GKR rounding (see
Lemma~\ref{lem:kro-main}). While alteration has been used with LP-rounding before, eg.~\cite{Srinivasan01}, we are not
aware of an application in context of the group Steiner tree problem; moreover we only use alteration in our analysis
and not in the algorithm.

In the next Subsection~\ref{subsec:kro-lp}, we present the LP relaxation that we use. In
Subsection~\ref{subsec:kro-expectation} we show that the GKR rounding ensures (i) high expected profit and (ii) low
expected length (individually). Then in Subsection~\ref{subsec:kro-derand} we use the derandomization of GKR
rounding~\cite{CCGG98}, and show that it leads to a single (deterministic) solution having a high profit/length ratio.
Altogether we obtain an $O(\log^2n)$-approximation algorithm for ratio \kro.

\subsection{LP relaxation}\label{subsec:kro-lp}
At the loss of an $O(\log n)$ factor in the approximation ratio, we can assume that the metric is a tree $T$ (with edge set $E(T)$) rooted at
$r$ having $\ell=O(\log n)$ levels~\cite{FRT04}. We also enumerate over all choices of the length $B$ of the optimal
ratio \kro solution.\footnote{It suffices to know the length up to a constant factor; so there are only polynomially
many choices.} Then we use an LP relaxation similar to the LP for group Steiner tree~\cite{GKR00}.

First some notation: For any edge $e$ in $T$, we denote by $\pi(e)$ its parent edge. Similarly $\pi(v)$ is the parent
edge of any vertex $v\in V$. For root edges $e$, the $x_{\pi(e)}$ values are
fixed to 1, since the root is always part of the solution. For any $e\in E(T)$, the subtree below edge $e$ is denoted $T_e$.
\begin{alignat}{10}
LP(B)\quad=\qquad &\max \quad \sum_{i=1}^m \sum_{v\in V} w^i_v\cdot z^i_v \\
&\mbox{s.t.} \qquad x_{\pi(e)} \ge x_e, \qquad \forall \, e\in T  \label{LP:kro1}\\
&\qquad z^i_v \le x_{\pi(v)}, \qquad \forall \, v\in V,\, i\in[m]\label{LP:kro2}\\
&\qquad \sum_{v\in V} c^i_v\cdot z^i_v \le 1, \qquad \forall \, i\in[m]\label{LP:kro3}\\
&\qquad \sum_{v\in V(T_e)} c^i_v\cdot z^i_v \le x_e, \qquad \forall \, e\in T,\, i\in[m]\label{LP:kro4}\\
&\qquad  \sum_{e\in E(T)} d_e\cdot x_e \le \frac{B}2 \label{LP:kro5}\\
&\qquad  \textbf{0}\le \textbf{x},\textbf{z}\le \textbf{1} \notag
\end{alignat}
Let us show that restricting $x$ and $z$ to integer values gives a valid formulation of \kro. In the intended integral
solution, $x_e$ is an indicator denoting whether/not edge $e$ is chosen. Constraints~\eqref{LP:kro1} ensure
monotonicity, that the solution is a subtree rooted at $r$. Constraints~\eqref{LP:kro5} bound the length of the subtree
by $B/2$ (so the corresponding Euler tour has length at most $B$ as required). Vertex $v$ is visited by the solution
iff $x_{\pi(v)}=1$; let $U = \{v\in V\,:\, x_{\pi(v)}=1\}$ denote the vertices visited. For each knapsack rank function
$i\in [m]$, variables $z^i$ denote its maximum profit subset $S_i= \{v\in V\,:\, z^i_{v}=1\}$. By~\eqref{LP:kro2},
$S_i\sse U$ as required. Moreover~\eqref{LP:kro3} ensures that the total size of $S_i$ (in the $i^{th}$ knapsack) is at
most one. Finally, the objective~\eqref{LP:kro1} is the sum of profits from each knapsack.

Although we do not need constraint~\eqref{LP:kro4} to show a valid integer programming formulation, it is crucial in
the rounding step. A similar constraint was used in~\cite{KRS02} for the related covering Steiner problem. Notice that
it indeed holds for integral solutions, so the resulting LP is a valid relaxation of \kro. For a fractional solution,
\eqref{LP:kro4} says that even {\em conditional} on edge $e$ being chosen, the total size (in knapsack $i$) from
subtree $T_e$ is at most one.

\smallskip
\noindent
{\bf Algorithm Overview.} For each estimate $B$, we solve the above $LP(B)$ and apply the deterministic rounding
algorithm in Subsection~\ref{subsec:kro-derand}, which guarantees a solution having profit/length ratio at least
$\Omega(\frac1\ell)\cdot \frac{LP(B)}{B}$. Finally we output the best ratio solution amongst all $B$s. Note, if $B^*$
denotes the length of the optimal ratio \kro solution then $LP(B^*)/B^*$ is at least the optimum ratio. So with
$B\approx B^*$ we obtain an $O(\ell)$-approximation to ratio \kro.

\subsection{Expectation guarantee in \kro} \label{subsec:kro-expectation}
Here we show that the natural GKR rounding step produces a solution having expected length at most $B$ and expected
profit at least $\Omega(LP(B)/\ell)$. Note that this {\em does not} bound the expectation of profit/length. Still, this
property is used by the algorithm in the next subsection to produce a deterministic solution to ratio \kro having value
$\Omega(\frac1\ell)\cdot \frac{LP(B)}{B}$.

\begin{algorithm}[h] \caption{LP rounding for \kro \label{alg:kro} }
\begin{algorithmic}[1]
\STATE Solve the LP relaxation $LP(B)$ for \kro to obtain $(x,z)$.
\STATE \label{kro-step2} Perform the (dependent) rounding from~\cite{GKR00}, i.e. choose each edge $e\in E(T)$
independently with probability $\frac{x_e}{x_{\pi(e)}}$ and retain only subtree $F\sse T$ connected to $r$.

\FOR{$i\in[m]$}
\STATE \label{kro-step3} For each vertex $v\in V(F)$ choose $v$ into $S_i$ independently with probability
$\frac{z^i_v}{x_{\pi(v)}}$.
\STATE \label{kro-step4} If $c^i(S_i)> 4\ell$ then set $R_i\gets \emptyset$, else $R_i\gets S_i$.
\ENDFOR
\STATE {\bf output} $r$-tour corresponding to $F$.
\end{algorithmic}
\end{algorithm}

In the above algorithm we assume that if $e$ is a root edge then $x_{\pi(e)} = 1$,
since the root is always a part of the solution.
Steps~\ref{kro-step2} and~\ref{kro-step3} are the GKR rounding, and Step~\ref{kro-step4} is the alteration step. It is
clear that in Step~\ref{kro-step2} we have:  $\Pr[e\in F]=x_e$ for each edge $e$, and so $\Pr[v\in F]=x_{\pi(v)}$ for
each vertex $v$. Note that the expected length of $F$ is:
\begin{equation} \label{eq:kro-exp-length} \Ex\left[\sum_{f\in F} d_f\right] \quad = \quad \sum_e d_e\cdot x_e \quad \le \quad \frac{B}2\end{equation}
So taking an Euler tour of $F$, the expected solution length is at most $B$.
It remains to bound the expected profit. Notice that for each knapsack $i\in[m]$, we have $R_i\sse F$ and $c^i(R_i)\le
4\ell$ with probability one, from Step~\ref{kro-step4}. So a greedy partitioning of $R_i$ yields $8\ell$ parts each of
size at most one; and by averaging some part has profit at least $w^i(R_i)/(8\ell)$. Thus:
\begin{equation}\label{eq:kro-exp-profit1}
\Ex \left[\sum_{i=1}^m f_i(F) \right] \quad \ge \quad \frac1{8\ell} \sum_{i=1}^m  \Ex\left[ w^i\left( R_i \right)
\right] \quad \ge \quad \frac1{8\ell}\cdot \sum_{i=1}^m \sum_v w^i_v\cdot \Pr[v\in R_i]\end{equation}

We bound $\Pr[v\in R_i]$ in Lemma~\ref{lem:kro-main} below. Before doing that, we introduce some notation that will
also be useful in the subsequent derandomization step. For any $i\in [m]$ and $u,v\in V$ let $I^i_v$ denote the {\em
indicator} of the event ``$v\in S_i$''; and let $I^i_{u,v}$ denote the indicator of event ``$u\in S_i$ {\em and} $v\in
S_i$''. Also for $i\in [m]$ and $v\in V$ let $J^i_v$ indicate whether ``$v\in R_i$''. Due to Step~\ref{kro-step4} it is
clear that
\begin{equation}\label{eq:indicator-J}
J^i_v \quad \ge \quad I^i_v - \frac1{4\ell}\,\sum_{u\in V} c^i_u\cdot I^i_{u,v}\,,\qquad \forall i\in[m] \mbox{ and
}v\in V.
\end{equation}

\begin{lemma}\label{lem:kro-main}
For each $i\in[m]$ and $v\in V$, $$\Pr\left[ v\in R_i\right] \quad =\quad \Ex[J^i_v]   \quad\ge \quad \Ex[I^i_v] -
\frac1{4\ell}\,\sum_{u\in V} c^i_u\cdot \Ex[I^i_{u,v}] \quad\ge \quad\frac{z^i_v}{4}.$$
\end{lemma}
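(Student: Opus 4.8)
The plan is to compute the relevant probabilities exactly from the GKR rounding and then invoke~\eqref{eq:indicator-J}. The first equality $\Pr[v\in R_i] = \Ex[J^i_v]$ is immediate from the definition of $J^i_v$, and the second inequality $\Ex[J^i_v] \ge \Ex[I^i_v] - \frac1{4\ell}\sum_{u} c^i_u\,\Ex[I^i_{u,v}]$ follows by taking expectations in~\eqref{eq:indicator-J} and using linearity. So the entire content of the lemma is the final inequality $\Ex[I^i_v] - \frac1{4\ell}\sum_{u} c^i_u\,\Ex[I^i_{u,v}] \ge z^i_v/4$. First I would record the single-vertex marginal: in Step~\ref{kro-step2} each edge $e$ survives with probability $x_e$, so $\Pr[v\in V(F)] = x_{\pi(v)}$, and then in Step~\ref{kro-step3} vertex $v$ is added to $S_i$ with conditional probability $z^i_v/x_{\pi(v)}$; hence $\Ex[I^i_v] = \Pr[v\in S_i] = z^i_v$. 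That already gives the ``main term'' its intended value $z^i_v$, so it remains to show the correction term $\frac1{4\ell}\sum_u c^i_u\,\Ex[I^i_{u,v}]$ is at most $\frac{3}{4}z^i_v$, for which it suffices to prove $\sum_u c^i_u\,\Ex[I^i_{u,v}] \le 3\ell\, z^i_v$ (actually anything $\le 3\ell z^i_v$ works; I expect the clean bound to be $\le \ell\, z^i_v$, leaving slack).

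The heart of the argument is bounding the pairwise probability $\Ex[I^i_{u,v}] = \Pr[u\in S_i \text{ and } v\in S_i]$. Let $w = w(u,v)$ be the least common ancestor (edge) of $u$ and $v$ in the tree $T$. Along the shared path from the root down to $w$, the two events ``$u$'s ancestor edges all survive'' and ``$v$'s ancestor edges all survive'' coincide; below $w$ they use disjoint sets of independently-chosen edges, and the membership-selection coins in Step~\ref{kro-step3} are independent across vertices. A standard GKR-type telescoping computation then gives
\[
\Pr[u\in S_i \text{ and } v\in S_i] \;=\; \frac{z^i_u \cdot z^i_v}{x_w},
\]
where $x_w$ is the $x$-value of the least-common-ancestor edge (with the convention $x_w=1$ when $w$ is a root edge). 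This is exactly the place where the strengthened constraint~\eqref{LP:kro4} earns its keep: summing over $u$ whose LCA with $v$ is a fixed edge $e$ on the root-to-$v$ path, we get $\sum_{u : w(u,v)=e} c^i_u\, z^i_u \le \sum_{u\in V(T_e)} c^i_u z^i_u \le x_e$ by~\eqref{LP:kro4}, so the contribution of level-$e$ vertices to $\sum_u c^i_u\,\Ex[I^i_{u,v}]$ is at most $\frac{z^i_v}{x_e}\cdot x_e = z^i_v$. (One has to also handle $u$ on the root-to-$v$ path itself and $u=v$, using $c^i_u \le 1$ and $z^i_u \le x_{\pi(u)}$; these contribute another $O(\ell)\cdot z^i_v$ at worst, or can be folded into the same per-level accounting.) Since $v$ has at most $\ell$ ancestor edges, summing over the levels yields $\sum_u c^i_u\,\Ex[I^i_{u,v}] \le \ell\, z^i_v$, and therefore $\frac1{4\ell}\sum_u c^i_u\,\Ex[I^i_{u,v}] \le \frac14 z^i_v \le \frac34 z^i_v$, giving $\Ex[J^i_v] \ge z^i_v - \frac14 z^i_v \ge \frac{z^i_v}{4}$ as claimed.

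The step I expect to be the main obstacle is pinning down the pairwise-selection identity $\Pr[u\in S_i, v\in S_i] = z^i_u z^i_v / x_w$ with the correct bookkeeping of the LCA edge and the per-vertex selection coins, and then carefully organizing the double sum $\sum_u c^i_u \Ex[I^i_{u,v}]$ by the level of the LCA so that~\eqref{LP:kro4} can be applied cleanly at each level (including the boundary cases where $u$ lies on the $r$--$v$ path or $u=v$). Everything else is linearity of expectation and the marginal computation $\Ex[I^i_v]=z^i_v$. Once the pairwise bound is in hand the lemma follows by arithmetic.
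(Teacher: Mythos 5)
Your argument is correct and matches the paper's approach: both compute the marginal $\Ex[I^i_v]=z^i_v$, bound $\sum_u c^i_u\,\Ex[I^i_{u,v}]$ by organizing $u$ according to the level of the least-common-ancestor edge with $v$ and applying constraint~\eqref{LP:kro4} at each of the $O(\ell)$ levels, and then conclude via~\eqref{eq:indicator-J}. The paper simply phrases the level-by-level bound in terms of the conditional expectations $\Ex[I^i_u \mid I^i_v=1]=z^i_u/x_{e_j}$ (your joint identity divided by $z^i_v$) and records the precise bound $(\ell+2)\,z^i_v$ on $\sum_u c^i_u\,\Ex[I^i_{u,v}]$, but the underlying computation is identical.
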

\begin{proof} Observe that $\Pr[v\in S_i]=\Ex[I^i_v]=z^i_v$. Let us now {\em condition} on $I^i_v=1$, i.e. $\{v\in S_i\}$.
Let $\langle e_1,\ldots,e_\ell\rangle$ denote the edges on the path from $r$ to $v$; clearly all
these edges are in $F$. For any $j\in\{1,\ldots ,\ell\}$ define $T'_j\sse T_{e_j}\setminus \{v\}$ to be those vertices
whose least common ancestor with vertex $v$ is edge $e_j$. Also define $T'_0$ to be the vertices whose least common
ancestor with vertex $v$ is the root $r$; set $x_{e_0}=1$. For any $j\in\{0,1,\ldots,\ell\}$ and $u\in T'_{j}$ notice
that:
$$\Ex[I^i_u \,|\, I^i_v=1\,] \quad=\quad \Pr[\, u\in S_i \,|\, v\in S_i\,] \quad = \quad \Pr[\, u\in S_i \,|\, e_j\in F\,] \quad = \quad \frac{z^i_u}{x_{e_j}}$$
Taking expectations, using~\eqref{LP:kro4} and $T'_j\sse T_{e_j}$ we have for each $j\in[\ell]$,
$$ \sum_{u\in T'_j} \, c^i_u\cdot \Ex \left[ \, I^i_u\,|\, I^i_v=1\,\right]\quad  =\quad \frac{1}{x_{e_j}} \sum_{u\in T'_j} z^i_u\,\cdot \, c^i_u
\quad \le \quad 1$$

Observe that $\cup_{j=0}^\ell T'_j = V\setminus \{v\}$. So summing the above,
\begin{equation}\label{eq:kro-condn}
\sum_{u\in V\setminus v} \, c^i_u\cdot \Ex \left[ \, I^i_u\,|\, I^i_v=1\,\right] \quad  = \quad \sum_{j=0}^\ell
\,\sum_{u\in T'_j} \, c^i_u\cdot \Ex \left[ \, I^i_u\,|\, I^i_v=1\,\right]\quad \le  \quad \ell+1
\end{equation}

By Inequality~\eqref{eq:indicator-J} which is due to the alteration Step~\ref{kro-step4},
$$ \Ex[\, J^i_v\,|\, I^i_v=1\,] \,\, \ge  \,\, 1- \frac1{4\ell}\cdot \sum_{u\in V} c^i_u\cdot \Ex[ I^i_{u,v} \,|\,
I^i_v=1] \,\, = \,\, 1- \frac1{4\ell}\cdot \sum_{u\in V} c^i_u\cdot \Ex[ I^i_{u} \,|\, I^i_v=1] \,\, \ge\,\,
1-\frac{\ell+2}{4\ell} \,\ge \,\frac14$$ The second last inequality is by~\eqref{eq:kro-condn} and the fact that
$c^i_v\le 1$; the last inequality uses $\ell\ge 1$.  The lemma now follows since $\Ex[I^i_v]=z^i_v$.
\end{proof}

Combining~\eqref{eq:kro-exp-profit1} and Lemma~\ref{lem:kro-main} we have:
\begin{equation}\label{eq:kro-exp-profit2}
\Ex \left[\sum_{i=1}^m f_i(F) \right] \quad \ge \quad \frac1{8\ell}\cdot \sum_{i=1}^m \sum_v w^i_v \cdot
\left(\Ex[I^i_v] - \frac1{4\ell}\,\sum_{u\in V} c^i_u\cdot \Ex[I^i_{u,v}]\right) \quad \ge \quad \frac1{32\ell}\cdot
\sum_{i=1}^m \sum_v w^i_v\cdot z^i_v\end{equation}

Notice that~\eqref{eq:kro-exp-length} and~\eqref{eq:kro-exp-profit2} bound the {\em expected} length and profit
respectively. This is not sufficient for the ratio \kro problem. It would suffice to bound the length and profit {\em
simultaneously} (instead of expectation). But this is not possible: Consider the instance in Example~1 and the
fractional solution with $x_e=1/t$ for all edges. This is feasible to the above LP with $B\approx 2$, and has profit
$LP(B)\ge 1$. In this example, Algorithm~\ref{alg:kro} produces the following integral solution: the entire tree
(profit = length $=t$), with probability $1/t$, and the empty tree (profit = length $=0$) otherwise. Neither of these
solutions satisfies bounds on {\em both} profit and length.

Instead, we show that one can obtain a solution of high ratio ``profit/length'' by {\em derandomizing}
Algorithm~\ref{alg:kro}. This deterministic algorithm uses pessimistic estimators and is similar to~\cite{CCGG98};
however the details are quite different since we analyze a different random process.

\subsection{Deterministic Algorithm for Ratio \kro} \label{subsec:kro-derand} For the randomized algorithm~\ref{alg:kro} recall the indicator
variables $I^i_v$s and $I^i_{u,v}$s. Also let $K_f$ for any edge $f\in E(T)$ denote the indicator of event ``$f\in
F$''. Define the following estimators for profit and length: \begin{equation}\label{eq:defn-P-D} P \,\, := \,\,
\sum_{i=1}^m \sum_{v\in V} w^i_v \cdot \left( I^i_v - \frac1{4\ell}\,\sum_{u\in V} c^i_u\cdot  I^i_{u,v} \right)\qquad
\mbox{and}\qquad D \,\, := \,\, \sum_{f\in E(T)} d_f\cdot K_f\end{equation}

We have $\Ex[P]/\Ex[D]\ge \sum_{i=1}^m \sum_v w^i_v\cdot z^i_v/(4B)=LP(B)/(4B)$ by~\eqref{eq:kro-exp-profit2}
and~\eqref{eq:kro-exp-length} which is our initial estimate of the ratio. We will inspect edges $e$ of $T$ one at a
time and decide (deterministically) whether/not to include $e$ so that the ratio estimate does not decrease. At the end
of the algorithm, we obtain a deterministic subtree $F^*$ with ratio at least $\Ex[P]/\Ex[D]$. Details now follow.

To keep notation simple, we extend tree $T$ slightly. For each $v\in V$ add leaves $\{\langle i,v\rangle : i\in[m]\}$
that are adjacent to $v$ (with zero length edges). Let $L_i = \{\langle i,v\rangle \,: \, v\in V\}$ denote the leaves
corresponding to knapsack $i\in[m]$. Define the following fractional values $y^*$ on edges according to the optimal LP
solution $(x,z)$.
$$y^*_e=\left\{\begin{array}{ll} x_e& \mbox{ if $e$ is an edge in the original tree},\\
 z^i_v & \mbox{ if $e=(v,\langle i,v\rangle)$ is a new leaf-edge.}\end{array}\right.$$

Notice that the GKR rounding steps~\ref{kro-step2} and~\ref{kro-step3} in Algorithm~\ref{alg:kro} correspond to
choosing each edge $e$ in the modified tree independently w.p. $y^*_e/y^*_{\pi(e)}$ and retaining the subtree $F$
connected to $r$. (Recall that $\pi$ maps each edge/vertex to its parent edge.) Moreover, for each $i\in[m]$ subset
$S_i\equiv V(F)\cap L_i$.

In our algorithm we will be dealing with trees ${\cal T}$ (with vertex set $V({\cal T})$ and edges set $E({\cal T})$)
derived from $T$ and edge-weights $y$ (on ${\cal T}$) derived
from $y^*$. We will always have the property that $y$ is non-increasing from the root $r$ to any leaf. For any tree
${\cal T}$ and edge-weights $y$, define: {\small \begin{equation}\label{eq:recursion-P-D} P(y,{\cal T}) \,\, :=
\,\,\sum_{i=1}^m \sum_{v\in V({\cal T})\cap L_i} w^i_v \cdot \left( y_{\pi(v)}  - \frac1{4\ell}\,\sum_{u\in V({\cal T})\cap
L_i} c^i_u\cdot \frac{y_{\pi(u)}\cdot y_{\pi(v)}}{y_{\theta(u,v)}} \right) \quad \mbox{and}\quad  D(y,{\cal T}) \,\, :=
\,\, \sum_{f\in E({\cal T})} d_f\cdot y_f\end{equation}} where $\theta(u,v)$ denotes the least-common-ancestor edge of
vertices $u$ and $v$. For ease of notation, we assume WLOG that there is always a dummy edge above the root $r$ having
$y$-value one. Observe that these values correspond precisely to the expectation of random variables $P$ and $D$
from~\eqref{eq:defn-P-D}, in tree ${\cal T}$ when  GKR rounding is performed with edge-values $y$.

\begin{algorithm}[h] \caption{Deterministic algorithm for Ratio \kro \label{alg:ratio-kro}}
\begin{algorithmic}[1]
\STATE Solve the LP relaxation for \kro to obtain $(x,z)$.
\STATE \label{step-kro-ratio1} Extend tree $T$ by adding leaves $\cup_{i=1}^m L_i$ and define $y^*$ as above.
\STATE Initialize tree $F\gets \{r\}$ and $y\gets y^*$.
\WHILE{there is edge $e\in T$ incident to $r$}
\STATE Let $T_2=T_e$ denote the subtree below $e$ and $T_1=T\setminus T_2\setminus \{e\}$; see
Figure~\ref{fig:det-tree}.
\STATE Set $T''\gets T_1\cup F$, and
$$y''_f \gets \left\{
\begin{array}{ll}
y_f& \mbox{if } f\in T_1\\
1 & \mbox{if } f\in F\end{array} \right.
$$
\STATE Compute estimates $P_0=P(T'',\, y'')$ and $D_0=D(T'',\, y'')$ upon {\em excluding} $e$
by~\eqref{eq:recursion-P-D}.
\STATE Set $T'\gets (T \mbox{ contract }e)\cup (F\cup\{e\})$, and
$$y'_f \gets \left\{
\begin{array}{ll} \frac{y_f}{y_e} &
\mbox{if } f\in T_2\cup\{e\}\\
y_f& \mbox{if } f\in T_1\\
1 & \mbox{if } f\in F\end{array} \right.
$$
\STATE Compute estimates $P_1=P(T',\, y')$ and $D_1=D(T',\, y')$ upon {\em including} $e$ by~\eqref{eq:recursion-P-D}.
\IF{$\frac{P_0}{D_0}\ge \frac{P_1}{D_1}$}
\STATE Set $T\gets T_1$ and $y\gets y''$.
\ELSE
\STATE Set $T\gets (T \mbox{ contract }e)$ and $y\gets y'$. Also $F\gets F\cup\{e\}$ and $y_e\gets 1$.
\ENDIF
\ENDWHILE
\STATE {\bf output} $r$-tour corresponding to $F$.
\end{algorithmic}
\end{algorithm}

\begin{figure}
\begin{center}\includegraphics[scale=0.65]{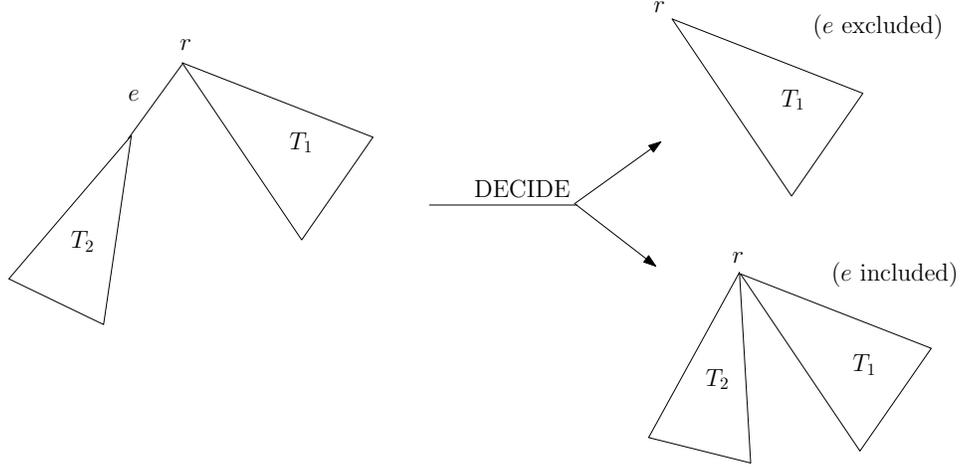}
\caption{The deterministic edge selection step. \label{fig:det-tree}}
\end{center}
\end{figure}

\begin{lemma}
At any iteration in Algorithm~\ref{alg:ratio-kro}, after Step 9, we have $P(y,T\cup F)=y_e\cdot P(y',T')+(1-y_e)\cdot P(y'',T'') =
y_e\cdot P_1 + (1 - y_e)\cdot P_0$ and
$D(y,T\cup F)=y_e\cdot D(y',T')+(1-y_e)\cdot D(y'',T'') = y_e\cdot D_1 + (1-y_e)\cdot D_0$.
\end{lemma}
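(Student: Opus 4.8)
The plan is to read the identity off the probabilistic meaning of the estimators $P(\cdot,\cdot)$ and $D(\cdot,\cdot)$. Recall the remark following~\eqref{eq:recursion-P-D}: for any (leaf-extended) tree $\mathcal{T}$ and any edge-weights $y$ non-increasing from $r$, the quantities $P(y,\mathcal{T})$ and $D(y,\mathcal{T})$ are precisely $\Ex[P]$ and $\Ex[D]$ for the random variables $P,D$ of~\eqref{eq:defn-P-D}, where the expectation is over GKR rounding on $\mathcal{T}$ with edge-selection probabilities $y_f/y_{\pi(f)}$ and with every edge of $F$ (which carries $y$-value $1$) deterministically present. This holds because $\Pr[v\in S_i]=y_{\pi(v)}$ by telescoping along the root-to-$v$ path, and $\Pr[u,v\in S_i]=y_{\pi(u)}\,y_{\pi(v)}/y_{\theta(u,v)}$ by splitting the two root-to-leaf paths at their lowest-common-ancestor edge $\theta(u,v)$ and using independence; plugging these into~\eqref{eq:defn-P-D} yields exactly~\eqref{eq:recursion-P-D}. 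So it suffices to apply the law of total expectation for the current pair $(T\cup F,\,y)$, conditioning on whether the edge $e$ lies in $F$; since $e$ is incident to $r$, its parent is the dummy root edge of $y$-value $1$, so $e$ is retained with probability exactly $y_e$ independently of everything else.

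\textbf{The two conditional processes.} Conditioned on $e\notin F$, the subtree $T_2=T_e$ is severed from $r$ and contributes nothing, while the rounding of all other edges --- those of $T_1$ and of $F$ --- is untouched; thus the conditional process is exactly GKR rounding on $(T'',y'')$, giving $\Ex[P\mid e\notin F]=P(y'',T'')=P_0$ and $\Ex[D\mid e\notin F]=D(y'',T'')=D_0$. Conditioned on $e\in F$, the edge $e$ is forced present, and for every edge $f$ of $T_2$ the conditional inclusion probability is still $y_f/y_{\pi(f)}$; setting $y'_f=y_f/y_e$ on $T_2\cup\{e\}$ makes these probabilities equal to $y'_f/y'_{\pi(f)}$, with the top edges of $T_2$ now hanging directly off $r$ (whose dummy edge again has value $1$, matching $y'_f=y_f/y_e$ there). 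Hence the conditional process is exactly GKR rounding on $(T',y')$, giving $\Ex[P\mid e\in F]=P(y',T')=P_1$ and $\Ex[D\mid e\in F]=D(y',T')=D_1$.

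\textbf{Conclusion and main obstacle.} Combining these by the law of total expectation,
$$P(y,T\cup F)=\Ex[P]=y_e\,\Ex[P\mid e\in F]+(1-y_e)\,\Ex[P\mid e\notin F]=y_e\,P_1+(1-y_e)\,P_0,$$
and the identical computation gives $D(y,T\cup F)=y_e\,D_1+(1-y_e)\,D_0$. The only delicate point is verifying the two conditional-distribution claims: that contracting $e$ together with rescaling by $1/y_e$ on $T_2\cup\{e\}$ reproduces exactly the conditional edge-selection probabilities and transforms $\theta(u,v)$ correctly. The cases needing care are when $\theta(u,v)=e$, or when $u,v$ sit on opposite sides of $e$ (so $\theta(u,v)$ becomes the dummy root edge after contraction), since these are precisely the terms in which the $1/y_e$ factor enters nontrivially. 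If one prefers to avoid probabilistic language, the same statement follows from a direct termwise check: partition $E(T\cup F)$ into $E(F)$, $\{e\}$, $E(T_1)$, $E(T_2)$ for the $D$-identity, and partition the pairs of knapsack-$i$ leaves according to whether each endpoint lies in $T_1\cup F$ or in $T_2$ for the $P$-identity, then observe in each case that $y_e$ times the $(y',T')$-contribution plus $(1-y_e)$ times the $(y'',T'')$-contribution equals the $(y,T\cup F)$-contribution --- the nontrivial cases being again those with $\theta(u,v)\in\{e,\ \text{dummy root edge}\}$.
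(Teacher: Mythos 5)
Your proof is correct, and it takes a genuinely different route from the paper's. The paper verifies the two identities purely algebraically: for $D$ it just writes out the three sums, and for $P$ it splits $P$ into the linear part $P_a$ and the quadratic part $P_b$, then checks term-by-term (in two case tables, keyed on whether $\pi(v)$ and $\pi(u)$ lie in $E(F)$, $E(T_1)$, or $E(T_2)\cup\{e\}$) that each term of $P(y,T\cup F)$ equals $y_e$ times the corresponding $P(y',T')$-term plus $(1-y_e)$ times the $P(y'',T'')$-term. Your argument instead leverages the probabilistic meaning the paper records right after~\eqref{eq:recursion-P-D} --- that $P(y,\mathcal T)$ and $D(y,\mathcal T)$ are exactly $\Ex[P]$ and $\Ex[D]$ under GKR rounding with edge-values $y$ --- and then applies the law of total expectation, conditioning on whether $e$ is retained. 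This is shorter and more conceptual: it \emph{explains} the convex combination rather than checking it, and it handles the $P$- and $D$-identities uniformly. The price is that one must verify the two conditional-distribution claims (that the process conditioned on $e\notin F$ is GKR on $(T'',y'')$, and that conditioned on $e\in F$ it is GKR on $(T',y')$ after contracting $e$ and rescaling $T_2\cup\{e\}$ by $1/y_e$); you correctly flag this as the delicate step, and in particular the subcases $\theta(u,v)=e$ and $u,v$ on opposite sides of $e$ are precisely where the paper's case table also does real work. Both proofs are valid; yours buys brevity and insight at the cost of invoking independence properties of GKR rounding, while the paper's is purely mechanical and self-contained within the formula~\eqref{eq:recursion-P-D}.
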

\begin{proof}
Consider first the equation for $D$. We have:
$$D(y',T') = d_e + \sum_{f\in E(F)} d_f +  \sum_{f\in E(T_1)} d_f\cdot y_f+ \sum_{f\in E(T_2)} d_f\cdot \frac{y_f}{y_e}$$
$$D(y'',T'') = \sum_{f\in E(F)} d_f +  \sum_{f\in E(T_1)} d_f\cdot y_f.$$
It follows that in the convex combination $y_e\cdot D(y',T')+(1-y_e)\cdot D(y'',T'')$, each edge $f\in E(F)$ contributes
$d_f$ and each edge $g$ in $T=T_1\cup T_2\cup\{e\}$ contributes $d_g\cdot y_g$. This is exactly $D(y,T\cup F)$.

Next consider the equation for $P$. We will show equality term-by-term in the expression~\eqref{eq:recursion-P-D} for
$P$.

Consider the first summation $P_a(y,{\cal T}) := \sum_{i=1}^m \sum_{v\in V({\cal T})\cap L_i} w^i_v \cdot y_{\pi(v)}$. We
show that the contribution of any term ``$i\in [m]$ and $v\in L_i$'' to $P_a(y,T\cup F)$ is the same as to $y_e\cdot
P_a(y',T')+(1-y_e)\cdot P_a(y'',T'')$. %
\begin{center}
{\renewcommand{\arraystretch}{1.5}
\renewcommand{\tabcolsep}{0.2cm}\begin{tabular}{|c|c|c|c|c|}
\hline  Cases & $P_a(y,T\cup F)$ & $P_a(y',T')$ & $P_a(y'',T'')$ & $y_e\cdot P(y',T')+(1-y_e)\cdot P(y'',T'')$\\
\hline $\pi(v)\in E(F)$ & $w^i_v$ & $w^i_v$ & $w^i_v$ & $w^i_v$ \\
\hline $\pi(v)\in E(T_1)$ & $w^i_v\cdot y_{\pi(v)}$ & $w^i_v\cdot y_{\pi(v)}$ & $w^i_v\cdot y_{\pi(v)}$ &  $w^i_v\cdot y_{\pi(v)}$ \\
\hline $\pi(v)\in E(T_2)\cup\{e\}$ & $w^i_v\cdot y_{\pi(v)}$ & $w^i_v\cdot \frac{y_{\pi(v)}}{y_e}$ & $0$ &  $w^i_v\cdot y_{\pi(v)}$ \\
\hline
\end{tabular}}
\end{center}

Next consider the second summation $P_b(y,{\cal T}) := - \frac1{4\ell}\, \sum_{i=1}^m \sum_{u,v\in V({\cal T})\cap L_i}
w^i_v \cdot c^i_u\cdot \frac{y_{\pi(u)}\cdot y_{\pi(v)}}{y_{\theta(u,v)}}$. We again show equality $P_b(y,T\cup F) =
y_e\cdot P_b(y',T')+(1-y_e)\cdot P_b(y'',T'')$ for each term $\frac{y_{\pi(u)}\cdot y_{\pi(v)}}{y_{\theta(u,v)}}$
corresponding to ``$i\in[m]$ and $u,v\in L_i$''; to reduce clutter we drop the multiplier $- \frac1{4\ell}\,w^i_v \cdot
c^i_u$.
\begin{center}
{\renewcommand{\arraystretch}{1.5}
\renewcommand{\tabcolsep}{0.2cm}\begin{tabular}{|c|c|c|c|c|}
\hline  Cases & $P_b(y,T\cup F)$ & $P_b(y',T')$ & $P_b(y'',T'')$ &
{\small \begin{tabular}[l]{@{}c@{}}$y_e\cdot P_b(y',T')$\\ $+(1-y_e)\cdot P_b(y'',T'')$\end{tabular}}\\
\hline $\pi(u), \pi(v)\in E(F)$ & $1$ & $1$ & $1$ & $1$\\
\hline $\pi(u)\in E(F),\, \pi(v)\in E(T_1)$  & $y_{\pi(v)}$ & $y_{\pi(v)}$ & $y_{\pi(v)}$ & $y_{\pi(v)}$\\
\hline $\pi(u)\in E(F),\, \pi(v)\in E(T_2)\cup\{e\}$  & $y_{\pi(v)}$ & $\frac{y_{\pi(v)}}{y_e}$ & $0$ &  $y_{\pi(v)}$\\
\hline $\pi(u), \pi(v)\in E(T_1)$  & $\frac{y_{\pi(u)}\cdot y_{\pi(v)}}{y_{\theta(u,v)}}$ & $\frac{y_{\pi(u)}\cdot y_{\pi(v)}}{y_{\theta(u,v)}}$ & $\frac{y_{\pi(u)}\cdot y_{\pi(v)}}{y_{\theta(u,v)}}$ & $\frac{y_{\pi(u)}\cdot y_{\pi(v)}}{y_{\theta(u,v)}}$\\
\hline $\pi(u), \pi(v)\in E(T_2)\cup\{e\}$  & $\frac{y_{\pi(u)}\cdot y_{\pi(v)}}{y_{\theta(u,v)}}$ & $\frac{y_{\pi(u)}\cdot y_{\pi(v)}}{y_{\theta(u,v)}\cdot y_e}$ & $0$
&  $\frac{y_{\pi(u)}\cdot y_{\pi(v)}}{y_{\theta(u,v)}}$ \\
\hline $\pi(u)\in E(T_1),\, \pi(v)\in E(T_2)\cup\{e\}$  & $y_{\pi(u)}\cdot y_{\pi(v)}$ & $y_{\pi(u)}\cdot \frac{y_{\pi(v)}}{y_e}$ & $0$ & $y_{\pi(u)}\cdot y_{\pi(v)}$\\
\hline
\end{tabular}}
\end{center}
Since we have checked all cases, it follows that $P(y,T\cup F)=y_e\cdot P(y',T')+(1-y_e)\cdot P(y'',T'')$.
\end{proof}

This lemma implies that in any iteration,
$$\max\left\{\frac{P(y',T')}{D(y',T')},\, \frac{P(y'',T'')}{D(y'',T'')}  \right\} \ge \frac{y_e\cdot P(y',T')+(1-y_e)\cdot P(y'',T'')}{y_e\cdot D(y',T')+(1-y_e)\cdot
D(y'',T'')} = \frac{P(y,T\cup F)}{D(y,T\cup F)}$$ So by induction, the ratio $\frac{P(y,T\cup F)}{D(y,T\cup F)}$ is
non-decreasing over iterations. It can be seen that the denominator $D(y,T\cup F)$ can always be taken to be non-zero and therefore the final solution $F$
must contain at least one edge. Notice that at the start of Algorithm~\ref{alg:ratio-kro}, $F$ is empty and the ratio
is at least $\rho = LP(B)/(4B)$ by~\eqref{eq:kro-exp-profit2} and~\eqref{eq:kro-exp-length}. And at the end of the
algorithm, the tree $T$ is empty-- so the ratio is exactly:
\begin{equation}\label{eq:kro-ratio-det}
\left(\sum_{i=1}^m \sum_{v\in V(F)\cap L_i} w^i_v \cdot \left( 1  - \frac1{4\ell}\,\sum_{u\in V(F) \cap L_i} c^i_u\right)\right) \quad /
\quad  \left(\sum_{f\in E(F)} d_f\right) \quad \ge \quad \rho\end{equation} The denominator is exactly the length of solution tree $F$.
We will show that the numerator is at most $O(\ell)$ times the profit $\sum_{i=1}^m f_i(V(F))$ of $F$. This would imply
that the profit/length ratio of solution $F$ is at least $\Omega(\frac1\ell)\cdot \rho$ as desired.

To upper bound the numerator in~\eqref{eq:kro-ratio-det}, define
$$ R_i =\left\{ \begin{array}{ll} L_i\cap V(F) & \mbox{
if } c^i(L_i\cap V(F))\le 4\ell\\
\emptyset & \mbox{ otherwise}.
\end{array}\right.\qquad \forall i\in[m]$$

Note that if $R_i=\emptyset$ then $c^i(L_i\cap V(F))> 4\ell$, i.e. the contribution of $L_i\cap V(F)$ in the numerator
of~\eqref{eq:kro-ratio-det} is negative. On the other hand, if $R_i\ne \emptyset$ then the contribution of $L_i\cap V(F)$
is at most $w^i(L_i\cap V(F))=w^i(R_i)$. So we obtain that the numerator of~\eqref{eq:kro-ratio-det} is at most
$\sum_{i=1}^m w^i(R_i)$. Since each $R_i$ has knapsack-size at most $4\ell$, a greedy partitioning as before implies
there is a subset $R'_i\sse R_i$ with size $c^i(R'_i)\le 1$ and $w^i(R'_i)\ge w^i(R_i)/(8\ell)$; i.e. $f_i(V(F))\ge
w^i(R_i)/(8\ell)$. Rearranging, the numerator of~\eqref{eq:kro-ratio-det} is at most $8\ell\,\sum_{i=1}^m f_i(V(F))$.
Combined with the inequality in~\eqref{eq:kro-ratio-det},
$$\mbox{Ratio of solution }F \,\, = \,\, \frac{f(V(F))}{d(V(F))} \,\,\ge \,\, \frac1{32\,\ell}\cdot \frac{LP(B)}{ B}.$$

Thus we have proved:
\begin{theorem}
There is a deterministic $O(\ell)$-approximation algorithm for the ratio knapsack orienteering problem on depth $\ell$
trees. On general metrics there is an $O(\log^2 n)$-approximation algorithm.
\end{theorem}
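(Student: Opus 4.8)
The plan is to stitch together the pieces developed above. For depth-$\ell$ trees, the claim is that running Algorithm~\ref{alg:ratio-kro} for each guess of the optimal tour length $B$ (polynomially many guesses, up to a constant factor) and returning the best resulting $r$-tour gives an $O(\ell)$-approximation. The key invariant is that the ratio $P(y,T\cup F)/D(y,T\cup F)$ is non-decreasing over the iterations: by the lemma above, after Step~9 this ratio equals the convex combination $\bigl(y_e P_1 + (1-y_e)P_0\bigr)/\bigl(y_e D_1 + (1-y_e)D_0\bigr)$, so whichever of $P_1/D_1$, $P_0/D_0$ the algorithm picks is at least the current ratio. At initialization $F=\{r\}$, $y=y^*$, and $T\cup F$ is the extended tree $T$, on which $P(y^*,T)=\Ex[P]$ and $D(y^*,T)=\Ex[D]$ by the remark following~\eqref{eq:recursion-P-D}; combining~\eqref{eq:kro-exp-profit2} and~\eqref{eq:kro-exp-length} gives $\Ex[P]/\Ex[D]\ge LP(B)/(4B)=:\rho$. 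Each iteration removes the $r$-incident edge $e$ (and possibly its subtree), so the run ends with $T$ empty and $F$ a genuine tree, at which point the ratio is exactly the quantity in~\eqref{eq:kro-ratio-det} and hence at least $\rho$.

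The remaining step is to convert the ``alteration-corrected'' numerator of~\eqref{eq:kro-ratio-det} into the true objective $\sum_i f_i(V(F))$. I would do this exactly as foreshadowed: set $R_i:=L_i\cap V(F)$ if $c^i(L_i\cap V(F))\le 4\ell$ and $R_i:=\emptyset$ otherwise. Knapsacks with $R_i=\emptyset$ contribute non-positively to the numerator, and the rest contribute at most $w^i(R_i)$; since $c^i(R_i)\le 4\ell$, greedily splitting $R_i$ into $O(\ell)$ size-$\le 1$ pieces shows $f_i(V(F))\ge w^i(R_i)/(8\ell)$. Hence the numerator is $O(\ell)\cdot\sum_i f_i(V(F))$ while the denominator is the length $d(F)$ of the subtree (whose Euler tour has length $\le 2d(F)$). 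So the output $r$-tour has ratio $\Omega(1/\ell)\cdot\rho=\Omega(1/\ell)\cdot LP(B)/B$. Finally, taking $B$ within a constant of the optimal ratio \kro tour length $B^\ast$, LP feasibility gives $LP(B^\ast)/B^\ast\ge$ (optimal ratio), so the best solution over all $B$ is an $O(\ell)$-approximation.

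For general metrics I would invoke the probabilistic tree embedding of~\cite{FRT04}, as announced at the start of Subsection~\ref{subsec:kro-lp}: at the cost of an $O(\log n)$ factor we may assume the metric is a tree with $\ell=O(\log n)$ levels. The only thing to check is that this reduction is compatible with the \emph{ratio} objective: since the embedding dominates the original metric, any $r$-tour found in the tree maps back to the original metric without increasing its length (so its profit/length ratio only improves), while the $O(\log n)$ expected stretch, together with a Markov argument, bounds how much the optimal ratio can shrink in a random tree from the distribution. Plugging $O(\ell)=O(\log n)$ into the tree bound then yields the $O(\log^2 n)$-approximation on general metrics.

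I do not expect a new obstacle here — the essential difficulty was already confronted, namely that Example~1 rules out controlling profit/length from the separate expectation bounds~\eqref{eq:kro-exp-length} and~\eqref{eq:kro-exp-profit2}, forcing the switch to a pessimistic-estimator derandomization that propagates $P$, $D$, and the alteration correction term jointly through each edge-split. The one item that still needs care in the wrap-up is checking that $D(y,T\cup F)$ stays strictly positive throughout the run, so that $F$ is non-empty and the final ratio in~\eqref{eq:kro-ratio-det} is meaningful.
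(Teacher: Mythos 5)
Your proposal is correct and follows essentially the same approach as the paper: both derandomize the GKR rounding via the pessimistic estimators $P$ and $D$ of~\eqref{eq:recursion-P-D}, use the non-decreasing-ratio invariant to reach a tree $F$ with ratio at least $LP(B)/(4B)$, and then apply the same alteration/greedy-partitioning argument ($R_i$ with $c^i(R_i)\le 4\ell$) to relate the corrected numerator of~\eqref{eq:kro-ratio-det} to $\sum_i f_i(V(F))$, finishing with the FRT embedding for general metrics. The extra sanity check you raise about the tree embedding being compatible with the ratio objective (tree distances dominate, so mapping back can only improve the ratio; a Markov argument controls the loss on the optimal tour) is a detail the paper does not spell out but is exactly the right way to justify the $O(\log n)$ overhead.
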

The additional log-factor on general metrics is due to tree embedding~\cite{FRT04} which is randomized. This step can
also be made deterministic using the algorithm in~\cite{CCGG98}.

\section{UGC Hardness of Approximation}\label{sec:hardness}
In this section we
prove a $\omega(1)$ UGC-hardness of approximation for \svrp even for a very simple star-like metric
with a setting of $\lambda$ that renders the recourse tour trivial. Our hardness result is based on the Unique
Games Conjecture (UGC) of Khot \cite{Khot02} , a restatement of which is given below.
\begin{conjecture} (Unique Games Conjecture \cite{Khot02})
For any $\varepsilon > 0$, there is a positive integer $p$ such
that:  given a system of $2$-variable linear equations over $\mathbb{Z}_p$,
each of the form $x_i - x_j = a_{ij} \mod p$, it is NP-hard to distinguish
between the following two cases : (i) \textnormal{YES CASE:} There is
an assignment to the variables that satisfies $1- \varepsilon$
fraction of equations, (ii) \textnormal{NO CASE:} Any assigment
satisfies at most $\varepsilon$ fraction of equations.
\end{conjecture}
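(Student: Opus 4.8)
The statement above is the \emph{Unique Games Conjecture} of Khot, reproduced here essentially verbatim from~\cite{Khot02}; it is not a theorem this paper establishes but a hypothesis it assumes. Accordingly there is no proof to sketch: deciding whether UGC holds is a major open problem, and — as with every conditional inapproximability result — the paper takes it on faith rather than proving it. The ``plan,'' such as it is, is simply to record the conjecture in the precise form it will be needed downstream (a system of $2$-variable linear equations $x_i - x_j = a_{ij} \bmod p$ over $\mathbb{Z}_p$, with completeness $1-\varepsilon$ versus soundness $\varepsilon$, for arbitrarily small $\varepsilon$ and a correspondingly large prime $p$), and then to \emph{use} it.

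Concretely, the only way this statement gets ``discharged'' in what follows is by chaining it to a combinatorial consequence and then reducing that consequence to \svrp. I would not reduce from unique games directly; instead I would invoke the Bansal--Khot~\cite{BK10} theorem, which is itself derived from the conjecture above by a standard long-code / dictatorship-test construction, and which asserts — \emph{assuming} UGC — the hardness of vertex cover on $k$-uniform hypergraphs in the regime where the YES instances are almost $k$-partite and the NO instances admit no vertex cover smaller than almost the entire vertex set. That invocation is precisely where the conjecture is consumed; nothing about the conjecture itself is proved here, and the $\varepsilon$/$p$ formulation above is recorded only so that~\cite{BK10} can be cited as a black box.

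Hence the genuinely new content is not attached to the present statement but appears in Section~\ref{sec:hardness}: a gadget that turns a $k$-uniform hypergraph instance into a star-like \svrp instance with the inflation parameter $\lambda$ tuned large enough that the recourse route is prohibitively expensive, so that the first-stage fixed-tour decision is forced to behave like a hitting set for the hyperedges. The YES case (almost $k$-partite) should admit a cheap fixed tour covering a $(1-o(1))$-fraction of demand, while the NO case (no small vertex cover) should force the fixed tour to pay for essentially all the weight, yielding a gap that grows with $k$. If there is any ``hard step'' associated with the statement above, it is purely bookkeeping — checking that the version of UGC written here is strong enough to feed~\cite{BK10} — and that is routine; the real difficulty is the reduction itself, which is where Theorem~\ref{thm:hard} is actually proved.
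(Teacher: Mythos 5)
You are right: this statement is the Unique Games Conjecture itself, which the paper states as an unproved hypothesis and never attempts to prove, consuming it only through the Bansal--Khot theorem (Theorem~\ref{thm-BK10}) and the reduction in Section~\ref{sec:hardness} — exactly as you describe. There is nothing to prove here, and your account of how the conjecture is used downstream matches the paper's approach.
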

Based on UGC, Bansal and Khot \cite{BK10} proved the following
hardness of approximation result for minimum vertex cover on almost
$k$-partite $k$-uniform hypergraphs, which shall be the starting point of our
reduction.
\begin{theorem}
\label{thm-BK10}\cite{BK10} Assuming the Unique Games Conjecture, for any
$\varepsilon
> 0$ and positive integer $k \geq 2$, given a $k$-uniform hypergraph
$G$ with vertex set $U$ and hyperedge set $E$, it is NP-hard to distinguish
between the following two cases:

\medskip
\noindent \textnormal{YES CASE:} There is a partition of $U$ into $k+1$ disjoint subsets $X, U_1, \dots, U_k$ such that
$|X| \leq \varepsilon|U|$ and the hypergraph induced by $U\setminus X$
(consisting of vertex set $U\setminus X$ and hyperedge set $\{e\cap (U\setminus X)\ \mid\ e \in E, |e\cap (U\setminus X)|>0\}$)
is $k$-partite with $U_1, \dots, U_k$ as the
$k$-partition. That is, any hyperedge $e \in E$ has at most one vertex from any $U_i$. This implies that $X\cup U_i$ is a
vertex cover in $G$ for each $i = 1, \dots, k$, and that the minimum vertex cover in $G$ has size at most $(1/k +
\varepsilon)|U|$.

\medskip
\noindent
\textnormal{NO CASE:} The size of the maximum independent set in $G$
is at most $\varepsilon|U|$, and therefore the size of the minimum vertex
cover in $G$ is at least $(1 -\varepsilon)|U|$.
\end{theorem}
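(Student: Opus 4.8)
The plan is to prove Theorem~\ref{thm-BK10} by the standard reduction from Unique Games to a long-code hypergraph, arranged so that the YES object is genuinely almost $k$-partite. Fix the target $\varepsilon$ and $k$, and start from a bipartite Unique Games instance $\mathcal{U}=(V\cup W,E,[R],\{\pi_{vw}\})$ with gap parameter $\eta=\eta(\varepsilon,k)$ chosen tiny; by the UGC it is NP-hard to tell whether $\mathcal{U}$ has a labelling satisfying $\ge 1-\eta$ of the constraints or none satisfying $\eta$. I would construct a $k$-uniform hypergraph $G=(U,\mathcal{E})$ whose vertex set is $U=W\times[k]^R$ --- a ``$k$-ary long-code block'' $B_w=\{w\}\times[k]^R$ for every right vertex $w$, so $n:=|U|=|W|\,k^R$. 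The hyperedges come from a ``rainbow'' dictatorship test at the left vertices: pick $v\in V$ uniformly, pick neighbours $w_1,\dots,w_k$ of $v$ uniformly (with repetition) with constraint permutations $\pi_i:=\pi_{vw_i}$, and sample $x^{(1)},\dots,x^{(k)}\in[k]^R$ coordinatewise-independently so that in each coordinate $(x^{(1)}_j,\dots,x^{(k)}_j)$ is a uniformly random permutation of $[k]$; add the hyperedge $\{(w_i, x^{(i)}\circ\pi_i):i\in[k]\}$, discarding the rare tuples that collide. The key marginal property is that any two of the $k$ coordinate-strings are never equal, so the test space $[k]^k$ is ``connected'' with all pairwise correlations $\le\tfrac1{k-1}<1$.

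For completeness, take a labelling $\ell:V\cup W\to[R]$ satisfying $\ge 1-\eta$ of the constraints and colour each vertex $(w,y)$ by $y_{\ell(w)}\in[k]$; this partitions $U$ into classes $U_1,\dots,U_k$ of size exactly $n/k$. For a test edge at a left vertex $v$ all of whose sampled neighbours satisfy their constraints, the colours of the $k$ vertices are $x^{(1)}_{\ell(v)},\dots,x^{(k)}_{\ell(v)}$ (by the folding), a permutation of $[k]$, so the edge meets each $U_i$ exactly once. The only bad edges are those touching a right vertex whose constraints are mostly unsatisfied; a standard smoothing/averaging argument (restrict to the $\ge 1-O(\sqrt\eta)$ fraction of ``good'' left vertices and put the $O(\sqrt\eta)\,n$ vertices of the bad blocks into $X$) produces a set $X$ with $|X|\le\varepsilon n$ whose removal leaves a $k$-partite hypergraph with $k$-partition $U_1,\dots,U_k$ --- exactly the YES case, and $X\cup U_i$ is then a vertex cover of size $(1/k+\varepsilon)n$.

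For soundness, suppose $G$ has an independent set $I$ with $|I|>\varepsilon n$; I aim to decode a good labelling of $\mathcal{U}$. Call $w$ \emph{heavy} if $|I\cap B_w|\ge(\varepsilon/2)k^R$ (this holds for a $\ge\varepsilon/2$ fraction of $W$), and let $g_w\subseteq[k]^R$ denote $I$ restricted to $B_w$. The plan is: let $\mathrm{Inf}(w)\subseteq[R]$ be the coordinates with influence $\ge\tau(\varepsilon,k)$ on $g_w$; if list-decoding these together with the permutations $\pi_{vw}$ fails to satisfy an $\eta'(\varepsilon,k)>0$ fraction of $\mathcal{U}$, then on a positive-probability set of test edges all $k$ sampled blocks are heavy and the corresponding $g_{w_i}$ have negligible influences, so by the invariance principle / hypercontractivity for the correlated product space above (all pairwise correlations bounded away from $1$, so no extra noise is needed) the $k$ test points land simultaneously inside $g_{w_1},\dots,g_{w_k}$ with probability at least roughly $\prod_i|g_{w_i}|/k^R>0$ --- producing a hyperedge contained in $I$, contradicting independence. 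Hence the decoding succeeds, $\mathcal{U}$ has a labelling satisfying $\ge\eta'$ of its constraints, which is impossible once $\eta<\eta'$; so in the NO case every independent set has size $\le\varepsilon n$ and every vertex cover has size $\ge(1-\varepsilon)n$.

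The hard part --- and the reason this refines the earlier Khot--Regev $k$-uniform vertex cover bound --- is the completeness bookkeeping: one must show the ``bad'' portion of the reduction (vertices whose blocks correspond to poorly labelled Unique Games right vertices, plus any boundary effects of the test) touches only an $\varepsilon$-\emph{fraction of vertices}, not merely of edges, so that it can be absorbed into $X$ while keeping the remaining hypergraph exactly $k$-partite; this is what forces the $k$-ary alphabet and the noise-free permutation test (so that every clean edge is rainbow under \emph{some} dictator while the correlated-space invariance principle still applies). Granting the standard smoothed-Unique-Games setup and the invariance-principle estimate, the remaining steps are routine and yield Theorem~\ref{thm-BK10}.
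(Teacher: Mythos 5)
This statement is not proved in the paper at all: Theorem~\ref{thm-BK10} is imported verbatim from Bansal--Khot~\cite{BK10} and used as a black box as the starting point of the reduction in Section~\ref{sec:hardness}. So there is no in-paper proof to compare against; what you have written is a blind reconstruction of the argument of~\cite{BK10}. As such a reconstruction it has the right architecture (a Unique Games reduction with $k$-ary long-code blocks $W\times[k]^R$, a rainbow permutation test, dictator colouring for completeness, influence decoding plus a correlated-space invariance principle for soundness), but it is a sketch with at least two concrete gaps rather than a proof.

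First, the completeness bookkeeping --- which you correctly identify as the crux --- is not actually carried out, and the fix you propose does not work as stated. You cannot ``restrict to the good left vertices'': the hyperedges generated by a badly labelled left vertex $v$ remain in $G$, and such an edge can lie entirely inside good blocks while failing to be rainbow (its colours $x^{(i)}_{\pi_i(\ell(w_i))}$ need not be distinct when some constraint $\pi_{vw_i}$ is violated). Since $X$ must be a \emph{vertex} set, these edges cannot be deleted. The missing ingredient is the Khot--Regev strengthened form of UG completeness: in the YES case one may assume a labelling and a set $W_0$ with $|W_0|\ge(1-\varepsilon)|W|$ such that \emph{every} constraint incident to $W_0$ is satisfied; then $X=\bigcup_{w\notin W_0}B_w$ genuinely leaves a $k$-partite hypergraph. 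A plain averaging over a $(1-\eta)$-satisfying labelling does not yield this. Second, your soundness step claims all pairwise correlations are at most $1/(k-1)<1$ so ``no extra noise is needed''; this is false for $k=2$ (the permutation distribution on $[2]^2$ is supported on $\{(1,2),(2,1)\}$, has maximal correlation $1$, and is disconnected in Mossel's sense), yet the theorem is stated for all $k\ge 2$. That case must be handled separately (e.g.\ via Khot--Regev directly). The list-decoding step itself is also only gestured at. None of this affects the present paper, which only needs the theorem as a cited result, but as a standalone proof your proposal is incomplete.
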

In the rest of this section we shall give a hardness reduction from the problem of distinguishing between $k$-uniform
hypergrahs which are almost $k$-partite (as in the YES case of Theorem \ref{thm-BK10}) from those that have a very
small maximum independent set (as in the NO case of Theorem \ref{thm-BK10}).
\subsection{Hardness Reduction}
Fix any positive integer $k \geq 2$.
Let us suppose we are given a $k$-uniform hypergraph $G$ on vertex set
$U$ and with hyperedge set $E$ as a
hard instance from Theorem \ref{thm-BK10}, where we shall fix the
parameter $\varepsilon$ in Theorem \ref{thm-BK10} later. We transform
$G(U, E)$ into an instance of \svrp as follows. For clarity, in this
section the nomenclature of ``vertices'' shall be in context of the
hypergraph, while ``points'' shall be used for corresponding elements
in the metric.

\medskip
\noindent
{\em Metric $(V, d)$.} The set of points $V$ in the metric is $U\cup\{r\}$, where $r$ is the {\it root}. The distances
$d$ are defined as follows. Let $d(r, u) = L$, where $L = (|U|/2k + 1/2)$, for all $u \in U$. Further, for each pair
$u, u' \in U$, $u \neq u'$, let $d(u, u') = 1$. It is easy to see that $d$ is a metric. This simple metric can be
realized by the shortest paths in a star-like tree of distances as illustrated in Figure \ref{fig-metric}.

\medskip
\noindent
{\em Capacity and Demands.} The capacity $Q=1$ and demands
will be $\{0,1\}$.

\medskip
\noindent
{\em Demand Distribution \D.} There are polynomially many scenarios $m=|E|$, each having uniform probability. Every
hyperedge $e \in E$ is a scenario having demand of one at all points in $e$, and zero demand elsewhere.

\medskip
\noindent
{\em Parameter $\lambda$.} We set $\lambda = 2m|U|(k+1)$.

\medskip
Before we proceed to the analysis of this reduction, we note that the cost of the minimum cost $r$-tour covering points
$S \subseteq V\setminus\{r\}$, is simply $|U|/k + |S|$. Also, the optimal value is at most $\lambda/m$. Consider the
fixed tour consisting of $k$ identical $r$-tours each covering $U$: since each scenario has at most $k$ demands, this
solution never uses a recourse tour, and has cost $k\cdot (|U|/k+|U|)<\lambda/m$. So we may assume that {\em the
optimal solution has no recourse tour}: if the recourse tour is non-empty in any scenario then its cost is at least
$\lambda/m$.

\begin{figure}
\begin{center}\includegraphics[scale=0.5]{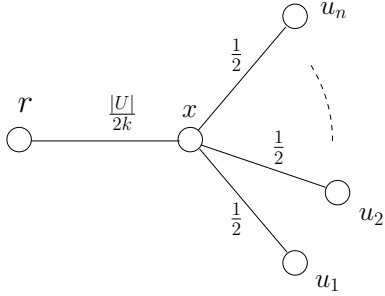}
\caption{Tree of distances realizing metric $d$, with intermediate point $x$ and $V = \{r, u_1, \dots, u_n\}$.}
\label{fig-metric}
\end{center}
\end{figure}

\subsection{ Analysis}
We now give the analysis.

\medskip
\noindent
{\em YES Case.} Suppose that $G(U, E)$ is a YES instance of Theorem \ref{thm-BK10} with $X, U_1, \dots, U_k$ as the
partition of $U$ with the properties as stated in the theorem. Consider the $r$-tours $\tau_1, \dots,$ $\tau_k$, where
$\tau_i$ is an $r$-tour that covers points $X\cup U_i$ (in addition to $r$). Since every scenario in our instance of
\svrp corresponds to a hyperedge in $G$, using the property in the YES case that each hyperedge has at most one vertex
from each $U_i$, we see that the $r$-tours $\tau_1,\dots, \tau_k$ satisfy all the scenarios. As noted earlier the cost
of each $r$-tour that covers $S\subseteq V\setminus \{r\}$ is $|U|/k + |S|$. Therefore the total cost of the $k$
$r$-tours $\tau_1, \dots, \tau_k$ is,
$$k\cdot(|U|/k) + \sum_{i=1}^k|X\cup U_i|\ \leq\ |U|
+ (1 + k\varepsilon)|U|\ =\ (2 + k\varepsilon)|U|,$$
by the properties of the partition $X, U_1,\dots, U_k$ of $U$.

\medskip
\noindent
{\em NO Case.} Suppose that $G(U, E)$ is a NO instance of Theorem \ref{thm-BK10}, so that the maximum independent set
in $G$ is of size at most $\varepsilon |U|$. In this case we shall prove that the total cost of any set of $r$-tours
that satisfy all scenarios is at least $k(1 - f_k(\varepsilon))|U|$, where $f_k(\varepsilon) \to 0$ as $\varepsilon \to
0$ for any fixed positive integer $k \geq 2$. We may assume that the number of $r$-tours in the optimal solution is at
most $k^2$, otherwise the total cost will be at least $k^2(|U|/k) = k|U|$ and we shall be done. Therefore, let
$\gamma_1, \dots, \gamma_T$ be the $r$-tours in an optimal fixed tour, where  $T \leq k^2$. We shall estimate the
number of points in $U$ which occur in at most $k-1$ of these $r$-tours. For any subset $I \subseteq [T]$, let $U(I)
\subseteq U$ be the points which {\em do not} occur in $\{\gamma_i: i \in [T]\setminus I\}$. We have the following
simple lemma.
\begin{lemma} For any $I \subseteq [T]$ with $|I| = k -1$, $U(I)$
is an independent set in $G$.
\end{lemma}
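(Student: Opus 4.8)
The plan is to argue by contradiction: suppose $U(I)$ is not an independent set, so it contains a hyperedge $e\in E$ entirely. I would then exhibit a scenario (namely the scenario corresponding to $e$) that is not satisfied by the fixed tour $\{\gamma_1,\dots,\gamma_T\}$, contradicting the assumed feasibility of the optimal solution. The key observation is the capacity constraint: since $Q=1$ and all demands are $\{0,1\}$, each $r$-tour can serve the demand of \emph{at most one} point in any given scenario. So to satisfy a scenario with $k$ unit demands, we need at least $k$ distinct $r$-tours, each covering (and serving) a distinct one of the $k$ demand points.

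First I would unpack the definition of $U(I)$: the points in $U(I)$ are exactly those that appear in \emph{none} of the $r$-tours indexed by $[T]\setminus I$, hence (if they appear at all) only in the $r$-tours indexed by $I$, of which there are $|I| = k-1$. Now take any hyperedge $e\in E$; it has exactly $k$ vertices since $G$ is $k$-uniform. If $e\subseteq U(I)$, then in the scenario corresponding to $e$, all $k$ demand points lie among the at most $k-1$ $r$-tours $\{\gamma_i : i\in I\}$. By the capacity argument above, these $k-1$ $r$-tours can serve at most $k-1$ of the $k$ demands, so at least one demand of scenario $e$ is unserved by the fixed tour. But we argued earlier (using the choice $\lambda = 2m|U|(k+1)$) that the optimal solution uses \emph{no} recourse tour, so every scenario must be fully served by the fixed tour — contradiction. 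Hence no hyperedge is contained in $U(I)$, i.e. $U(I)$ is independent in $G$.

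The main thing to be careful about is the interplay between ``a point appears in an $r$-tour'' and ``the demand at that point is served by that $r$-tour'': a point $v$ may appear in several $r$-tours, but its demand is served along only one of them, and that $r$-tour can serve only one unit-demand point total. The argument only needs the weaker fact that to serve all $k$ demands of scenario $e$ one needs $k$ $r$-tours each \emph{containing} a distinct demand point, which is immediate once $v\in U(I)$ forces those $r$-tours to come from the index set $I$ of size $k-1$. I do not expect any genuine obstacle here; it is a short pigeonhole argument, and the only subtlety is invoking the earlier ``no recourse tour'' reduction correctly.
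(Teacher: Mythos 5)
Your proof is correct and follows essentially the same contradiction-and-pigeonhole argument as the paper: a hyperedge $e\subseteq U(I)$ yields a scenario whose $k$ unit demands can appear in only the $k-1$ $r$-tours indexed by $I$, which (since $Q=1$) cannot serve them all, contradicting the earlier observation that the optimum uses no recourse tour. Your extra care in distinguishing ``appears in an $r$-tour'' from ``is served by that $r$-tour'' is a nice clarification of a point the paper's proof states only tersely.
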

\begin{proof} For a contradiction, suppose that $e$ is a hyperedge
induced by $U(I)$. Since $|e| = k$, the scenario corresponding to $e$ will not be satisfied by our solution as the $k$
vertices of $e$ appear (as points) in at most $k-1$ of the $r$-tours, namely those given by $I \subseteq [T]$. Recall
that each $r$-tour can serve only one demand.
\end{proof} The total number of points in $U$ that appear in at most $k-1$ of the $r$-tours is upper bounded by,
$$\sum_{I\subseteq [T], |I| = k-1} |U(I)|.$$
There are ${T\choose k-1} \leq 2^T \leq 2^{k^2}$ choices for the subsets $I$ in the above expression. Using the fact
that any independent set in $G$ has size at most $\varepsilon |U|$, the fraction of points in $U$ that occur in at most
$k-1$ of the $r$-tours is at most $\varepsilon2^{k^2} =: f_k(\varepsilon)$. Each of the remaining
$(1-f_k(\varepsilon))|U|$ points appears in at least $k$ of the $r$-tours; so the total cost of the fixed tour is $k(1 -
f_k(\varepsilon))|U|$.

\medskip
\noindent {\bf Hardness Factor.} In the YES case there is a solution of cost at most $(2 + k\varepsilon)|U|$, whereas
in the NO case any solution has cost at least $k(1 - f_k(\varepsilon))|U|$. For any positive integer $k \geq 2$ and
arbitrarily small $\delta > 0$, choosing $\varepsilon > 0$ to be small enough in Theorem \ref{thm-BK10}, we obtain a
hardness factor of $k/2 - \delta$.

\section{Independent Demand Distributions}\label{app:indep} In this section we give an
$O(\log (n\lambda)/\llog (n\lambda))$-approximation for \svrp under independent distributions $\D$. That is, the demand
$\oq_v$ at each vertex $v$ is independent of all other vertices $V\setminus\{v\}$. The main idea is to show the
existence of a near-optimal solution that partitions vertices into two disjoint sets $D_1$ and $D_2$ such that:
vertices $D_1$ are served by the fixed tour w.h.p., and  vertices $D_2$ are served in the recourse tour.  This step
(Lemma~\ref{lem:indep-struc}) uses independence. Then we show how an LP-based approach (combined with sampling) yields
a constant approximation to the problem of choosing the best partition $(D_1,D_2)$. For any $v\in V$ let
$\mu_v:=\Ex[\oq_v]$ the expected demand at $v$; note $\max_{v\in V}\mu_v\le Q$ the vehicle capacity. We assume by
scaling that the optimal value $\OPT\ge 1$.

\begin{lemma}\label{lem:indep-struc}
Given any instance of \svrp with independent demands, there exists partition $D_1\cup D_2=V$ and \svrp solution with
fixed tour $\tau$ such that:
\begin{OneLiners}
\item The total expected demand in each $r$-tour of $\tau$ is at most $Q$.
\item The length of $\tau$ is $O(\log (n\lambda)/\llog (n\lambda))\cdot \OPT$.
\item $\tau$ does not visit any $D_2$-vertex; i.e. each $u\in D_2$ is served in recourse tour.
\item Each $v\in D_1$ is served by $\tau$ with probability at least $1-1/(n\lambda)^4$.
\item The recourse cost is at most $\OPT+1$.
\end{OneLiners}
\end{lemma}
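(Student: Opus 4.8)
The plan is to produce the structured solution directly from an optimal solution \OPT: write its fixed tour as $\tau^*=\{\tau^*_1,\dots,\tau^*_F\}$ (with $F\le n$ by Claim~\ref{cl:number-base-tour}) and its recourse on realization $\oq$ as $\sigma^*(\oq)$. For each $v$ set $\mu_v=\Ex[\oq_v]$ and $\hat\mu_v=\Ex\!\big[\oq_v\cdot\mathbf 1[v\text{ is served by }\tau^*]\big]\le\mu_v$, and for $v\in V(\tau^*_j)$ set $z_{v,j}=\Ex\!\big[\oq_v\cdot\mathbf 1[v\text{ is served by }\tau^*_j]\big]$; then $\sum_j z_{v,j}=\hat\mu_v$, $\sum_v z_{v,j}\le Q$ for each $j$, and $z_{v,j}>0\Rightarrow v\in V(\tau^*_j)$. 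I would take $D_1=\{v:\mu_v\le 2\hat\mu_v\}$ and $D_2=V\setminus D_1$ (vertices with $\mu_v=0$ go to $D_2$). Thus $v\in D_1\Rightarrow\hat\mu_v>0\Rightarrow v\in V(\tau^*)$, while $v\in D_2\Rightarrow\Ex[\oq_v\cdot\mathbf 1[v\text{ served by }\sigma^*]]=\mu_v-\hat\mu_v>\mu_v/2$.

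The core step is building $\tau$. First I would repack $D_1$ onto $\tau^*$ with only a constant capacity blow-up: the fractional assignment $x_{v,j}=z_{v,j}/\hat\mu_v$ satisfies $\sum_j x_{v,j}=1$ and has load $\sum_v\mu_v x_{v,j}\le 2\sum_v z_{v,j}\le 2Q$ — this is exactly where $\mu_v\le 2\hat\mu_v$ on $D_1$ is used — so the classical bipartite rounding of~\cite{LST90} yields integral $\phi:D_1\to[F]$ with $v\in V(\tau^*_{\phi(v)})$ and $\sum_{v\in\phi^{-1}(j)}\mu_v\le 2Q+\max_v\mu_v\le 3Q$. Greedily split each $\phi^{-1}(j)$ into at most six blocks of $\mu$-weight $\le Q$; for a block $C\subseteq D_1$ let $\gamma_C$ shortcut $\tau^*_j$ onto $C\cup\{r\}$, so $d(\gamma_C)\le d(\tau^*_j)$. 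Let $\tau$ be $K$ copies of each $\gamma_C$, with $K=\Theta\!\big(\log(n\lambda)/\llog(n\lambda)\big)$. Then every $r$-tour of $\tau$ has expected demand $\le Q$; $\tau$ meets only blocks $C\subseteq D_1$, hence avoids $D_2$; and $d(\tau)\le 6K\sum_j d(\tau^*_j)=6K\,d(\tau^*)\le 6K\cdot\OPT$, which gives the three ``deterministic'' bullets.

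The service bullet is where independence enters. Fix $v\in D_1$ in block $C$; under a realization serve $\{\oq_u:u\in C\}$ greedily (first fit) across the $K$ copies of $\gamma_C$. Since $\oq(C)=\sum_{u\in C}\oq_u$ is a sum of independent $[0,Q]$-valued variables of mean $\le Q$, a Chernoff bound gives $\Pr[\oq(C)>\tfrac K4 Q]\le e^{-\Omega(K\log K)}\le (n\lambda)^{-4}$; the point is that $K=\Theta(\log(n\lambda)/\llog(n\lambda))$ already makes $t\log t\ge 4\log(n\lambda)$ with $t=K/4$. When $\oq(C)\le\tfrac K4 Q$, first fit packs everything into the $K$ bins of size $Q$, so $v$ is served. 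I expect this concentration argument — together with the realization that $D_1=\{\mu_v\le 2\hat\mu_v\}$ is precisely what keeps the capacity-respecting repacking of $\tau^*$ to a constant-factor length overhead before amplification — to be the main obstacle; the rest is bookkeeping.

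For the recourse bullet, the recourse tour serves all active $D_2$-vertices plus the rare active $D_1$-vertices missed by $\tau$. The latter cost at most $\lambda\sum_{v\in D_1}(n\lambda)^{-4}\cdot 2d(r,v)=O\!\big(D/(n^3\lambda^3)\big)\le 1$ in expectation, since (as in Section~\ref{sec:black-box}) we may assume $D=\mathrm{poly}(n)$ and $\OPT\ge 1$; this is the ``$+1$''. For $D_2$ I would invoke a constant-factor VRP algorithm and the bounds $\mathrm{VRP}(S)=\Theta\big(\mst(S)+\tfrac1Q\sum_{v\in S}\oq_v d(r,v)\big)$: the flow term is $O(\lambda/Q)\sum_{v\in D_2}\mu_v d(r,v)$, which is $O(\OPT)$ because $\sum_{v\in D_2}\mu_v d(r,v)<2\sum_{v\in D_2}(\mu_v-\hat\mu_v)d(r,v)\le 2\Ex\!\big[\sum_{v\in\sigma^*(\oq)}\oq_v d(r,v)\big]$, which is at most $(2Q/\lambda)$ times the recourse cost of \OPT; the MST term is handled by the same comparison after pushing the vertices with $\Pr[\oq_v\ge1]\le(n\lambda)^{-10}$ into the ``$+1$'' slack. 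This yields an $O(\OPT)+1$ recourse cost, as required (the absolute constant in front of $\OPT$ being harmless for the overall bound).
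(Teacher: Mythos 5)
Your proof follows the same high-level plan as the paper's — derive the partition from an optimal fixed tour $\tau^*$, build $\tau$ from $\Theta(\log(n\lambda)/\llog(n\lambda))$ copies of a constant-factor structure, and use a Chernoff bound plus independence for the service-probability bullet — but the definition of the partition $(D_1,D_2)$ is different, and that difference creates a genuine gap in the recourse-cost bullet.

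The paper simply takes $D_1$ to be the set of vertices that $\tau^*$ visits, so $D_2\cap V(\tau^*)=\emptyset$. Then in \emph{every} scenario $\oq$, every active $D_2$-vertex is necessarily served by the optimal recourse $\sigma^*(\oq)$, hence the set of active $D_2$-vertices is a subset of $V(\sigma^*(\oq))$ and $\mathrm{VRP}(D_2\cap\{v:\oq_v\ge1\})\le d(\sigma^*(\oq))$ pointwise in $\oq$; the $\lambda$-scaled recourse cost on $D_2$ is then at most $\OPT$ immediately. It also builds the fixed tour from a \emph{fresh} deterministic VRP on demands $\{\mu_v\}_{v\in D_1}$, bounding its length via $\mst(D_1)\le d(\tau^*)\le\OPT$ and $\flow(D_1)\le\OPT$; no LST90 repacking is needed.

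Your choice $D_1=\{v:\mu_v\le 2\hat\mu_v\}$ allows vertices that $\tau^*$ visits, and frequently serves, to land in $D_2$. For such a vertex, $\sigma^*$ need not visit it in every scenario where it is active, so the active-$D_2$ set is not contained in $V(\sigma^*(\oq))$ and the direct comparison fails. Your bound for the flow term is fine: $\flow$ is a per-vertex sum of $\oq_v\, d(r,v)/Q$, so the demand-weighted inequality $\mu_v<2(\mu_v-\hat\mu_v)$ transfers term by term. But the MST term depends on $\Pr[\oq_v\ge1]$, not on $\Ex[\oq_v]$, and under your definition the ratio $\Pr[\oq_v\ge1]/\Pr[v\text{ served by }\sigma^*]$ can be as large as $\Theta(Q)$. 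For instance, take $\oq_v\in\{0,1,Q\}$ where $\tau^*$ always serves $v$ when $\oq_v=1$ and $\sigma^*$ sometimes serves it when $\oq_v=Q$: the demand-weighted mass $\mu_v-\hat\mu_v=Q\,\Pr[\oq_v=Q,\,v\text{ served by }\sigma^*]$ can exceed $\mu_v/2$ even when the event $\{v\text{ served by }\sigma^*\}$ has probability a $\Theta(1/Q)$-fraction of $\Pr[\oq_v\ge1]$. The vertices you propose to push into the ``$+1$'' slack are those with tiny activation probability, but the problematic vertices above need not have small $\Pr[\oq_v\ge1]$. So ``the same comparison'' is not established for the MST term, and the missing constant would have to grow with $Q$. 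This is precisely the hole that the paper's definition $D_1=V(\tau^*)$ is designed to avoid, and its side benefit is a much simpler proof of the length bullet as well.
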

\begin{proof}
Consider an optimal fixed tour $\tau^*$. Let $D_1\sse V$ denote the vertices visited at least once in $\tau^*$; note
that each vertex might be visited multiple times. Clearly the minimum spanning tree on vertices $D_1\cup\{r\}$,
$\mst(D_1)\le d(\tau^*)\le \OPT$. Using the ``flow lower bound'' in VRP~\cite{HK85} it is also clear that:
$$\OPT\,\, \ge \,\, \frac1Q \sum_{v\in V} d(r,v)\cdot \mu_v \,\, \ge \,\, \frac1Q \sum_{v\in D_1} d(r,v)\cdot \mu_v \,\, = \,\, \flow(D_1)$$
Recall that each $\mu_v\le Q$. Thus if we consider a deterministic VRP instance with demands $\{\mu_v : v\in D_1\}$ and
capacity $Q$, then it has a solution $\tau'$ of length at most $O(1)\cdot (\mst(D_1) + \flow(D_1))$
by~\cite{HK85,AG87}. From the above, we have $d(\tau')\le O(1)\cdot \OPT$. Let $\tau'_1,\ldots,\tau'_t$ denote the
$r$-tours in $\tau'$, each having total $\mu$-value at most $Q$. We define the fixed tour $\tau$ to consist of $\beta:=
c\cdot \frac{\log (n\lambda)}{\llog (n\lambda)}$ {\em copies} of $\tau'$, where $c$ is a large enough constant. The
first three properties of $\tau$ are immediate. For any $r$-tour $\tau'_i$, if the total instantiated demand
$\sum_{v\in\tau'_i}\oq_v\le \beta\cdot Q$ then all these demands can be served by $\tau$ since it contains $\beta$
copies of $\tau'_i$. Thus the probability that some $v\in D_1$ (say with $v\in \tau'_i$) is not served by $\tau$ is:
$$\Pr[v\mbox{ not covered by }\tau] \quad \le \quad \Pr\left[ \sum_{v\in\tau'_i}\oq_v > \beta\cdot Q\right] \quad \le \quad \frac1{(n\lambda)^4},$$
by a Chernoff bound~\cite{MR95} using the fact that
$\beta=\Theta\left(\frac{\log (n\lambda)}{\llog (n\lambda)}\right)$. This proves the fourth property. For the final
property, note that vertices $D_2$ that are never served by the optimal fixed tour $\tau^*$. So the expected VRP value
on $D_2$ (scaled by $\lambda$) is at most $\OPT$. This is the recourse cost that our solution corresponding to $\tau$
pays for $D_2$. In addition, some $D_1$-vertices may be uncovered in $\tau$ and the recourse cost due to these is at
most:
$$\sum_{v\in D_1} \lambda\cdot 2d(r,v)\cdot \Pr[v\mbox{ not covered by }\tau] \quad \le \quad \frac{2n\lambda
D}{(n\lambda)^4}\quad \le \quad 1,$$
where we used the fact that diameter $D=O(n^3)$ from
Subsection~\ref{app:redn-explicit}. So the total expected recourse cost is at most $\OPT+1$ as claimed.
\end{proof}

We now find an approximately optimal solution to independent \svrp that has the above structure. We write an IP
formulation to capture the partition $(D_1,D_2)$. For $v\in V$ let $x_v\in\{0,1\}$ denote the indicator that $v\in
D_1$. By Lemma~\ref{lem:indep-struc} the fixed tour $\tau$ corresponds to a deterministic VRP solution with demands
$\{\mu_v\cdot x_v : v\in V\}$. Using MST and flow bounds (as in Section~\ref{sec:black-box}) we can express this
(losing a constant factor) via linear constraints in $x$.

We also need to write the expected VRP value (scaled by $\lambda$) due to demands $D_2$. This involves the expected VRP
value (equivalently $\Ex \mst+\Ex \flow$) of the random instance where each $v\in V$ has an independent demand of
$(1-x_v)\cdot \oq_v$; recall that $\oq_v$ denotes $v$'s demand in the original \svrp instance. The expectation
$\Ex\flow$ is just $\frac\lambda{Q} \sum_{v\in V} d(r,v)\cdot \mu_v\cdot (1-x_v)$. Unfortunately it is not clear if one
can write linear constraints (in $x$) for the expectation of $\mst$: this involves the expected MST value when each
$v\in V$ is present independently with probability $(1-x_v)\cdot \Pr[\oq_v>0]$. Instead we show that sampling  can be
used to estimate $\Ex\mst$ within small error, and that the sample expectation can be expressed via linear constraints
in $x$.

For any $x\in \{0,1\}^V$ define $T(x) := \Ex[\mst(S_x)]$ where $S_x$ contains each vertex $v\in V$ independently w.p.
$(1-x_v)\cdot p_v$ where $p_v:=\Pr[\oq_v>0]$. We now use the result of~\cite{CCP05} as in Theorem~\ref{thm:sampling}.
We make $m=poly(n,\lambda)$ independent samples $S^1,\ldots,S^m\sse V$ according to $\{p_v\}_{v\in V}$ and set
$$\hat{T}(x):= \frac1m\sum_{i=1}^m \mst\left(\{v\in S^i : x_v=0\}\right).$$
Then we have $|T(x)-\hat{T}(x)|\le 1$ for all $x\in \{0,1\}^V$ with probability $1-o(1)$.

We now write the following integer program for finding partition $(D_1,D_2)$.
\begin{alignat*}{10} &\min \,\,
\sum_{e\in E} d_e\cdot z_e \,  +  \, \sum_{v\in V} \frac{d(r,v) \,\mu_v}{Q} \cdot x_v  \, +  \, \sum_{v\in V}
\frac{\lambda\, d(r,v)\, \mu_v} {Q}\cdot (1-x_v)\,  +  \,
\frac{\lambda}{m}  \sum_{i=1}^m \sum_{e\in E} z_{e}^{i} \cdot d_e \notag \\
&\mbox{s.t.}\quad \sum_{e\in \delta(R)} z_e \ge x_v \quad \quad \quad \forall R\sse V\setminus \{r\},\,\, \forall v\in
R,\\
&\quad \sum_{e\in \delta(R)} z_{e}^{i} \ge 1-x_v \quad \quad \quad \forall i\in[m],\,\, \forall R\sse V\setminus \{r\},
\,\, \forall v\in R\cap S^i, \\
&\quad x_v \in \{0,1\} \qquad \qquad  \qquad  \forall v\in V,\notag \\
&\quad z_e,\, z_{e}^{i}\ge 0 \qquad \qquad \qquad  \quad \forall e\in E, \forall i \in [m].\notag
\end{alignat*}
The last term in the objective captures $\hat{T}(x)$. Based on the preceding discussion, w.h.p. this integer program
expresses the objective of {\em all} partitions $(D_1,D_2)$ up to a constant factor. Relaxing the integrality on $x$ we
obtain an LP relaxation that can be solved in polynomial time. The rounding algorithm simply chooses $D_1=\{v\in V :
x_v>\frac12\}$ and $D_2=V\setminus D_1$. We output the fixed tour $\tau$ to be $O(\log (n\lambda)/\llog (n\lambda))$
copies of an approximate VRP on demands $\{\mu_v: v\in D_1\}$. The recourse step involves greedily satisfying the
instantiated demands on $\tau$, and then computing an approximate VRP solution on the residual demands. Using
Lemma~\ref{lem:indep-struc} it is easy to show that this achieves an $O(\log (n\lambda)/\llog
(n\lambda))$-approximation, i.e. Theorem~\ref{thm:indep}.

\bibliographystyle{plain}
\bibliography{2-stg-vrp}

\end{document}